\title{Tight Correlation Bounds for Circuits Between AC0 and TC0}
\author{ Vinayak M. Kumar\thanks{Department of Computer Science, University of Texas at Austin. Email: {\tt vmkumar@utexas.edu}. Supported by NSF Grant CCF-2008076 and a Simons Investigator Award (\#409864, David Zuckerman). Part of this work was done while visiting Harvard University.}}
\date{}
\newcommand{\corr}{\func{corr}}
\newcommand{\dt}{\func{DT}}
\newcommand{\OR}{\func{OR}}
\newcommand{\tc}{\func{TC}}
\newcommand{\AND}{\func{AND}}
\newcommand{\NOT}{\func{NOT}}
\newcommand{\maj}{\func{MAJ}}
\newcommand{\any}{\func{ANY}}
\newcommand{\thr}{\func{THR}}
\newcommand{\ltf}{\func{LTF}}
\newcommand{\nc}{\func{NC}}
\newcommand{\gip}{\func{GIP}}
\newcommand{\rw}{\func{RW}}
\newcommand{\pargate}{\func{PAR}}
\newcommand{\gt}{\func{GC}}
\newcommand{\g}{\func{G}}
\newcommand{\ac}{\func{AC}^0}
\newcommand{\sgn}{\text{sgn}}
\newcommand{\ctr}{\text{ctr}}
\newcommand{\bits}{\{0,1\}}
\newcommand{\calF}{{\cal F}}
\newcommand{\calS}{{\cal S}}
\newcommand{\calC}{{\cal C}}
\newcommand{\calH}{{\cal H}}
\newcommand{\calE}{{\cal E}}
\renewcommand{\R}{\mathbb{R}}
\renewcommand{\E}{\mathbb{E}}
\newcommand{\N}{\mathbb{N}}
\newcommand{\eps}{\varepsilon}
\newcommand{\ceil}[1]{\lceil #1 \rceil}
\newtheorem{thm}{Theorem}[section]
\newtheorem*{remark}{Remark}
\newtheorem{cor}[thm]{Corollary}
\newtheorem{lem}[thm]{Lemma}
\newtheorem{defn}[thm]{Definition}
\newtheorem{claim}[thm]{Claim}
\begin{document}
\maketitle

\begin{abstract}

We initiate the study of generalized $\ac$ circuits comprised of arbitrary unbounded fan-in gates which only need to be constant over inputs of Hamming weight $\ge k$ (up to negations of the input bits), which we denote $\gt^0(k)$. The gate set of this class includes biased LTFs like the $k$-$\OR$ (outputs $1$ iff $\ge k$ bits are $1$) and $k$-$\AND$ (outputs $0$ iff $\ge k$ bits are $0$), and thus can be seen as an interpolation between $\ac$ and $\tc^0$. \ 

We establish a tight multi-switching lemma for $\gt^0(k)$ circuits, which bounds the probability that several depth-2 $\gt^0(k)$ circuits do not simultaneously simplify under a random restriction. We also establish a new depth reduction lemma such that coupled with our multi-switching lemma, we can show many results obtained from the multi-switching lemma for depth-$d$ size-$s$ $\ac$ circuits lifts to depth-$d$ size-$s^{.99}$ $\gt^0(.01\log s)$ circuits
with \emph{no loss in parameters} (other than hidden constants). 
Our result has the following applications:
\begin{itemize}
    \item Size-$2^{\Omega(n^{1/d})}$ depth-$d$ $\gt^0(\Omega(n^{1/d}))$ circuits do not correlate with parity (extending a result of H\aa stad (SICOMP, 2014)).
    \item Size-$n^{\Omega(\log n)}$ $\gt^0(\Omega(\log^2 n))$ circuits with $n^{.249}$ arbitrary threshold gates or $n^{.499}$ arbitrary symmetric gates exhibit exponentially small correlation against an explicit function (extending a result of Tan and Servedio (RANDOM, 2019)).
    \item  There is a seed length $O(\log^{d-1} m\log(m/\eps)\log\log(m))$ pseudorandom generator 
    against size-$m$ depth-$d$ $\gt^0(\log m)$ circuits, matching the $\ac$ lower bound of H\aa stad up to a $\log\log m$ factor (extending a result of Lyu (CCC, 2022)).
    \item Size-$m$ $\gt^0(\log m)$ circuits have exponentially small Fourier tails (extending a result of Tal (CCC, 2017)).
\end{itemize}
\clearpage 

\tableofcontents

\end{abstract}

\clearpage

\section{Introduction}
Proving superpolynomial circuit lower bounds against explicit functions is one of the most central questions in complexity theory. However, after the initial flurry of work resulting in Blum's lower bound of $3n-o(n)$  \cite{blum84}, followed by a recent revival 30 years later leading to the state of the art $3.1n-o(n)$ size lower bound by Li and Yang \cite{ly22}, this problem has proven to be extremely difficult. Furthermore, there are various proof barriers that give strong evidence that our current intuition is not developed enough to tackle this problem \cite{bjr75,rr94,aw09}.\ 

In order to gain more understanding on this problem, researchers considered circuits with constant depth whose gates are $\AND,\OR$, or $\func{NOT}$ with unbounded fanin. To this end there has been a fruitful line of work culminating in the state of the art average case hardness of depth $d$ size $2^{\Omega(n^{\frac{1}{d-1}})}$ $\ac$ circuits computing the parity and majority functions \cite{Hastad, imp12} (with this result being tight for parity). A natural followup question to ask is how powerful $\ac$ would then be if $\oplus$ (parity) or $\func{MAJ}$ (majority) gates were added, corresponding to the circuit classes $\func{AC}^0[\oplus]$ and $\func{TC}^0$. With regard to $\func{AC}^0[\oplus]$, Oliveira, Santhanam, and Srinivasan \cite{oss19} proved that it is average case hard for any 
size-$2^{\Omega(n^{1/(2d-4)})}$ $\ac_d[\oplus]$ circuit to compute $\func{MAJ}$, improving earlier work by Razborov \cite{R87}. 
Smolensky \cite{smo87} proved exponential size lower bounds even if one replaces the $\oplus$ gate with a $\func{MOD}_p$ gate for prime $p$ ($\func{MOD}_p$ is the gate that outputs 0 iff $p$ divides the sum of the input bits
). \ 

As we see, $\func{MAJ}$ is a hard function demonstrating exponential circuit lower bounds for almost all the circuit classes mentioned thus far, and so we would guess $\func{TC}^0$ is extremely powerful and thus challenging to show circuit lower bounds for. This is evident in the current state of the art for $\func{TC}^0$, which is stark in contrast with the landscape of $\func{AC}^0[\oplus]$. In 1993, Impagliazzo, Paturi, and Saks \cite{ips93} showed that parity is hard for depth-$d$ size-$\Omega(n^{1+\eps_{IPS}^{-d}})$ circuits for some constant $C_{IPS}>1$ 
, which remains as the current state of the art modulo the case of $d=2$, where Kane and Williams established a $n^{2.49}$-size lower bound \cite{kw10}. In fact, a bootstrapping result by Chen and Tell \cite{ct19} shows that if one slightly improves (e.g. decreases $C_{IPS}$) this superlinear lower bound against certain $\func{NC}^1$-complete problems ($\nc^1$ is the class of $O(\log n)$-depth, polysized, constant fan-in circuits), we would immediately get superpolynomial lower bounds and a separation of $\tc^0$ and $\func{NC}^1$, attesting to the hardness of this task.\

 Due to the halted state of affairs for $\tc^0$ circuits
 , we study a circuit class not as strong as $\func{TC}^0$, but still captures the motivation of analyzing ``$\ac$ with the power of majority.'' After it had been shown $\ac$ circuits cannot efficiently compute the majority of $n$ bits, it seemed natural that the next step would be to add unbounded $\maj$ gates to $\ac$ to create $\tc^0$. However, due to having unbounded fan-in, $\tc^0$ gives a size-$s$ circuit the power to calculate the majority of up to $s$ bits. Hence, one could argue the reason why size $s$ $\tc^0$ circuits are much harder to analyze than $\ac$ is because they are getting much more power than simply calculating the majority of $n$ bits when $s\gg n$. In order to maintain the unbounded fan-in property of the circuit but also ration the computational power we give $\ac$ to be ``just sufficient'' to compute the majority of $n$ bits, one can consider the following circuit class.
 
 \begin{defn}[$\ac(k)$ Circuits]
     Define the unbounded fan-in gates $k$-$\OR$ to output 1 iff there are at least $k$ ones in the input string, and $k$-$\AND$ to output 0 iff there are at least $k$ zeros in the input string. Define the class of constant depth circuits created by negations and $\{k'$-$\AND,k'$-$\OR\}$s for $k'\le k$ to be $\ac(k)$.
 \end{defn}
 One can observe that $\ac(n/2)$ is a natural circuit class that contains the majority of $n$ bits and doesn't add ``extra power'' 
 like the majority of a much larger quantity of bits. Therefore, analyzing $\ac(n/2)$ will give us a better understanding on how much power majority gives to circuits. More generally, $\ac(k)$ also allows us to nicely interpolate between $\ac$ and $\tc^0$, since a size-$s$ $\ac$ circuit is characterized by $\ac(1)$, while a size-$s$ $\tc^0$ circuit is characterized by $\ac(s/2)$. Hence, 
 studying $\ac(k)$ for increasing $k$ is a necessary step and a compelling intermediary model that can help us understand the power of $\tc^0$.

For how large of a $k$ will $\ac(k)$ trivially collapse to $\ac$? An immediate observation is that $\ac(k)$ contains the majority gate over $2k$ bits, for which we know $2^{\Omega(k^{1/2d})}$-size $\ac$ lower bounds. Hence, for $k=\polylog(n)$, we have a superpolynomial size seperation between $\ac$ and $\ac(k)$. Even for any $k=\omega(1)$, it is unknown whether $\ac(k)$ is equivalent to $\ac$. A standard argument would be to represent a $k$-$\OR$ with fan-in $m$ as a width-$k$ DNF with $\binom{m}k$ clauses (check over all size $k$ subsets of input bits to see if some subset are all $1$s) or a width-$(m-k)$ CNF with $\binom{m}k$ clauses (check over all size $m-k$ subsets of input bits to see if all subsets contain some $1$). Therefore, if we have a size $s$ circuit made from $k$-$\OR$ and $k$-$\AND$ gates, we can turn this into an $\ac$ circuit with size $s\cdot \binom{s}k\approx s^k$ (since a gate in the original circuit can have fan in size up to $s$) and depth $d+1$ (one can naively get depth $2d$, but by alternating CNFs and DNFs, we can collapse the depth to $d+1$). Hence, we see that a size $s$ lower bound for $\ac_d$ translates to a size $s^{1/k}$ lower bound for $\ac_{d-1}(k)$. This reduction is not an equivalence, as we pay with a reduction in depth, as well as an asymptotically weaker size lower bound for any $k = \omega(1)$. For example, a polynomial size bound for $\ac$ cannot be converted to a polynomial size lower bound for $\ac(k)$ for any superconstant $k$. Consequently, the relationship between $\ac$ and $\ac(k)$ already becomes nontrivial in the mild regime of $k=\omega(1)$.\

In this paper, we study an \emph{even more} general class of circuits, which we denote as $\gt^0(k)$. 

\begin{defn}[$\g(k)$ gates/$\gt^0(k)$ circuits] Let $\g(k)$ be the set of all unbounded fan-in gates that are constant over all input bits with $\ge k$ ones, or over all input bits with $\ge k$ zeros (notice for $k\ge 1$ this includes negations by definition). Define $\gt^0(k)$ to be the class of constant depth circuits created from $\g(k)$ gates.
\end{defn}

Some concrete examples of $\g(k)$ gates are arbitrary gates of fan-in $k$, majority of $2k$ bits, the $k$-$\OR$, and functions that compute parity if the input has $<k$ ones, and is $0$ otherwise. Notice that this is indeed a generalization of $\ac(k)$.

On top of being an alternative generalization of $\AND/\OR$ gates which may be of independent interest, one nice property about $\g(k)$ is that it includes a generalized notion of $k$-$\AND$ and $k$-$\OR$ gates to arbitrary $\ltf$s (functions of the form $\text{sgn}(\sum w_i x_i - \theta)$).

\begin{defn}[$k$-balanced $\ltf$s]
\label{defn:bal}
    Let $f(x) = \sgn(\sum_{i=1}^n w_i x_i - \theta)$ be an $\ltf$, and let $\sigma: [n]\to[n]$ be a permutation sorting the $w_i$ in increasing magnitude (i.e. $|w_{\sigma(1)}|\le \dots\le |w_{\sigma(n)}|$). We say $f$ is \emph{$k$-balanced} if $k$ is the smallest index $j$ such that $-\sum_{i\le j}|w_{\sigma(i)}|+\sum_{i > j}|w_{\sigma(i)}| < |\theta|$.
\end{defn}
One can verify that $k$-balanced $\ltf$s are indeed in $\g(k)$ (see \Cref{thm:thrgk}). Therefore, our results can also be seen as a study of arbitrary $\ltf$s that are biased.\

Various notions of \emph{balancedness} (or \emph{regularity} in some literature) for $\ltf$s has been defined in previous work about threshold functions \cite{Ser06, pot19, hhtt21}, but are all distinct from the combinatorial definition we have proposed. In light of being able to show lower bounds for this characterization of \emph{balanced}, it may be of interest to explore this class of balanced $\ltf$s in other contexts regarding $\ltf$ circuit complexity. \

\subsection{Our Results}

We outline all the results we obtain regarding $\ac(k)$ (or more generally $\gt^0(k)$) circuits. The core result from which all the other results are derived from is an optimal multi-switching lemma for $\gt^0(k)$ circuits. We state the result without getting into the fine-grained definitions.

\begin{thm}[Multi-Switching Lemma for $\gt{(k)}$ Circuits (Informal)]
\label{thm:informalmulti}
Let ${\cal F} = \{F_1,\dots, F_m\}$ be a list of 
$\g(k)\circ \AND_w$ circuits on $\{0,1\}^n$. Then \[ \Pr_{\rho\sim R_p}[{\cal F}|_\rho\text{ do not all simultaneously ``simplify''}]\le (2^km)^{t/r}(O(pw))^t \]
\end{thm}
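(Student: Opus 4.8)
The plan is to reduce the multi-switching lemma for $\gt^0(k)$ circuits to the classical multi-switching lemma for $\ac$ (i.e.\ for lists of small-width DNFs/CNFs), paying only the extra $2^k$ factor per circuit in the process. The starting point is the CNF/DNF simulation of $\g(k)$ gates sketched in the introduction: a $\g(k)$ gate $g$ with fan-in $w$ is, up to negating the output, constant on all inputs of Hamming weight $\ge k$; hence on the relevant side its value is determined by which $<k$ coordinates are set. Concretely, if $g$ is constant (say equal to $b$) on all inputs of weight $\ge k$, then for weight $<k$ inputs $g$ depends only on the set $S\subseteq[w]$ of ones, $|S|<k$, so $g = b \oplus \bigvee_{S:\,|S|<k,\ g(\mathbbm 1_S)\ne b}\big(\bigwedge_{i\in S}x_i \wedge \bigwedge_{i\notin S}\neg x_i\big)$ --- but this DNF has width $w$, which is too large. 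Instead I will use the two one-sided bounds: the event ``$g$ outputs the non-constant value'' is captured by a width-$(k-1)$ term structure, and ``$g$ outputs the constant value'' is a width-$(w-k+1)$ structure; the key point is that a $\g(k)$ gate composed with $\AND_w$'s below it can be rewritten, after a random restriction, as a small-width DNF \emph{or} CNF with at most $2^k \binom{w}{<k}$-ish many clauses, so it behaves like a bundle of at most $2^k$ width-$O(w)$ clauses.

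The main technical step is the \emph{depth reduction / common refinement} argument alluded to in the abstract. First I would handle the bottom two levels: each $F_i = g_i\circ\AND_w$ where $g_i\in\g(k)$. I claim that after a random restriction $\rho\sim R_p$, with the stated failure probability each $F_i|_\rho$ collapses to a decision tree of depth $\le t$ that, moreover, can be taken \emph{consistently} across all $i$ in the sense required by the multi-switching formalism (a single ``common'' partial decision tree of depth $r$ from which each $F_i|_\rho$ is computed by a depth-$t$ continuation). The way to get this is: expand each $g_i$ as a disjunction/conjunction over the $\le 2^k$ ``critical patterns'' of its low-weight behavior; each critical pattern, composed with the $\AND_w$'s, is itself a width-$(\le kw)$ term (an AND of at most $k$ of the $\AND_w$ clauses, possibly negated), so $F_i$ becomes an $\OR$ (or $\AND$) of $\le 2^k$ such terms. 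Now the list $\{F_i\}$ looks like a list of $\le 2^k m$ depth-2 $\ac$ circuits of bottom fan-in $O(kw)$, and I apply the known $\ac$ multi-switching lemma (Håstad / Impagliazzo--Matthews--Paturi / Håstad's improved form), which yields a bound of shape $(2^k m)^{t/r}(O(p\cdot kw))^{t}$. Absorbing the $k$ into the $O(\cdot)$ and re-scaling $p$ (or $w$) gives exactly $(2^km)^{t/r}(O(pw))^t$.

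For the inductive step to higher depth --- turning the full $\gt^0(k)$ circuit into an $\ac$ circuit of the same depth with no loss --- I would iterate: having switched the bottom $\g(k)\circ\AND_w$ layers to shallow decision trees, rewrite those trees as width-$t$ DNFs (or CNFs, chosen to match the next layer's gate type), merge them into the $\g(k)$ gates one level up using the same $2^k$-pattern expansion (so a $\g(k)$ gate of a width-$t$ DNF/CNF becomes an $\ac$ circuit of bottom fan-in $\approx kt$ with $\le 2^k$ gates), and recurse. Crucially the depth does not grow, because the CNF/DNF produced at each stage can be absorbed into the gate above by the standard ``collapse two adjacent OR (resp.\ AND) layers'' trick, exactly as in the classical reduction of $\ac(k)$ to $\ac$, but now only costing a multiplicative $2^k$ in gate count rather than a $\binom{w}{<k}\approx w^k$ blow-up; this is the one new ingredient versus folklore. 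Setting $k = \Theta(\log s)$ makes $2^k$ polynomial, which is why $\gt^0(0.01\log s)$ circuits of size $s^{0.99}$ map to $\ac$ circuits of size $\approx s$.

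The step I expect to be the main obstacle is making the ``common decision tree'' bookkeeping go through cleanly when the $2^k$-fold expansion is applied \emph{before} the switching lemma rather than after: one must ensure that the blow-up from $m$ to $2^k m$ circuits enters only in the base of the exponent $(2^km)^{t/r}$ and not in the $(O(pw))^t$ factor, and that the bottom fan-in stays $O(w)$ (not $O(kw)$ multiplicatively in the final bound) after rescaling --- i.e.\ that the $k$ genuinely gets absorbed. This requires choosing the DNF-vs-CNF representation of each $\g(k)$ gate adaptively (whichever side is the ``constant'' side) and being careful that negations introduced by this choice are compatible with the multi-switching lemma's hypotheses, which are symmetric under negation of the $F_i$ but not obviously under mixing DNF and CNF bottom gates in one list; resolving this likely needs the ``extended'' formulation (hence \texttt{\textbackslash extffm}, \texttt{\textbackslash ffm} in the macro list) where one tracks decision trees rather than one-sided DNF structure, which sidesteps the CNF/DNF asymmetry entirely.
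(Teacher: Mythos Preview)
Your reduction to the classical $\ac$ multi-switching lemma has a genuine gap: the ``$\le 2^k$ critical patterns'' expansion you invoke does not exist. A $\g(k)$ gate has arbitrary fan-in $M$ (up to the circuit size $m$), not fan-in $w$; its behavior on inputs of Hamming weight $<k$ is an \emph{arbitrary} function on the $\sum_{j<k}\binom{M}{j}\approx M^{k}$ low-weight strings, not on $2^k$ patterns. Concretely, already for the simplest case $g=k\text{-}\OR_M$, writing $g(C_1,\dots,C_M)$ as a DNF over the $\AND_w$ clauses requires $\binom{M}{k}$ terms (one per size-$k$ subset of satisfied $C_j$'s), each of width $kw$. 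Your expansion therefore turns the list $\{F_1,\dots,F_m\}$ into a list of roughly $m\cdot m^{k}=m^{k+1}$ width-$kw$ $\ac$ subcircuits, and the $\ac$ multi-switching lemma then yields at best $m^{(k+1)t/r}(O(pkw))^t$. For $k=\Theta(\log m)$ this is $m^{\Theta(t\log m/r)}$ in the first factor, which is vacuous, whereas the claimed bound gives $(2^km)^{t/r}=m^{2t/r}$. The introduction already flags exactly this obstruction: the naive $\ac$ simulation of $\gt^0(k)$ costs $s^{k}$ in size, which is why the problem is nontrivial for any $k=\omega(1)$.

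The paper does not reduce to the $\ac$ lemma at all; it re-runs Lyu's witness/transcript proof directly on $\g(k)\circ\AND_w$. The canonical decision tree is augmented with a counter $\ctr$ that increments (without issuing a query) whenever it meets an already-satisfied clause, and halts once $\ctr=k$. Correspondingly the witness $(r,\ell_i,s_i,B_i,\alpha_i)$ is relaxed to allow up to $k$ of the $s_i$ to equal $0$, recording those counter increments. The \emph{only} place this costs anything is in the enumeration of partial witnesses: the count of sequences $(s_i)$ becomes $\le 2^{2s+k}$ instead of $2^{2s}$, contributing a single multiplicative $2^k$. In the global (multi-) version these local $2^k$ factors aggregate to $2^{kt/r}$, which couples with $m^{t/r}$ to give $(2^km)^{t/r}$. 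The point is that the $2^k$ arises from counting \emph{transcripts} of a slightly longer CDT, not from counting \emph{subcircuits} after an expansion; no black-box reduction to $\ac$ can produce this coupling. Your final paragraph correctly senses that keeping the blow-up inside $(2^km)^{t/r}$ is the crux, but the mechanism you propose (adaptive DNF/CNF choice, decision-tree formulation) does not fix the $M^k$-versus-$2^k$ discrepancy.
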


The theorem statement and proof is formally written in \Cref{thm:switch}. This bound can be proven to be optimal in the regime of large $t$. See the appendix (\Cref{apx:tight}) for the proofs of this claim.

It is illuminating to compare this result to the multi-switching lemma for $\ac$ circuits, which bounds the probability by the very similar expression of $m^{t/r}(O(pw))^t$. The only difference is that in the new lemma, the $m$ and $2^k$ are coupled in the base of the exponent. This seems to hint that as long as $2^k = O(m)$, one gets the same probability bound when using either the $\ac$ or $\gt^0(k)$ version of the multi-switching lemma. In practice, the parameter $m$ is upper bounded by $s$, the size of the circuit. Hence, we would intuit that any result obtained from the multi-switching lemma for depth $d$ size $s$ $\ac$ circuits can then be lifted to size $s$ $\gt_d^0(\log s)$ circuits. This indeed turns out to be the case as we demonstrate through four different results. We obtain a surprising lifting theorem: any depth $d$ size $s$ $\ac$ lower bound obtained by the multi-switching lemma immediately lifts to depth $d$ size $s^{.99}$ $\gt^0(.01\log s)$-circuits \emph{with no loss in parameters}. We demonstrate three different results exhibiting this phenomenon. \ 

For the first result, denote $\pargate$ to be the parity gate.

\begin{thm}[Optimal Correlation Bounds Against Parity] Let $C$ be a size $m$ depth $d$ $\gt^0(k)$-circuit. Then the correlation of $C$ against parity is \[|\E_x[(-1)^{C(x) + \pargate(x)}]|\le 2^{-\Omega_d(n/(k + \log m)^{d-1}) + k}.\]
\end{thm}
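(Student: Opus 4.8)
The plan is to follow the template of H\aa stad's parity lower bound for $\ac$, substituting our multi-switching lemma (\Cref{thm:informalmulti}) for the $\ac$ switching lemma at each stage and tracking the extra $2^k$ factor. The proof is by iterated random restrictions that, layer by layer from the bottom, collapse the depth-$d$ $\gt^0(k)$ circuit down to a single $\g(k)$ gate (or a shallow decision tree), while the restricted parity still depends on many variables.

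First I would set up the restriction schedule. Choose $p = \Theta(1/(k+\log m)^{?})$-type parameters so that after one restriction the bottom two levels collapse. The key point is that a $\g(k)$ gate, after a restriction that leaves it non-constant on inputs of Hamming weight $<k$, can be simulated by a depth-$k$ decision tree (branch on the $<k$ surviving ``light'' coordinates — once $k$ of its input bits are fixed to the minority value it becomes constant), so a bottom layer of $\g(k)\circ\AND_w$ circuits, after applying \Cref{thm:switch} with $w$ the bottom fan-in and then converting the resulting decision trees to their CNF/DNF form of width $r$, merges with the layer above. Apply the multi-switching lemma with parameters $r = \Theta(\log m + k)$ and $t$ chosen so that the failure probability $(2^k m)^{t/r}(O(pw))^t$ is tiny; with $r \asymp \log(2^k m) \asymp k + \log m$ this is $2^{-\Omega(t)}$ for $p w = O(1)$ a small enough constant. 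Iterating $d-1$ times, the surviving variable fraction is $p^{d-1} = (k+\log m)^{-\Omega(d)}$ (up to the $w$-management and the first-layer subtlety), so the restricted parity lives on $\Omega(n / (k+\log m)^{d-1})$ variables with high probability.

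Next I would combine the two events: with probability $\ge 1 - 2^{-\Omega(t)}$ over the full restriction $\rho$, the circuit $C|_\rho$ has collapsed to a single $\g(k)$ gate $g$ of some fan-in on the surviving coordinates, and by a Chernoff bound the number $n'$ of surviving coordinates is $\Omega(n p^{d-1}) = \Omega(n/(k+\log m)^{d-1})$. On those surviving coordinates $C|_\rho$ equals $g$, and we must bound $|\E_x[(-1)^{g(x)+\pargate(x)}]|$ where $\pargate$ now runs over $\ge n'$ variables. Here is where the defining property of $\g(k)$ is used a second time: $g$ is constant on all inputs of Hamming weight $\ge k$ (WLOG; the other case is symmetric), so $g$ agrees with a constant function except on a Hamming ball of radius $<k$, which has at most $\sum_{i<k}\binom{n'}{i} \le 2\binom{n'}{k}$ points, an exponentially small fraction of $\{0,1\}^{n'}$ once $n' \gg k$. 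The correlation of any constant with parity on $\ge 1$ variable is $0$, so $|\E[(-1)^{g+\pargate}]| \le \Pr[g \ne \text{const}] \le 2^{-\Omega(n')} = 2^{-\Omega(n/(k+\log m)^{d-1})}$ on the good event. On the bad event (probability $2^{-\Omega(t)}$) we bound the correlation trivially by $1$; choosing $t = \Theta(n/(k+\log m)^{d-1})$ balances the two contributions and, after paying the additive $+k$ slack for the Hamming-ball/decision-tree overhead in the restriction steps, yields the claimed bound $2^{-\Omega_d(n/(k+\log m)^{d-1}) + k}$.

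The main obstacle I anticipate is the bookkeeping of the bottom fan-in $w$ across the $d-1$ restriction rounds: in the $\ac$ proof one alternates CNF/DNF normal forms and uses that a width-$r$ decision tree is a width-$r$ CNF (or DNF), so the ``$w$'' fed into the next application of the switching lemma stays $O(r)$; I need the analogous normal-form bookkeeping for $\g(k)\circ\AND_w$ where the $\g(k)$ gate collapses to a depth-$(k + $ DT-depth$)$ decision tree — this is presumably handled by the ``depth reduction lemma'' advertised in the abstract, so I would invoke that to keep $w = O(k + \log m)$ throughout, which is exactly what makes $pw = O(1)$ sustainable with $p = \Omega(1/(k+\log m))$ at each step. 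The secondary care-point is the first layer, where the gates feed from literals rather than from width-$w$ ANDs; this is standard (a single restriction with a slightly larger $p$, or treating literals as width-$1$ ANDs) and contributes only to constants.
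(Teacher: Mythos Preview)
Your high-level plan---iterated random restrictions driven by the multi-switching lemma, with the depth-reduction lemma keeping the bottom fan-in at $O(k+\log m)$ between rounds, and $r\asymp k+\log m$---is exactly the paper's strategy. Two specific steps, however, are wrong.

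First, the claim that ``a $\g(k)$ gate\ldots can be simulated by a depth-$k$ decision tree'' is false. A $\g(k)$ gate of fan-in $M$ (e.g.\ the $k$-$\OR_M$) can require decision-tree depth $M$, not $k$: knowing any $k$ input bits does not determine whether the remaining $M-k$ contain $k$ ones. The layer collapse does \emph{not} work by turning the $\g(k)$ gate itself into a shallow tree; it works because the switching lemma (\Cref{thm:switch}) turns the entire depth-2 subcircuit $\g(k)\circ\AND_w$ into a $\dt_t$, and then the depth-reduction lemma (\Cref{lem:depthred}) rewrites the next layer $\g(k)\circ\dt_t$ as $\g(k)\circ\AND_t$ using the \emph{unambiguity} of the DNF extracted from a decision tree. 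You invoke the depth-reduction lemma at the end, which is correct, but your stated mechanism for why it is needed is not.

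Second, the endgame does not reduce to ``a single $\g(k)$ gate on the surviving coordinates.'' After $d-1$ rounds you are left with a depth-2 circuit $\g(k)\circ\AND_w$ whose top gate reads \emph{clauses}, not input variables; one more application of the switching lemma converts this to a $\dt_t$ on the surviving variables (this is packaged as \Cref{cor:gt0}). The paper's endgame is then: if the tree has depth $< n'/2$ and $n'$ variables survive, walk down any branch; $C|_{\rho\circ\tau}$ becomes a constant while $\pargate|_{\rho\circ\tau}$ is parity on $\ge n'/2$ variables, so the conditional correlation is exactly $0$. The entire bound thus comes from the bad-event probability $2^{-\Omega(pn)+k}$. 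Your Hamming-ball calculation bounds the correlation of a bare $\g(k)$ gate with parity, which is not the object that arises, and your attribution of the ``$+k$'' to Hamming-ball slack is off---the $+k$ in the exponent is the $2^k$ factor in the switching-lemma failure probability, not an endgame artifact.
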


In particular, we get a $2^{\Omega(n^{1/d})}$-size lower bound for $\gt^0_d(\Omega(n^{1/d}))$ circuits almost matching the lower bound of $2^{\Omega(n^{1/(d-1)})}$ we know for $\ac$! This is especially surprising in light of the fact that $\gt^0_d(\Omega(n^{1/d}))$ is a much stronger class than $\ac$; there exist singleton $\g(n^{1/d})$ gates that cannot be computed by size $O(2^{n^{1/2d}})$ $\ac_d$ circuits. This can be seen as a limited dual result to \cite{R87}, who showed $\ac_d$ augmented with parity gates requires size $2^{\Omega(n^{1/2d})}$ to compute majority, whereas we show $\ac_d$ augmented with $n^{1/d}$-biased majority gates requires size $2^{\Omega(n^{1/d})}$ to compute parity. It also contrasts with \cite{oss19}, who surprisingly showed that adding parity gates to $\ac$ improved optimal circuit constructions of majority. Here, we show that majority gates whose threshold value is shifted to $\Omega(n^{1/d})$ has no effect on $\ac$'s ability to calculate parity, even though such gates adds a lot of power to $\ac$. (majority gates whose threshold has been biased to $n^{1/d}$ cannot be computed by size $2^{\Omega(n^{1/2d^2})}$ $\ac_d$ circuits). \

Notice that this result is tight in an extremely sensitive way. Letting $\pargate_n$ denote the parity gate over $n$ bits, we see $\pargate_{n^{1/d}}\in \g(n^{1/d})$, and we can calculate the parity of $n$ bits by creating a depth $d$ $n^{1/d}$-ary tree of $\pargate_{n^{1/d}}$ gates, where the $i$th layer from the bottom has $n^{1-i/d}$ $\pargate_{n^{1/d}}$ gates that take the parity of all the bits fed below it in blocks of $n^{1/d}$. This is a depth $d$ size $O(n^{1-1/d})$ circuit computing parity. Therefore, we have a simple counterexample of a $\gt^0_d(n^{1/d})$ circuit computing parity (which is sublinear in size!). This demonstrates a sharp threshold behavior where the exponential lower bound of $2^{\Omega(n^{1/d})}$ is tight up to the hidden constant factor of the $\Omega(\cdot)$ in $\gt_d^0(\Omega(n^{1/d}))$, and if the constant is too large, we suddenly go from requiring exponentially large circuits to only needing sublinear size ones.\ 

This theorem is tight in all other parameters as well. We show that this result is tight in the size parameter by giving a size-$2^{\Omega(n^{1/d})}$ $\gt^0(.1n^{1/d})$ circuit computing parity. Furthermore, we show that the correlation bound is tight by giving a size-$m$ $\gt^0(k)$ circuit that approximates parity.

For what $k$ will analyzing $\ac(k)$ give implications for $\tc^0$? A result by Allender and Kouck\'{y} (\cite{ak10}, Theorem 3.8) states that there exists an absolute constant $C_{AK}$ such that $\maj_n$ can be written as an $\ac(n^\eps)$ circuit with depth $\le C_{AK}/\eps$ and size $O(n^{1+\eps})$. Therefore, beating the current state of the art depth $d$ size $\Omega(n^{1+C_{IPS}^{-d}})$ lower bound for $\tc^0$ reduces to beating depth $C_{AK}d/\eps$ size $n^{(1+50^{-d})(2+\eps)}$ lower bounds for $\gt^0(n^\eps)$ circuits for any choice of $\eps$. In our paper, we show exponential size lower bounds against parity when $\eps = 1/d$ but at depth $d$ rather than $C_{AK}d^2$. It would be interesting to see whether with some $\nc^1$-complete problem can display strong lower bounds for $\ac(n^{1/d})$ for depth larger than $d$, even if it may be less than $C_{AK}d^2$ (but a function other than parity would need to be considered).\

Another angle researchers have taken towards understanding the power of threshold circuits has been to  start with an $\ac$ circuit and augment some of the gates to arbitrary threshold gates \cite{v07,ls11,st19}. Our multi-switching lemma shows that we can instead start with a base $\gt^0(\log s)$ circuit and obtain the same state of the art parameters as \cite{st19} if we started from an $\ac$ circuit.

\begin{thm}
    There exists a function $\rw\in \func{P}$ (introduced by Razborov and Widgerson \cite{rw93}) and absolute constant $\tau$ such that for $C$, a size $n^{\Omega(\log n)}$ $\gt^0(\Omega(\log^2 n))$-circuit with $n^{.249}$ $\thr$ gates, we have \[ |\E_x[(-1)^{\rw(x) + C(x)}]|\le 2^{-\tilde{\Omega}(n^{.249})}.\]
\end{thm}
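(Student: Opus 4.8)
The plan is to run the argument of Servedio and Tan \cite{st19} for $\ac$ circuits augmented with arbitrary threshold gates, but with every use of the $\ac$ (multi-)switching lemma replaced by our multi-switching lemma for $\gt^0(k)$ circuits (\Cref{thm:switch}) and every circuit-normalization step replaced by the new depth reduction lemma. Fix $C$, a size-$m$ depth-$d$ $\gt^0(k)$ circuit with $m = n^{O(\log n)}$ and $k = O(\log^2 n)$, of whose gates $t = n^{.249}$ are unrestricted $\thr$ gates $T_1,\dots,T_t$. Following \cite{st19}, treat the outputs of the $T_i$ as fresh input wires: $C$ becomes a genuine $\gt^0(k)$ circuit $C'$ on the original variables together with $z_1,\dots,z_t$, where each $T_i$ is an arbitrary $\thr$ gate fed by a sub-DAG of $C'$. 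First apply the depth reduction lemma to put $C'$ into the canonical form \Cref{thm:switch} requires --- a bottom layer of $\g(k)\circ\AND_w$ gates of small width $w$ --- at the cost of only a constant factor in depth and an $m^{o(1)}$ factor in size, which is harmless since $m$ is already $n^{O(\log n)}$.

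Next apply $\rho\sim R_p$ with $pw = \Theta(1)$ small and invoke \Cref{thm:switch} simultaneously on all bottom depth-$2$ $\g(k)\circ\AND_w$ subcircuits of $C'$ --- crucially including every subcircuit that feeds a $\thr$ gate --- to get, except with probability $(2^k m)^{t/r}(O(pw))^t$, a common decision tree of depth $D$ under each branch of which every bottom subcircuit collapses to a depth-$r$ decision tree. Iterating $O(d)$ times collapses $C'|_\rho$ to a depth-$D$ decision tree over the surviving variables and the $z_i$, and makes each $T_i|_\rho$ a $\thr$ over $O(m)$ depth-$r$ decision trees, i.e.\ the sign of an $m\cdot 2^{O(r)}$-sparse real polynomial of degree $\le r$ in the surviving variables. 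This is the one place where the generalization from $\ac$ costs anything: the failure bound carries the extra factor $2^{kt/r}$ compared to the $\ac$ case, but since $2^k = n^{O(\log n)}$ is polynomial in $m$ and we may take $r = \mathrm{polylog}(n)$ with $r \gg k$, this factor is absorbed and the failure probability stays $2^{-\tilde\Omega(t)} = 2^{-\tilde\Omega(n^{.249})}$ --- exactly the ``no loss in parameters'' behaviour predicted by the coupling of $2^k$ and $m$ in \Cref{thm:informalmulti}.

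Consequently, except with probability $2^{-\tilde\Omega(n^{.249})}$ over $\rho$, the function $C|_\rho$ is a depth-$D$ decision tree whose leaves are functions of $t = n^{.249}$ signs of sparse low-degree polynomials over the $n^{1-o(1)}$ surviving variables. The function $\rw$ of \cite{rw93} was engineered to be hard for $\maj\circ\ac$-type circuits, and the quantitative correlation statement we need --- that $\rw$ has correlation $2^{-\tilde\Omega(n^{.249})}$ with any $\mathrm{polylog}$-depth decision tree carrying at most $n^{.249}$ such $\thr$ leaves, and that $\rw$ restricted by a typical $\rho$ is again a hard instance on $n^{1-o(1)}$ live coordinates --- is the one extracted in \cite{st19}. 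Plugging this in for the good $\rho$'s, bounding the bad ones trivially, and averaging gives $|\E_x[(-1)^{\rw(x)+C(x)}]| \le 2^{-\tilde\Omega(n^{.249})}$, with the constant $\tau$ of the statement absorbing the implicit $\mathrm{polylog}(n)$ slack.

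The main obstacle is embedding \cite{st19}'s handling of the $\thr$ gates inside a $\gt^0(k)$ base rather than an $\ac$ base. Because the sub-DAG feeding each $\thr$ gate is built from arbitrary $\g(k)$ gates --- potentially far stronger than $\AND/\OR$ --- one genuinely needs the \emph{multi}-switching form of \Cref{thm:switch} in order to simplify all $O(m)$ THR-feeding depth-$2$ $\gt^0(k)$ subcircuits simultaneously with the rest of $C'$, and one needs the depth reduction lemma to restore canonical form between rounds. The delicate check is that the coupling of $2^k$ with $m$ in the base of the bound still leaves room for as many as $n^{.249}$ threshold gates; this amounts to the observation that $k = O(\log^2 n)$ and $m = n^{O(\log n)}$ keep $2^k m = n^{O(\log n)}$, after which every downstream estimate is identical to the $\ac$-with-threshold-gates argument of \cite{st19}.
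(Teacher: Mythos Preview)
Your high-level plan is exactly the paper's: replace every invocation of the $\ac$ multi-switching lemma in \cite{st19} by the $\gt^0(k)$ version (\Cref{lem:multiswitch}/\Cref{thm:constdepth}) and observe that since $2^k = n^{O(\log n)}$ is polynomial in $m = n^{O(\log n)}$, the coupled base $(2^k m)^{1/r}$ costs nothing and all downstream parameters of \cite{st19} survive intact.

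The paper organizes the argument a bit differently, and your sketch is slightly imprecise at one point. The paper first proves the \emph{layered} statement (\Cref{thm:anycorr}) for $\any_u\circ\thr\circ\gt_d^0(\Omega(\log^2 n))$ circuits by applying \Cref{thm:constdepth} with $G$ taken to be the entire $\any_u\circ\thr$ top, landing directly on an $\any_u\circ\thr\circ\dt_w$ object to which the remaining \cite{st19} steps (fan-in pruning to $\any_u\circ\thr\circ\AND_v$, then the NOF bound against $\gip$) apply verbatim. Only afterwards does it pass from arbitrary placement of $\thr$ gates to the layered form via the \cite{ls11} reduction (``Theorem~3 from Lemma~6''), which is what produces the $-g$ term in $2^{-\Omega(n^{.499}/g - g)}$. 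Your ``fresh input wires'' description conflates these two stages: when $\thr$ gates feed into one another, after simplification each $T_i$ becomes a threshold of decision trees in the surviving $x$'s \emph{and} in $z_j$ for $j<i$, so the object is not literally ``a depth-$D$ decision tree whose leaves are functions of $t$ signs of sparse low-degree polynomials in $x$'' until you flatten by the $2^g$-guess enumeration of \cite{ls11}. Once that step is made explicit your sketch and the paper's proof coincide.
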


The original motivation to study $\ac$ with a small number of $\thr$ gates was to use this to gradually convert circuits gate by gate from $\ac$ to $\tc^0$. This result ``speeds up'' this process by augmenting \emph{all} $\ac$ gates to $\g(\log^2 n)$ gates (which contain unbalanced $\ltf$s as discussed above). If one tried proving this theorem by expanding the $\gt^0(\log^2 n)$ circuits into an $\ac$ circuit, completing the proof would require solving a longstanding open problem regarding correlation bounds against $\omega(\log n)$-party NOF protocols! In \Cref{sec:arbgate}, we point out this observation explicitly along with the proof. \ 

As another application, we can create PRGs for $\gt^0(\log m)$ circuits whose seed length matches that of size $m$ $\ac$ circuits. This is accomplished by fully derandomizing \Cref{thm:informalmulti} and using the partition-based PRG approach in \cite{Lyu}. The resulting PRG for $\gt_d^0(\log m)$ has identical seed length as Lyu's PRG, thereby also matching H\aa stad's $\ac$ lower bound barrier up to a $\log\log m$ factor (see \cite{tx13,st19b,kel21} for a discussion on why an $o(\log^d(m/\eps))$ seed length implies breakthrough circuit lower bounds).

\begin{thm}
    For every $m,n,d\ge 3$ and $\eps > 0$, there is an $\eps$-PRG for size-$m$ $\gt_d^0(\log m)$ with seed length $O(\log^{d-1}(m)\log(m/\eps)\log\log m)$
\end{thm}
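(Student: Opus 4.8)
The plan is to build the PRG by combining two ingredients: a fully derandomized version of the multi-switching lemma from \Cref{thm:informalmulti}, and the partition-based PRG framework of Lyu~\cite{Lyu}. The starting point is the observation that Lyu's construction for $\ac$ circuits only uses the $\ac$ multi-switching lemma as a black box. The idea is therefore to run through his argument verbatim, substituting our $\gt^0(k)$ multi-switching lemma wherever the $\ac$ one was invoked, and to check that when $k = \log m$ the extra factor of $2^k$ gets absorbed into the $m$ factor (since $2^{\log m} = m$), so that the error bounds and seed-length recursion come out identical up to constants.

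First I would carry out the derandomization of \Cref{thm:switch}: rather than restrictions drawn from the fully independent distribution $R_p$, one needs a pseudorandom distribution on restrictions that is samplable from few bits and against which the switching lemma (and its multi-switching strengthening) still holds with essentially the same parameters. Here I would invoke the depth-reduction lemma advertised in the abstract to reduce each depth-$2$ $\g(k)\circ\AND_w$ subcircuit to a bounded-width decision-tree-like object, so that the event ``$\calF|_\rho$ does not simplify'' becomes a statement about common partial decision trees; then I would use a $k$-wise independent or small-bias generator (as Lyu does, following Trevisan--Xue and the later refinements) to fool the relevant tests. The key quantitative point is that the union bound / encoding argument underlying the switching lemma only ``sees'' the $\g(k)\circ\AND_w$ structure through the $(2^km)^{t/r}(O(pw))^t$ bound, so a generator that is pseudorandom for DNF-type tail events of the appropriate width suffices.

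Next I would assemble the PRG recursively on the depth $d$. At each of the roughly $\log^{d-1} m$ levels one applies a pseudorandom restriction (costing $O(\log m \cdot \log(m/\eps))$ bits per level, or the refined $\log\log m$-improved count from Lyu), uses the derandomized multi-switching lemma to argue that with high probability all depth-$2$ subcircuits collapse to small-depth decision trees, and then merges layers to drop the depth by one. After $d-1$ such rounds the circuit has collapsed to a decision tree of depth $O(\log(m/\eps))$, which is fooled by any $O(\log(m/\eps))$-wise independent distribution. Composing the seed lengths across the $d-1$ levels, together with the per-level $\log m$ and the final $\log(m/\eps)$, yields the claimed $O(\log^{d-1}(m)\log(m/\eps)\log\log m)$.

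The main obstacle I expect is the derandomization step: one must verify that the pseudorandom restriction distribution interacts correctly with $\g(k)$ gates, which are far more general than $\AND/\OR$ — a single $\g(k)$ gate can compute an arbitrary function of $k$ of its inputs, so after a restriction the ``surviving'' behavior of a gate is not automatically a small DNF/CNF. The depth-reduction lemma is precisely what is needed to tame this, and the delicate point is that its guarantee must be strong enough that the same small-bias/bounded-independence generator that fools $\ac$ tail events still fools the resulting common-decision-tree events here, with the $2^k$ overhead exactly cancelling against the $k=\log m$ choice. Once that is in place, the rest is a faithful transcription of Lyu's analysis with the substituted switching lemma, and I would only need to double-check that no step implicitly exploited idempotence or monotonicity of $\AND/\OR$ that $\g(k)$ gates lack.
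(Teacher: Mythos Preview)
Your plan is correct and matches the paper's approach: derandomize the multi-switching lemma (the key fact being that for fixed $\Lambda$ the event ``a given global witness is valid for $\rho(\Lambda,z)$'' is a size-$m^2$ CNF in the ground assignment $z$, so a CNF-fooling $z$ together with a $(t+w)$-wise $p$-bounded $\Lambda$ suffices), then run Lyu's partition-based hybrid recursion, invoking the depth-reduction lemma \emph{after} each switching step to convert the resulting $\g(k)\circ\dt$ leaves back into $\g(k)\circ\{\AND,\OR\}_w$ form for the next round. One small correction to your accounting: there are $d-1$ levels of recursion, not $\log^{d-1}m$; the $\log^{d-1}m$ factor in the seed length arises because each level spawns $\ell=\Theta(w)=\Theta(\log m)$ subproblems.
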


The proof is covered in \Cref{sec:prg}. Notice that if we had simply expanded out all gates as width $\log m$ CNF/DNFs, we would have a size $\approx m^{\log m}$ $\ac_{d+1}$ circuit, and plugging in Lyu's near-optimal PRG would yield us a suboptimal seed length of $O((\log^2m + \log(1/\eps))\log^{2d}m\log\log m)$. 

Finally, we establish results on the Fourier spectrum of $\gt^0(k)$ circuits. It can be shown that every Boolean function, when written as a map $\{\pm 1\}^n\to\{\pm 1\}$, can be uniquely expressed as a multivariate polynomial $f(x) = \sum_{S\subset[n]}\widehat{f}(S)\prod_{i\in S}x_i$. We show exponentially small Fourier tail bounds for any $C\in \gt^0(k)$. More concretely,

\begin{thm}
    For arbitrary $C\in\gt_d^0(k)$ of size $m$, the following is true for any $0\le \ell\le n$. \[\sum_{|S|\ge \ell} \widehat{C}(S)^2\le 2^{-\Omega\left(\frac{\ell}{(k+\log m)^{d-1}}\right)+k}\]
\end{thm}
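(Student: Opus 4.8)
The plan is to derive the Fourier tail bound as a direct consequence of the multi-switching lemma (\Cref{thm:informalmulti}), following the now-standard random-restriction approach to Fourier concentration pioneered by H\aa stad and refined by Tal and Lyu, but plugged into the $\gt^0(k)$ switching machinery. Recall the key principle: if a Boolean function $f$ has the property that under a random restriction $\rho\sim R_p$ the restricted function $f|_\rho$ has a decision tree of depth $\le t$ except with probability $\delta$, then the Fourier weight of $f$ above level $\ell$ is controlled by roughly $\delta + 2^{-\Omega(p\ell)}$ up to the point where $t$ becomes comparable to $p\ell$. So the first step is to invoke the depth reduction lemma promised in the abstract together with \Cref{thm:switch} to argue that a size-$m$ depth-$d$ $\gt^0(k)$ circuit $C$, after a random restriction keeping each variable with probability $p = \Theta(1/(k+\log m)^{d-1})$, collapses to a decision tree of depth $t$ except with probability $2^{-\Omega(t)+k}$; the $2^k$ factor appearing coupled with $m$ in the base of the multi-switching bound is exactly what produces the additive $+k$ in the exponent and forces $t \gtrsim k$ for the bound to be meaningful.

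Second, I would set up the standard reduction from Fourier tails to switching. Writing $T = C$ as a $\pm 1$-valued function, one uses the identity (as in Tal's work) that $\sum_{|S|\ge \ell}\widehat{C}(S)^2 \le 2\,\Pr_{\rho\sim R_p}[\dt(C|_\rho) > p\ell/2] + (\text{error from the decision-tree Fourier decay})$, or more precisely one bounds $\sum_{|S|\ge\ell}\widehat C(S)^2$ by conditioning on $\rho$: for each restriction where $C|_\rho$ has small depth decision tree, its Fourier mass is concentrated on low levels relative to the surviving coordinates, and one relates the original Fourier level $\ell$ to the restricted level via a Chernoff bound on $|S \cap \mathrm{live}(\rho)|$. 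Choosing $t = \Theta(p\ell)$ and applying the $\gt^0(k)$ multi-switching lemma with $m$ the number of bottom-layer subcircuits (at most the size of the circuit, so $2^k m \le 2^k \cdot m$), the failure probability becomes $(2^k m)^{t/r}(O(pw))^t$, which after the depth-reduction telescoping and the choice of $p$ is at most $2^{-\Omega(t)+k} = 2^{-\Omega(p\ell)+k} = 2^{-\Omega(\ell/(k+\log m)^{d-1})+k}$, giving exactly the claimed bound.

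Third, I need to assemble the depth-reduction carefully: rather than restrict once, one applies $d-1$ (or $d$) successive restrictions, each with parameter $p_i$, so that after the $i$-th stage the top $i$ layers have been absorbed into a shallow decision forest; the multi-switching lemma is what lets one handle all gates at a given layer simultaneously with a common "common partial decision tree." The product of the $p_i$ is $\Theta(1/(k+\log m)^{d-1})$, matching the exponent, and at each stage the width parameter fed into the next application is the decision-tree depth $r$ from the previous one, which is $O(k+\log m)$. This is precisely the lifting phenomenon the paper advertises, so I expect the bulk of this argument to be a citation to \Cref{thm:switch} plus the depth-reduction lemma with the bookkeeping of parameters.

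The main obstacle, I expect, is not any single inequality but rather the tightness of the parameter tracking: one must ensure that the $2^k$ factors accumulated over the $d-1$ restriction stages do not blow up to $2^{k(d-1)}$ or $2^{kd}$ but remain a single additive $+k$ in the final exponent. This requires arranging the restrictions so that the $2^k$ cost is "paid once" — intuitively, because a $\g(k)$ gate, once it sees $k$ fixed input bits of the majority value, becomes constant, so after the first restriction stage the relevant gates behave essentially like $\AND/\OR$ gates and subsequent stages incur no further $2^k$ penalty; alternatively one absorbs the $2^k$ into the slack between $\Omega(\ell/(k+\log m)^{d-1})$ and the true decay. Handling the boundary/degenerate cases ($\ell$ small, $2^k > m$, or $\ell < (k+\log m)^{d-1}$) where the bound is vacuous is routine. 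A secondary subtlety is the exact constant in the Chernoff step relating $|S|\ge\ell$ to the count of live coordinates in $S$, but this is identical to the $\ac$ case and contributes only to hidden constants.
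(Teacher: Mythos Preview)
Your proposal is correct and uses the same ingredients as the paper --- the multi-switching lemma, the depth-reduction lemma, and the Linial--Mansour--Nisan inequality $W^{\ge \ell}[f]\le 2\,\E_{\rho\sim R_p}W^{\ge p\ell/2}[f|_\rho]$ --- but organizes them differently. You first assemble the full simplification statement (a size-$m$ depth-$d$ $\gt^0(k)$ circuit collapses under $R_p$ with $p=\Theta((k+\log m)^{-(d-1)})$ to a depth-$t$ decision tree except with probability $2^{-\Omega(t)+k}$; this is exactly the paper's \Cref{cor:gt0}), and then apply the LMN inequality once, using that a depth-$t$ tree has no Fourier mass above level $t$. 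The paper instead proves a slightly more general bound for $\gt_d^0(k)\circ\{\AND,\OR\}_w$ circuits by \emph{induction on $d$}: it establishes the depth-$2$ base case directly from the single switching lemma, and for the inductive step applies the multi-switching lemma once, passes to the leaves of the resulting partial decision tree via a lemma of Tal, and invokes the inductive hypothesis there. Your route is more direct for the stated theorem; the paper's inductive route additionally yields explicit dependence on the bottom fan-in $w$.

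Your worry about the $2^k$ factor accumulating across the $d-1$ restriction stages is well-placed but resolves exactly as you anticipate: in each multi-switching application one takes the partial-tree depth parameter $r=k+\log m$, so that the base $(2^km)^{1/r}=O(1)$ contributes no growing factor; the sole surviving $2^k$ comes from the final non-multi switching step that turns the residual $\g(k)\circ\{\AND,\OR\}_w$ circuit into a decision tree. This is precisely how the paper's \Cref{cor:gt0} is proved.
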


Linial, Kushilevitz, Mansour, and Tal \cite{km91,lmn93,man92,tal17} showed that with small Fourier tails, one can get a variety of Fourier structure results, efficient learning algorithms, and correlation bounds. We demonstrate applications of such techniques to $\gt^0(k)$ in detail in \Cref{sec:fourier}. 

\section{Overview of the Proof of $\gt^0(k)$ Lower Bounds}

The novel ideas of this paper can be observed in the switching lemma and its use in collapsing the depth of $\gt^0(k)$ circuits. Therefore, we give an outline of this portion of the argument. Getting the specialized theorems noted above is then a matter of applying the lemma in various settings.

In order to prove correlation bounds against $\gt^0(k)$ cirucits, we use the framework of showing that such circuits simplify under \emph{random restrictions}. For simplicity, one can think of a random restriction as a partial assignment created by keeping each variable unfixed with some probability $p$, and fixed to $0$ or $1$ with $\frac{1-p}2$ probability each. Showing constant depth circuits simplify under random restrictions is usually done via following two steps: 
\begin{itemize}
\item Establish a switching/multi-switching lemma that states all the bottom depth-2 subcircuits ``simplify'' with great probability after a random restriction
\item Establish a depth reduction lemma which states when the bottom subcircuits are ``simplified'', one can effectively reduce the depth of the circuit, and induct.
\end{itemize}

If both of these are established, one can argue the parity function uncorrelates with these constant depth circuits, since parity reduces to a parity on a subset of bits when acted on by a random restriction, while the circuit simplifies to a constant with high probability. This approach is used to prove the well known lower bounds for $\ac$ circuits. The first bullet point is known to be the technical meat of the proof for $\ac$, requiring complex encoding, witness, or inductive arguments to show that the subcircuits simplify to decision trees. The second bullet point is swept under the rug since depth reduction is almost immediate in the $\ac$ case. \ 

In this paper, we will extend both bullet points to be applicable to $\gt^0(k)$ circuits. To prove the switching lemma, we extend Razborov's encoding/witness argument, intuitively showing if our subcircuit's top gate was a $G(k)$ gate instead of an $\OR$, we only need to store $k$ more bits in our encoding/witness (this will be described more in the next paragraph). To prove a depth reduction lemma, we will require a more involved proof that uses the full power of decision trees. In the $\ac$ argument, only the fact that a decision tree could be unraveled into a small width DNF or CNF was used. Here, using an additional structure property of these DNF/CNF stemming from its derivation from a decision tree, we can extend this depth reduction argument to $\gt^0(k)$. Such a depth reduction wouldn't be possible otherwise, and essentially uses the additional structure of decision trees.\

\subsection{The Switching Lemma}
\label{sub:switch}
To grasp this section, an understanding of Lyu's witness/transcript proof of the switching lemma \cite{Lyu} would be helpful. We still give an overview of the proof here and provide intuition from an information-theoretic lens, which differs in certain places than the intuition presented in \cite{Lyu}. After the overview, we will highlight the necessary changes needed to prove the more general lemma for $\gt^0(k)$ circuits.\

Say we have a $k$-$\OR\circ \AND$ circuit $F$ (in the formal proof, we consider general $\g(k)\circ \{\AND,\OR\}$ circuits). For $\Lambda\subset[n]$ and $z\in\bits^n$, denote $\rho(\Lambda, z)$ to be the restriction/partial assignment where all variables whose indices are in $\Lambda$ are kept alive/unfixed, and all remaining variables $x_i$ with $i\notin \Lambda$ are fixed to the corresponding bit in $z$, $z_i$. Consider a random restriction $\rho(\Lambda, z)$, where $z\sim \{0,1\}^n$ is a uniformly random ground assignment, and $\Lambda$ is a random subset of $[n]$ such that each element is added with probability $p$. To show that with low probability, $F|_\rho$ has decision tree depth $\ge t$, it suffices to create a specific \emph{canonical decision tree} (CDT) for each $\rho$ and argue that this tree has depth $\ge t$ with low probability (because if the decision tree depth of $F|_\rho$ is $\ge t$, then surely the canonical decision tree has depth $\ge t$). We consider the following CDT, where we first initialize a counter $ctr\gets 0$, and then scan the bottom layer clauses from left to right.

\begin{itemize} 
\item If the clause is fixed to $1$ and $ctr = k$, terminate since we know that $F$ evaluates to 1. Otherwise increment $ctr$ and move to the next clause.
\item If the clause is fixed to $0$, move to the next clause.
\item If the clause is ambiguous, query all variables in the clause, and behave accordingly as above.
\end{itemize}

If we think of our CDT as an algorithm that queries certain bits of the input, then \emph{bad} $\rho$ that creates a depth $\ge t$ CDT will produce a unique ``transcript'' of large size recording the behavior of CDT (i.e. the clauses and variables the CDT queries from). Like \cite{Lyu}, we consider transcripts that store $(\ell_i)$, the indices of the clauses queried, along with a set $P$ that further elucidates which variables in the clauses were queried in an information-efficient manner. We get the following inequality \begin{align}\Pr_{\rho} [\dt(F|_{\rho})\ge t] & \le \Pr_{\rho} [\func{CDT}(F|_{\rho})\ge t] \nonumber\\ &  \le \sum_{\substack{\text{large transcripts} \\ (\ell_i, P)}}\Pr_{\rho}[(\ell_i, P)\text{ is a large transcript for }\rho]\label{eqn:1}\end{align} via the union bound. A natural thought is to then bound each term in the sum. Unfortunately, it turns out that the number of transcripts $(\ell_i, P)$, when counted naively by multiplying the total possible lists $(\ell_i)$ by the total possible sets $P$, is far too large to get our switching lemma due to the vast amount of possible $(\ell_i)$. \ 

However, it turns out that $(\ell_i)$ contains \emph{redundant} information. Say $P$ is a partial transcript for $\rho$ if it can be completed with a suitable $(\ell_i)$ to form a transcript for $\rho$. We can show that given $\rho$ and $P$ that is a partial transcript for it, there is a unique list $(\ell_i)$ that completes $P$ to a full transcript. Hence \begin{align}&\sum_{\substack{\text{large transcripts} \\ (\ell_i, P)}}\Pr_{\rho}[(\ell_i, P)\text{ is a large transcript for }\rho] \nonumber\\ & = \sum_{\text{partial transcripts }P}\Pr_{\rho}[P\text{ is a partial transcript for }\rho]\label{eqn:2}\end{align} which is a sum of far fewer terms, making the union bound feasible. It remains to bound each individual term in the sum. \ 

We want to bound the probability a particular $P$ is a partial transcript for $\rho$. If we were given the complementary $(\ell_i)$'s, this would be easy. The $(\ell_i)$ along with $P$ would give a transcript of the specific set of $\ge t$ variables that the CDT queried, which $\rho$ must keep alive in order to have any hope of $(\ell_i,P)$ being a transcript for $\rho$. This would happen with probability $\le p^t$, which is a sufficiently small probability to apply the union bound. However, the trickiness arises due to $\ell_i$ not being specified. It turns out different $(\ell_i)$ might couple with the same $P$ to form transcripts for different $\rho$! Therefore if we use no information about $\rho$, then we have no hope of recovering a fixed $(\ell_i)$.\

On the other hand, if we were given complete information about $\rho$, then we can recover a unique $(\ell_i)$ or deduce none exists. However, this eliminates all randomness of $\rho$ and we get the trivial large upper bound of $1$ for each term. Therefore, for such an approach to work, we need to condition on partial information about $\rho$ and hope that it is enough information to recover $(\ell_i)$ but not too much information to the point where we get a weak bound on the probability due to the lack of randomness. \ 

This motivates us to think of a restriction by first assigning a uniform random string $z$ to $x$ and then covering up a $p$-subset $\Lambda$ with stars to create a restriction $\rho(\Lambda, z)$. The intuition for this is that hopefully the random string $z$, combined with $P$, will be enough information from $\rho$ to fix $(\ell_i)$, from which we can use the remaining randomness in $\rho$ (namely $\Lambda$) to obtain the $p^t$ bound. In particular, we hope that there is a ``transcript searcher'' ${\cal S}$, which on input $(z, P)$, can recover a completed transcript $(\ell_i, P)$ such that all $\rho$ designed by initially assigning $x=z$ will have partial transcript $P$ only if $(\ell_i,P)$ is its transcript. If such a function exists, then we could say

\begin{align*}   \Pr_\rho[P\text{ is a partial transcript for }\rho] & = \E_{z\sim U_n}\Pr_{\Lambda}[P\text{ is a partial transcript for $\rho(\Lambda, z)$}] \\ &  = \E_{z\sim U_n}\Pr_{\Lambda}[{\cal S}(z, P)\text{ is a transcript for $\rho(\Lambda,z)$}] \\ &  \le p^t \end{align*} where the last inequality follows since $\rho$ must keep the variables in the transcript alive. Alas, such an ${\cal S}$ cannot exist. There can exist different restrictions created from the same ground assignment $z$ that are witnessed by different completions of $P$ (this ambiguity is an unavoidable side effect of not being able to condition on all information about $\rho$). We cannot hope for a unique completion, but what if our ${\cal S}$ output all of these potential completions with decent probability over the randomness in $z$? Say $\rho$ is \emph{good} if $P$ is a partial transcript for it. In formal terms, say we can construct ${\cal S}$ such that for any good $\rho$, $\Pr_z [{\cal S}(z,P)\text{ is a partial transcript for }\rho]\ge \gamma$ (earlier we were demanding $\gamma = 1$, which turned out to be impossible). Then we can deduce \begin{align}   \Pr_\rho[P\text{ is a partial transcript for }\rho] & = \E_{\Lambda}\Pr_{z}[\text{$\rho(\Lambda, z)$ is good}] \nonumber\\ &  = \E_{\Lambda}\frac{\Pr_{z}[{\cal S}(z, P)\text{ is a transcript for $\rho(\Lambda,z)$}]}{\Pr_{z}[{\cal S}(z, P)\text{ is a transcript for $\rho(\Lambda,z)$} | \text{$\rho(\Lambda, z)$ is good}]} \nonumber\\ & \le \frac{1}\gamma \E_\Lambda \Pr_{z}[{\cal S}(z, P)\text{ is a transcript for $\rho(\Lambda,z)$} \nonumber\\ & = \frac{1}\gamma \E_z \Pr_{\Lambda}[{\cal S}(z, P)\text{ is a transcript for $\rho(\Lambda,z)$}]\nonumber\\ & \le p^t/\gamma.\label{eqn:3} \end{align} 

Stringing Equations (\ref{eqn:1}),(\ref{eqn:2}), and (\ref{eqn:3}) lets us bound \[\Pr_{\rho} [\dt(F|_{\rho})\ge t]\le (p^t/\gamma)\cdot \#\{\text{partial transcripts }P\}\] 
It turns out we can define our partial transcripts $P$ and construct a transcript searcher such that the above term is small enough to give us the desired switching lemma. See \Cref{thm:switch} for the technical details.



\subsubsection{Comparison to \cite{Lyu}}
\label{subsub:comparison}
Although the proof structure for proving our switching lemma is similar to Lyu's \cite{Lyu} proof of the $\ac$ switching lemma, some changes are necessary to accommodate the general structure of $\g(k)\circ\{\AND,\OR\}$ circuits.

\begin{itemize}
\item We need to create a more complex CDT that can compute $\g(k)\circ\{\AND,\OR\}$ circuits, and a corresponding new definition of witnesses/partial witnesses that records the transcript of the complex CDT so that our witness searcher can effectively reconstruct a transcript given information about $\rho$ and a partial witness.

\item Because our CDT contains more steps, there will naturally be more possible transcripts/witnesses. As the switching lemma hinges on a low quantity of possible partial witnesses to union bound over, we need to argue with our new CDT, the number of partial witnesses can be controlled by the parameter $k$. This makes designing the CDT and partial witness to be an act of balancing contrasting parameters

\begin{itemize} 

\item For example, the more complicated a CDT procedure is, the closer to the true decision tree depth it will reach (and hence a tighter bound on $\Pr_\rho[\func{CDT}(F|_\rho)\ge t]$ can be expected), but the larger the possible number of transcripts it will have (thereby increasing the number of terms we union bound over). Therefore, this approach demands the designed CDT to be complicated enough to give a small depth decision tree with high probability, but simple enough to be tractable to analyze with a union bound.

\item Similarly, the more that a partial witness keeps track of, the larger amount of possible partial witnesses we will need to union bound over. However, if we keep track of too little, there will not exist an effective witness searcher that can use the information from the partial witness to construct the whole witness. Hence we need to keep track of just the right amount of information.
\end{itemize}

\item In the argument for $\ac$ circuits, one would show a multi-switching lemma on depth 2 $\ac$ circuits. In other words, one would argue that a collection of $\ac_2$ circuits simultaneously simplify after a one random restriction is applied to all of them. Rather than the natural idea of proving a switching lemma for the analogous $\gt_2^0(k)$ circuits, we consider the hybrid class of $\g(k)\circ \{\AND,\OR\}$ circuits. It turns out a switching lemma on these simpler circuit classes suffice to depth reduce and prove bounds on $\gt_d^0(k)$ as we will see below.
\end{itemize}

\subsection{The Depth Reduction Lemma}

The multi-switching lemma gives a simplification lemma for depth 2 circuits. To extend this to constant depth circuits, we would like to iteratively decrease the depth of the circuit and induct. The argument for $\ac$ circuits was quite simple. Say we have a depth 3 $\OR \circ \AND \circ \OR$ circuit $F$. Using the switching lemma, we can say with high probability, $F|_{\rho}$ is an $\OR\circ \dt_t$ circuit. We now expand each bottom layer decision tree into an $\OR\circ \AND_t$ circuit by enumerating over all 1-paths. Consequently this simplifies $F|_{\rho}$ to a $\OR\circ(\OR\circ \AND_t) = \OR\circ \AND_t$ circuit, since an $\OR$ of $\OR$ of variables is simply a single $\OR$ over all variables involved, getting us a depth reduction from depth 3 to 2.\

What happens when we try the same argument for a $k$-$\OR\circ k$-$\AND\circ \OR$ circuit $F$? By our switching lemma, $F|_\rho$, with high probability will simplify to a  $k$-$\OR\circ \dt_t$ circuit. We can then unravel the decision trees into $\OR_{2^t}\circ \AND_t$ CNFs, resulting in a $k$-$\OR\circ \OR_{2^t}\circ \AND_t$ circuit. Here, we reach an issue: a $k$-$\OR\circ \OR$ circuit is not necessarily itself a $k$-$\OR$ function! We could have up to $(k-1)2^t$ input bits of a $k$-$\OR\circ\OR_{2^t}$ be 1 while still evaluating to $0$ (set all $2^t$ bits of $k-1$ of the bottom depth $\OR$s to be 1). The best we can do is say the function is in $\g((k-1)2^t+1)$, which is too large of a blowup in the ``$k$'' parameter for our switching lemma to handle.\

We rewind a bit to our $k$-$\OR\circ \dt_t$ circuit and unravel the bottom-layer trees to $\OR_{2^t}\circ \AND_t$ DNFs by enumerating over 1-paths. But we now make the key observation about each DNF which follows from the fact any assignment uniquely defines a path on a decision tree: \emph{any assignment of $x$ makes at most $1$ $\AND_t$ clause true}. To use more standard terminology, the DNF created from decision trees is \emph{unambiguous}. This means the pathological case above of all clauses under $k-1$ $\OR_{2^t}$ gates being satisfied cannot happen. In fact, we can prove something stronger. Since at most one clause under each $\OR_{2^t}$ gate can be satisfied in the unraveled $k$-$\OR\circ \OR_{2^t}\circ \AND_t$ circuit, the number of middle layer $\OR_{2^t}$ clauses that are satisfied will be precisely the total number of bottom layer $\AND_t$ clauses that are satisfied. Hence, a $k$-$\OR$ over the $\OR$ gates is exactly the same as a $k$-$\OR$ over the $\AND_t$ clauses themselves, and we can indeed collapse to a $k$-$\OR\circ \AND_t$ circuit! This gets us our depth reduction. A slightly more involved argument is carried out to show the more general $\g(k)\circ \dt_t$ circuit can be calculated by a $\g(k)\circ \AND_t$ circuit, but the heart of the argument is captured in the $k$-$\OR$ case itself.

\subsection{Putting It All Together}

We now have all the ingredients to simplify $\gt_d^0(k)$ circuits. The argument will be the following inductive process, where we are effectively inducting on circuits of the form $\gt_d^0(k)\circ \{\AND,\OR\}_w$ rather than $\gt_d^0$ directly. Given a $\gt_d^0(k)$ circuit, \begin{enumerate}
\item Add a trivial $(d+1)$-st layer at the bottom that is simply the identity gate (think of it as an $\AND_1$ gate)
\item By the multi-switching lemma, we know the depth 2 $\g(k)\circ \{\AND,\OR\}$ subcircuits simplify to $\dt_t$ trees with high probability, resulting in a $\gt_{d-1}^0(k)\circ \dt_t$ circuit.
\item By the depth reduction lemma, each of the bottom depth 2 $\g(k)\circ \dt_t$ subcircuits can be calculated by a $\g(k)\circ \{\AND_t,\OR_t\}$ circuit, resulting in a $\gt_{d-1}^0(k)\circ \{\AND,\OR\}$ circuit.
\item The depth has reduced by 1, so we go back to Step 2 and induct.
\end{enumerate}

This argument allows us to use the multi-switching lemma along with the depth reduction lemma to establish size lower bounds for $\gt_d^0(k)$ bounds. We show a formal argument of this outline in \Cref{sec:switch}.

\section{Preliminaries}

\subsection{Notation}
$[n] = \{1,2,\dots, n\}$ denotes the set of the first $n$ positive integers. $\binom{[n]}k$ denotes the set of all size $k$ subsets of $[n]$. $\log$ is assumed to be in base 2. This paper concerns \emph{constant}-depth circuits, and so the depth variable, $d$, should be treated as a constant. In particular hidden constants in $O(\cdot)$ or $\Omega(\cdot)$ may depend on $d$. For $S\subset[n]$, we denote $x^S = \prod_{i\in S}x_i$.

\subsection{Random Restrictions and Partial Assignments}
A \emph{partial assignment} or \emph{restriction} is a string $\rho\in \{0,1,\star\}^n$. Intuitively, a $\star$ represents an index that is still ``alive'' and hasn't been fixed to a value yet.

\  An alternative way of defining a restriction is by the set of alive variables and a ``ground assignment'' string. Given a ``$\star$ set'' $\Lambda$ and a ground assignment $z\in\bits^n$, we define $\rho(\Lambda, z)$ to be the partial assignment where we assign \[\rho(\Lambda, z)_i = \begin{cases}\star & i\in \Lambda \\ z_i & i\notin \Lambda\end{cases}\] Sometimes, $\Lambda$ may be in the form of an indicator $\bits^n$ string, where the set is defined to be the set of indices containing a 1.\ 

We also define a composition operation on partial assignments. For two restrictions $\rho^1,\rho^2$, define $\rho^1\circ \rho^2$ so that \[(\rho^1\circ\rho^2)_i =\begin{cases} \rho^1_i & \rho^1_i\neq \star \\ \rho^2_i & \rho_i^1 = \star.\end{cases} \] Intuitively, one can see this as fixing bits determined by $\rho^1$ first, and then out of the remaining alive positions, fix them according to $\rho^2$.\

A \emph{random restriction} is simply a distribution over restrictions. A common random restriction we will use is $R_p$, the distribution where each index will be assigned $\star$ with probability $p$, and $0,1$ each with probability $\frac{1-p}2$.

The main reason for defining restrictions is to observe their action on functions. Given a restriction $\rho$ and function $f:\bits^n\to \bits$, we define $f|_{\rho}: \bits^n\to\bits$ to be the function mapping $f|_{\rho}(x) := f(\rho\circ x)$. 

\subsection{Models of Computation}

\subsubsection*{Circuits}
 We measure the size of a circuit by the total number of wires (including input wires) in it. We define the width of a DNF or CNF to be the maximum number of variables in any of its clauses. We also use $k$-DNF (resp. $k$-CNF) to denote DNF (resp. CNF) of width at most $k$.
$\ac_d$ are depth $d$ circuits with unbounded fan-in whose gate set is $\{\AND, \OR, \NOT\}$. In general, if we have a gate $G$, a subscript $G_k$ will refer to its fan-in (in this case, $G$ is fixed to have fan-in $k$).
 We now define more general circuit classes that we analyze in this work.

\begin{defn}[$k$-$\OR/k$-$\AND/\ac_d(k)$] 
Define $k$-$\OR_m: \{0,1\}^m\to \{0,1\}$ to be a function that evaluates to $1$ iff $x$ contains $\ge k$ ones. Analogously define $k$-$\AND_m$ to be $0$ iff $x$ contains $\ge k$ zeros. Define $\ac_d(k)$ to be the class of depth $d$ circuits with unbounded fan-in whose gate set is $\{k'$-$\AND, k'$-$\OR, \NOT\}$ for all $k'\le k$.
\end{defn}

In more generality, we define $\g(k)$ gates and $\gt_d^0(k)$ circuits.

\begin{defn}[$\g{(k)}/\gt^0{(k)}$]
Define a gate set $\g(k)$ to be the set of all arbitrary fan-in gates such that they are constant on inputs with $\ge k$ ones (we call such gates \emph{orlike}) or are constant on inputs with $\ge k$ zeros (we call such gates \emph{andlike}). $\gt^0(k)$ is the class of constant depth circuits made by $\g(k)$ gates. 
\end{defn}

In the rest of the paper, we may write circuit classes $\gt_d^0(k)\circ\{\AND,\OR\}$ or $\g(k)\circ\{\AND,\OR\}$. In the literature, this usually refers to the circuit class whose gates above the bottom layer are in $\g(k)$, and whose bottom layer gates can either be $\AND$ or $\OR$ with no restriction on the choice. However, in this paper, assume this notation implicitly restricts $\AND$ gates to only be under orlike $\g(k)$ gates and $\OR$ gates to only be under andlike $\g(k)$ gates. \ 

On top of being an alternate generalization of $\AND/\OR$ gates, $\g(k)$ gates capture arbitrary $\ltf$s that are ``unbalanced'' in some sense. We will use the $\{\pm 1\}$ bits to define these, but one can convert between $\{0,1\}$ and $\{\pm 1\}$ via the map $b\to (-1)^b$.

\begin{defn}[Balance of an $\ltf$/$\tc^0(k)$]
Consider an arbitrary $\ltf$ $f:\{\pm 1\}\to \{\pm 1\}$ with $f(x) = \text{sgn}(\sum w_i x_i - \theta)$. Let $\sigma: [n]\to [n]$ be a permutation ordering $(w_i)$ such that $|w_{\sigma(1)}|\le \dots\le |w_{\sigma(n)}|$. Define the \emph{balance} of $f$ (denoted as $\text{bal}(f)$) to be the smallest integer $k$ such that $-\sum_{i\le k} |w_i| + \sum_{i > k}|w_i| < |\theta|$. Now denote $\tc^0(k)$ to be the class of constant depth circuits made out of $\thr$ gates with balance $\le k$. 

\end{defn}

We prove that up to negations in the inputs and output, $\thr$ gates with balance $k$ are in $\g(k)$ in the appendix (\Cref{thm:thrgk}). All results in this paper hold for $\tc^0(k)$, but from now on, we will only refer to $\gt^0(k)$ as it is the more general class.

\subsubsection*{Decision Trees}
We assume knowledge of decision trees (see Definition 3.13 in \cite{odonnel} for a reference). We will be using slightly more complex models of decision trees in this work.

\begin{defn}[Partial Decision Trees]
    For a collection of functions $\calF = \{F_1,\dots, F_m\}$, we say $\calF$ can be computed by an $r$-partial depth-$t$ $\dt$ if there exists a singe depth $r$ tree such that for all $F_i$ and paths $\pi$ of $T$, $F_i|_\pi$ can be computed by a depth $t$ decision tree (here, $F|_\pi$ is $F$ acted on by the restriction induced by taking path $\pi$ down $T$). 
\end{defn}

\begin{defn}[$(d,\calC)$-tree]
    Let $d$ be an integer and $\calC$ a computational model (e.g. a circuit class). A function is computable by a $(d,\calC)$-tree if it is computable by a depth $t$ decision tree with $\calC$ functions as its leaves. That is, there exists a depth $d$ decision tree $T$ such that for every path $\pi$ in $T$, $F|_\pi\in \calC$.
\end{defn}

\subsection{Pseudorandomness and Probability}

We will use various pseudorandom primitives and terminology. We will use $U_n$ to denote the uniform distribution over $n$ bits unless specified otherwise.

\begin{defn}[$\eps$-error PRG/Seed Length]

A distribution $D$ over $\bits^n$ is called an $\eps$-error PRG for a computational model $\calC$ if for all $C\in\calC$, \[|\E_{x\sim U_n}[C(x)]-\E_{x\sim D}[C(x)]|\le \eps\] The seed length $s$ of $D$ is defined to be the minimal quantity $s$ such that the following is true: there exists a polytime computable function $G: \bits^s\to\bits^n$ such that the distribution of $G(z)$ over $z\sim U_s$ is exactly $D$.
    
\end{defn}

\begin{defn}[$(\eps, k)$-wise independent source]

A distribution $D$ over $\bits^n$ is an $(\eps,k)$-wise independent source if for all $1\le i_1 < \dots < i_k\le n$ and $\alpha\in\bits^k$, \[|\Pr_{x\sim D}[x_{i_1}x_{i_2}\dots x_{i_k} = \alpha] - 2^{-k}| < \eps.\]  
    
\end{defn}

There exists constructions of these sources with seed length $O(\log\log n + k + \log(1/\eps))$ \cite{aghp}.

\begin{defn}[$k$-wise Independent Hash Family]
Let $\calH$ be a distribution over hash functions mapping $\bits^n\to\bits^m$. We say that $\calH$ is $k$-wise independent if for any $k$ input-output pairs $(x_1,y_1),\dots, (x_k, y_k) \in \bits^{n}\times \bits^m$ where $x_1,\dots, x_t$ are distinct, 
it holds that
\[
\Pr_{h\sim \calH} [ \forall i\in [k], h(x_i) = y_i ]  = 2^{-km}.
\]

\end{defn}

Such functions can be sampled using $O(k(n+m))$ bits (Chapter~3.5.5 of \cite{vad12}).

\begin{defn}[$k$-wise $p$-bounded Subset]
Let $\Lambda$ be a random subset of $[n]$. $\Lambda$ is a $k$-wise $p$-bounded subset iff for all subsets $S\subset[n]$ of size $\le k$, $\Pr_{\Lambda}[S\subset \Lambda]\le p^{|S|}$. 
\end{defn}
For example, $R_p$ is $n$-wise $p$-bounded.  

\subsection{Fourier Analysis}

Every Boolean function $f: \{\pm 1\}^n \to\{\pm 1\}$ has a unique representation as a multilinear real polynomial \[f(x) = \sum_{S\subset [n]} c_Sx^S.\]
Given $f$, we can think of the Fourier transform of $f$, $\widehat{f}$ to be a function mapping $2^{[n]}\to \R$ such that $\widehat{f}(S) = c_S$. This is well defined by the uniqueness of the polynomial representation of $S$. One can explicitly compute $\hat{f}(S) = \E_x[f(x)x^S]$. By Parseval's, one can derive $\sum_{S\subset[n]}\widehat{f}(S)^2 = 1$. There are various quantities involving the Fourier coefficients that we will work with.

\begin{defn}[Fourier Tails]
For a Boolean function $f$, define \[W^{\ge k}[f] := \sum_{|S|\ge k}\widehat{f}(S)^2.\]
    
\end{defn}

\begin{defn}[Discrete Derivative/Influence]

For Boolean $f$ and $i\in [n]$, define the discrete derivative \[D_if(x) = \frac{f(x^{(i\to 1)})-f(x^{(i\to -1})}2\] where $x^{(i\to b)} = (x_1,\dots, x_{i-1},b,x_{i+1},\dots, x_n)$. Now for $S\subset[n]$ with $S = \{i_1,\dots, i_k\}$, define \[D_Sf = D_{i_1}D_{i_2}\dots D_{i_k}.\] Now for $S\subset[n]$, define the influence \[\text{Inf}_S(f) = \E_{x\sim \{\pm 1\}^n}[D_Sf(x)^2].\] Finally, define the degree $k$ influence \[\text{Inf}^k(f) = \sum_{|S| = k}\text{Inf}_S(f).\]
    
\end{defn}

\section{Simplification Theorem of $\gt_d^0(k)$ Circuits}
\label{sec:switch}
\begin{thm}
\label{thm:switch}
Let $F$ be computable by a depth-2 $\g(k)\circ\{\AND,\OR\}_w$ circuit. Let $\Lambda$ be a $(t+w)$-wise $p$-bounded subset of $[n]$, and $x$ a uniform string. Then  \[\Pr_{\Lambda, x}[\dt(F|_{\rho(\Lambda, x)})\ge t]\le (20pw)^t2^k. \]

\end{thm}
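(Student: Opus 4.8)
The plan is to follow the witness/transcript framework of Lyu adapted to the extra generality of a $\g(k)$ top gate, exactly as sketched in \Cref{sub:switch}. Write the restriction as $\rho = \rho(\Lambda, x)$ and work with $x$ a uniform ground assignment together with the star-set $\Lambda$. First I would pin down the \emph{canonical decision tree} (CDT) for $F = G \circ \{\AND,\OR\}_w$: scan the bottom clauses left to right, keeping a counter; whenever a clause is currently ambiguous, query all (at most $w$) of its live variables and update the counter; terminate and output a constant as soon as the counter certifies the value of the orlike (resp. andlike) gate $G$ — this happens after at most $k$ satisfied clauses have been seen. If the CDT never terminates early then $F|_\rho$ is actually computable by a decision tree of depth $< t$ on the remaining variables, so it suffices to bound the probability that the CDT queries $\ge t$ variables. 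I would then define the \emph{transcript} of a bad $\rho$ to record the sequence of indices $(\ell_i)$ of clauses the CDT stopped at, together with a set $P$ encoding, for each such clause, which of its $\le w$ variables were live and what values $x$ gave them (following Lyu's encoding so that $P$ has ``size'' $\Theta(t)$ in the right sense). The union bound gives $\Pr_\rho[\dt(F|_\rho)\ge t] \le \sum_{(\ell_i, P)} \Pr_\rho[(\ell_i,P)\text{ is a transcript for }\rho]$.

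The two workhorses are then: (1) given $\rho$ and a \emph{partial} transcript $P$, the completing list $(\ell_i)$ is unique, so the sum collapses to $\sum_P \Pr_\rho[P\text{ is a partial transcript for }\rho]$; and (2) a transcript searcher $\calS$ that, on input $(x, P)$, reconstructs a candidate $(\ell_i, P)$ by re-simulating the CDT on the partial information, such that for every good $\rho$ one has $\Pr_x[\calS(x,P)\text{ is a partial transcript for }\rho] \ge \gamma$ for a constant $\gamma$ bounded below by an absolute constant (Lyu gets $\gamma \ge$ const; here I would check $k$ does not degrade it). Combined with the observation that a transcript forces $\ge t$ specified coordinates to lie in $\Lambda$ — an event of probability $\le p^{\,t}$ by $(t+w)$-wise $p$-boundedness, since the total number of coordinates involved is $\le t + w$ — this yields $\Pr_\rho[P\text{ is a partial transcript}] \le p^{\,t}/\gamma$ via the conditioning trick in \eqref{eqn:3}. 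It remains to count partial transcripts.

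The one genuinely new ingredient relative to the $\ac$ case — and the step I expect to be the main obstacle — is controlling the number of partial transcripts $P$ when the top gate is an arbitrary $\g(k)$ gate rather than an $\OR$. With an $\OR$ top gate the CDT stops at the \emph{first} satisfied clause, so a transcript of ``weight'' $t$ decomposes into runs that are naturally charged against the $t$ queried variables, giving $\sim w^t \binom{\text{stuff}}{\cdot}$-type counts that reorganize to $(O(w))^t$; with a $\g(k)$ gate the CDT may pass through up to $k$ satisfied clauses before terminating, so the transcript carries $\le k$ extra ``checkpoint'' clause-indices whose positions are not charged to queried variables. The fix is the one flagged in \Cref{subsub:comparison}: design $P$ so that these at most $k$ extra records cost only a factor $2^{O(k)}$ overall (e.g.\ a subset of $\le k$ marked positions among the recorded clauses), and absorb everything else into the $(O(w))^t$ and $p^t$ factors. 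Carrying out the bookkeeping so the final count is $\#\{P\} \le \gamma^{-1}(O(w))^t \cdot 2^k$ and then choosing the $O(\cdot)$ constant to be $20$ (using $p^t/\gamma \cdot (O(w))^t 2^k \le (20pw)^t 2^k$, valid since the per-step constants multiply into something below $20^t$) finishes the proof. The only place $k$ enters the final bound is exactly this $2^k$ multiplicative slack, matching the statement.
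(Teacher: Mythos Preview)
Your outline is essentially the paper's proof: the CDT with a counter that tolerates up to $k$ satisfied clauses, the witness/partial-witness decomposition, uniqueness of the completion $(\ell_i)$ given $(\rho,P)$, the witness searcher, and the observation that the extra $\le k$ ``zero-query'' stops contribute exactly the $2^k$ factor in the partial-witness count. All of that matches.

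There is one concrete imprecision worth correcting: the searcher does \emph{not} achieve a constant $\gamma$, neither here nor in Lyu. In both settings the conditional success probability is exactly $\gamma = 2^{-s}$, where $s = \func{size}(P)\in[t,t+w-1]$ is the total number of queried variables recorded by $P$ (the searcher succeeds iff the ground string $z$ happens to satisfy each of the $s$ pinpointed literals). Consequently the per-$P$ bound is $(2p)^{s}$, not $p^t/\text{const}$, and the final accounting must be done size-by-size: for fixed $s$ the number of partial witnesses is at most $(8w)^s 2^k$ (the $2^k$ arising because the composition $(s_1,\dots,s_r)$ may contain up to $k$ zeros, inflating the count of ordered compositions by $2^k$), so the contribution at size $s$ is $(16pw)^s 2^k$, and summing over $t\le s\le t+w-1$ gives $(20pw)^t 2^k$. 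If you instead pair a uniform $p^t$ with the total count $\sum_s (8w)^s 2^k$ you pick up an extraneous $(8w)^w$ factor that ruins the bound; tracking $s$ and pairing $p^s$ with $(O(w))^s$ is essential.
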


\begin{proof}
The proof will follow that of Section 5 in \cite{Lyu}. We urge the reader to first read the overview given in \Cref{sub:switch}. As discussed there, the main differences between this proof and the one presented there are present in the constructions of the canonical decision tree and witness searchers, the definition of witnesses, and the counting of partial witnesses. These are altered to support the more general $\g(k)$ gates. Besides this, the general proof strategy remains the same. Let $m$ be the fan-in of the $F$. We present a procedure that constructs a decision tree (which we deem the ``Canonical Decision Tree'').
\RestyleAlgo{ruled}

\SetKwComment{Comment}{/* }{ */}
\SetKwBlock{init}{initialize:}{}
\SetKwInput{kwInput}{Input}
\SetKwInput{kwReturn}{return}
\begin{algorithm}[hbt!]
\caption{Canonical Decision Tree}\label{alg:one}
\kwInput{$(\text{orlike }G(k))\circ \AND_w$ circuit $F = G(C_1,\dots, C_m)$, black-box access to a string $\alpha\in\{0,1\}^n$.}
\init{$j^*\gets 0$ \\ $x\gets (\star)^n$ \\ $ctr\gets 0$}
\While{$j^* < m$}{
    Find the first $j>j^*$ such that $C_{j}(x)\not\equiv 0$. If no such $j$ exists, exit the loop. \\ 
    $B_{j}\gets$ the set of unknown variables in $C_{j}(x)$ (may be empty).\\
    Query $\alpha_{B_{j}}$.\\
    Set $x_{B_{j}}\gets \alpha_{B_{j}}$.\\
  \If{$C_{j}(x) = 1$}{
    $ctr\gets ctr + 1$\;
    \If{$\ctr = k$}{
        \textbf{return} $G(1^m)$} 
  }
  $j^*\gets j$\\
  }

  \textbf{return} $F(x\circ 0^n)$.

\end{algorithm}

The difference between the CDT defined in \cite{Lyu} and the one presented here is the use of $ctr$. Intuitively, this is added in to keep track of the number of satisfied clauses we see before we reach our limit of $k$. We rigorously prove this in the following claim.

\begin{claim}
The CDT correctly outputs $F(\alpha)$.
\end{claim}

\begin{proof}
The CDT scans the clauses in order to find the first one not fixed to zero. There are only two return statements in the algorithm, so we consider the two cases of terminating on each one. Suppose we terminate at the first return statement and output $F(1^m)$. Notice $ctr$ is incremented each time the CDT encounters a satisfied clause. Therefore, when $ctr=k$, at least $k$ $C_i$ evaluate to 1, and therefore $F(C_1,\dots, C_m) = F(1^m)$ by virtue of $F\in\g(k)$ being orlike, proving correctness. Now suppose we terminate after the while loop and output $C(x\circ 0)$. If CDT finishes the while loop without terminating, that must mean all clauses must be determined by the partial assignment $x$. This is because for any clause $C_{j'}$, if the clause wasn't already determined in the algorithm when $j^* = j'$, all unknowns of $C_{j'}$ would have been queried and fixed in the partial assignment $x$, thereby determining it. Therefore, in this case, $C(x)$ is determined, and in particular is equal to $C(x\circ 0^n)$. 

\end{proof}

Therefore, CDT is indeed a decision tree computing $F$. Define $\func{CDT}(F)$ to be the depth of the canonical decision tree $T_{F}$.If $\rho$ is bad (i.e. $\dt(F|_\rho)\ge t)$, then clearly $\func{CDT}(F|_\rho)\ge \dt(F|_\rho)\ge t$ and so $T_{F|_\rho}$ will result in at least $t$ queries for $\emph{some}$ choice of $\alpha$ (this is equivalent to saying that $\emph{some}$ path of $T_{F|_{\rho}}$ must have length $\ge t$). We define a witness that will effectively be the transcript of the algorithm on this particular $\alpha$.

\begin{defn}
\label{defn:wit}
Let $F$ be the circuit described above and $\rho$ a restriction. Let $t\ge 1$. Consider the tuple $(r,\ell_i,s_i,B_i,\alpha_i)$ where 
\begin{itemize}
\item $r\in [1,t+k]$ is an integer
\item $(\ell_1,\dots, \ell_r)\in [m]^r$ is an increasing list of indices
\item $(s_1,\dots, s_r)$ is a list of non-negative integers, at most $k$ of which are allowed to be $0$, such that $s:= \sum_{i=1}^r s_i\in [t,t+w-1]$
\item $(B_1,\dots, B_r)$ is a list of (potentially empty) subsets of $[w]$ satisfying $|B_i| = s_i$.
\item $(\alpha_1,\dots, \alpha_r)$ is a list of (potentially empty) bit strings satisfying $|\alpha_i| = s_i$.
\end{itemize}

$(r,\ell_i,s_i,B_i,\alpha_i)$ is called a $t$-witness for $\rho$ if there exists an $\alpha\in\{0,1\}^n$ such that 

\begin{itemize}
    \item When we run $T_{F|_\rho}$ on $\alpha$, $C_{\ell_i}$ is the $i$-th term queried by $T_{F|_\rho}$.
    \item $T_{F|_\rho}$ queries $s_i$ variables in $C_{\ell_i}$, and the relative location of those variables within $C_{\ell_i}$ are specified by set $B_i$.
    \item $T_{F|_\rho}$ receives $\alpha_i$ in response to its $i$-th batch query.
\end{itemize}

The \emph{size} of the witness $(r,\ell_i,s_i,B_i,\alpha_i)$ is defined to be $s:=\sum_{i=1}^r s_i$. We may denote the size of a witness $W$ as $\func{size}(W)$.
\end{defn}

\begin{claim}
\label{claim:wit}
    For every $\rho$ such that $\dt(F|_{\rho})\ge t$, there exists a $t$-witness for $\rho$
\end{claim}

\begin{proof}
We simply run $T_{F|_{\rho}}$ on $\alpha$ that causes at least $t$ queries to be issued. We then record the transcript \emph{until the number of variables queried exceeds} $t$, after which we halt. To be more explicit, we let $r$ be the number of times the CDT stops at a clause before either outputting a bit or exeeding $t$ queries. At the $i$th stop, say on clause $C_j$, we set $\ell_i = j$, $s_i$ to be the number of unknown variables queried (which may be 0), $B_i$ to be the subset of $[w]$ indicating the relative positions of the variables in the clause, and $\alpha_i$ being the query replies received from the black-boxed $\alpha$. We can first verify this creates a valid tuple.

\begin{itemize}
    \item $r\in [1,t+k]$. Every time we stop at a clause, either it evaluates to 1 and we increment $ctr$, or we have to query at least 1 variable. Since $ctr$ can incremented at most $k$ times and we can query variables from at most $t$ clauses before reaching our quota of $t$ queried variables, it follows we stop at most $t+k$ times.
    \item $(\ell_1,\dots, \ell_r)\in [m]^r$ is an increasing list of indices since the CDT linearly sweeps the clauses in increasing index order.
    \item $(s_1,\dots, s_r)$ is a list of non-negative integers, at most $c$ of which are allowed to be $0$, such that $s:= \sum_{i=1}^r s_i\in [t,t+w-1]$. A particular $s_i$ being zero implies that the CDT's $i$th stop was at a clause $C_j$ that was already determined to be 1, and hence $ctr$ was incremented. Since $ctr$ can be incremented at most $k$ times, at most $k$ of the $s_i$'s are zero. $s$ is the total number of variables queried before halting. Notice after the penultimate clause is queried, there are $ < t$ clauses queried. Consequently when the ultimate clause is queried, there clearly will be $<t+w$ variables queried, since $k$ is the width of the clause. Of course, since the transcript halted after this clause, $\ge t$ variables had to be queried.
    \item $(B_1,\dots, B_r)$ is a list of (potentially empty) subsets of $[w]$ satisfying $|B_i| = s_i$ trivially by construction.
    \item $(\alpha_1,\dots, \alpha_r)$ is a list of (potentially empty) bit strings satisfying $|\alpha_i| = s_i$ trivially by construction.
\end{itemize}

We can then easily see by construction of the tuple, it is indeed a $t$-witness for $\rho$.
\end{proof}
We note that the difference between Definition 4 in \cite{Lyu} and the one here is the relaxation to allow $(s_1,\dots, s_r)$ to contain up to $k$ zeros, rather than to all be positive. As evident in the proof, this is to handle cases the CDT encounters a clause that was already fixed to 1, which causes the corresponding $s_i$ value to be $0$. This wasn't recorded in Lyu's witness definition because in the case of CNFs, one satisfied clause determines the value of the circuit, and the CDT immediately halts. Why do we still record that the CDT didn't query any variables at a clause instead of just ignoring this behavior and moving on? It turns out if we don't include this piece of information, the witness searcher we create will not have enough information to reconstruct the whole witness (see the ``balancing act'' discussion in \Cref{subsub:comparison}).

We now move on to define partial witnesses.

\begin{defn}
\label{defn:partwit}
Let $F$ be a circuit and $\rho$ a restriction. We call $(r,s_i,B_i,\alpha_i)$ a partial $t$-witness for $\rho$ if there exists $(\ell_1,\dots,\ell_r)$ such that $(r,\ell_i,s_i,B_i,\alpha_i)$ is a $t$-witness for $\rho$.
\end{defn}

We note the following important claim.

\begin{claim}
\label{claim:unique}
If $P$ is a partial witness for $\rho$, then there exists exactly one list of integers $(\ell_i)$ such that $(\ell_i,P)$ is a witness for $\rho$.
\end{claim} 
\begin{proof} 
By construction of \Cref{alg:one}, $\ell_1$ must be be the index of the first clause not fixed to $0$ by $\rho$.  But now, we notice $\ell_2$ must be the index of the first clause after $C_{\ell_1}$ not fixed to $0$ by $\rho\circ \alpha_1$. We then continue this induction to get our unique list $(\ell_i)$, Where $\ell_j$ will be forced to be the index of the first clause after $C_{\ell_{j-1}}$ that is not fixed to $0$. 
\end{proof} 

Therefore, by Claims \ref{claim:wit} and\ref{claim:unique}
\begin{align}\Pr_\rho[\dt(F|_\rho)\ge t]\le \sum_{(\ell_i,P)} \Pr_\rho[(\ell_i,P)\text{ is a $t$-witness for }\rho]\le \sum_P \Pr_\rho[P\text{ is a partial $t$-witness for }\rho]]\label{eqn:4}\end{align} where $P$ ranges over all partial $t$-witness tuples.

Going back to our proof, we now define our \emph{witness searcher} ${\cal S}$.
\clearpage

\SetKwComment{Comment}{/* }{ */}
\SetKwBlock{init}{initialize:}{}
\SetKwInput{kwInput}{Input}
\SetKwInput{kwReturn}{return}
\begin{algorithm}[hbt!]
\caption{Witness Searcher ${\cal S}$}\label{alg:two}
\kwInput{(orlike $\g(k))\circ \AND_w$ circuit $F(C_1,\dots, C_m)$, ground assignment $z\in\{0,1\}^n$, partial witness $W = (r,s_i,B_i,\alpha_i)$ .}
\init{$j^*\gets 0$ \\ $x\gets (\star)^n$ \\ $ctr\gets 1$}
\While{$ctr\le r$}{
    \While{$j^* < m$}{
    Find the first $j>j^*$ such that $C_{j}(z)\equiv  1$. If no such $j$ exists, exit the inner while loop. \\ 
    $\ell_{ctr}\gets j$\\
    Query $\alpha_{B_{j}}$.\\
    Set the $B_{ctr}$ portion of $z$ to be $\alpha_{ctr}$.\\
    $ctr\gets ctr + 1$\\
  $j^*\gets j$\\
  }
  }
  
  \textbf{return} $(\ell_i, W)$

\end{algorithm}

We now prove the following essential property about ${\cal S}$, stating in a probabilistic way, it can use a partial witness to reconstruct a total witness for $\rho$.

\begin{lem} 
\label{lem:search}
Let $P$ be a partial witness, and let $s$ be its size. Define a restriction $\rho$ to be \emph{good} for $P$ if $P$ is a partial $t$-witness for it.
\[\Pr_{z}[{\cal S}(z, P)\text{ is a $t$-witness for $\rho(\Lambda,z)$} | \text{$\rho(\Lambda, z)$ is good for }P] = 2^{-s}\]

Furthermore this event is solely dependent on $z_I$, where $I$ is the set of variable indices referred to by the unique completion of $P$ with respect to $\rho$.
\end{lem}

\begin{proof}
If $\rho = \rho(\Lambda,z)$ is good, then by \Cref{claim:unique} we know there exists unique $\ell_i$ such that $(\ell_i, P)$ witnesses $\rho$. In particular, we know that $\Lambda$ must contain all the indices $I$ that $(\ell_i, B_i)$ identify. Let $I_j$ be the index set identified by $\ell_j$ and $B_j$ (so $I = I_1\sqcup \dots \sqcup I_r$). Now condition on a fixed $\rho$. This means all bits in $z$ not covered by $\Lambda$ are fixed. In particular, only source of randomness left are the bits covered by $\Lambda$, which is a superstring of $z_I$.  We now claim every $z_{I_j}$ is assigned the unique bit string such that $C_{\ell_i}|_{z_{I_j}}\not\equiv 0$ (not forced to be unsatisfied) iff ${\cal S}$ successfully outputs $(\ell_i, P)$. This consequently proves the lemma, since this has probability $2^{-|I|} = 2^{-s}$ of happening.\ 

By construction of $T_{F|_\rho}$ we know that all clauses before $C_{\ell_1}$ was falsified by $\rho$. Upon inspection, we see ${\cal S}$ correctly skips past these clauses (as $z$ is a completion of $\rho$). Now we note $C_{\ell_1}$ was not fixed to $0$ $\rho$, causing $T_{F|_\rho}$ to query all unknowns in $C_{\ell_1}$ at the time (which might be nothing if $C_{\ell_1}$ was fixed to 1), which is $x_{I_1}$. Inspecting ${\cal S}$, we see ${\cal S}$ will set the correct $\ell_1$ iff $C_{\ell_1}(z)$ is satisfied iff $z_{I_1}$ is assigned the unique string such that $C_{\ell_i}|_{z_{I_j}}\not\equiv 0$ (since all variables outside $I_1$ ocurring in $C_{\ell_1}$ is fixed by $\rho$). ${\cal S}$ then (importantly) replace $z_{I_1}$ with $\alpha_{I_1}$ so that all variables encountered thus far are assigned exactly as $T_{F|_\rho}$ did. \ 

We then repeat this argument $r$ times, noting that due to $z\circ \alpha_{I_1}\circ\dots\circ\alpha_{I_j}$ being a completion of $\rho\circ \alpha_{I_1}\circ\dots\circ \alpha_{I_j}$, ${\cal S}$ rightfully skips all clauses between $C_{\ell_j}$ and $C_{\ell_{j+1}}$. We then similarly argue that ${\cal S}$ will set $\ell_{j+1}$ to be the $j+1$st clause $T_{F|_\rho}$ queries iff $z_{I_{j+1}}$ is the unique string such that $C_{\ell_{j+1}}|_{z_{I_{j+1}}}\not\equiv 0$.

\end{proof}

Combining \Cref{eqn:4} and \Cref{lem:search}, it follows that \begin{align}\Pr_\rho[\dt(F|_\rho)\ge t] & \le \sum_P \Pr_\rho[P\text{ is a partial $t$-witness for }\rho] \nonumber\\ & \le \sum_P \E_{\Lambda}\frac{\Pr_z[{\cal S}(z,P)\text{ is a $t$-witness for }\rho(\Lambda, z)]}{\Pr_z[{\cal S}(z,P)\text{ is a $t$-witness for }\rho(\Lambda, z)| \rho(\Lambda, z)\text{ is good}]} \nonumber \\ & \le  \sum_P 2^{\func{size}(P)}\E_z\Pr_\Lambda[{\cal S}(z,P)\text{ is a $t$-witness for }\rho(\Lambda, z)] \label{eqn:nump}\end{align} Notice a necessary condition for a restriction $\rho(\Lambda, z)$ to be $t$-witnessed by a size-$s$ $W$ is for $\Lambda$ to cover the $s$ variables that $W$ recorded as the CDT needing to query, which happens with probability $\le p^s$ (as $s\le t+w$ and $\Lambda$ is $(t+w)$-wise $p$-bounded). Hence, every term in the sum in (\ref{eqn:nump}) can be bounded by $p^{\func{size}(P)}$ and it remains to find the number of partial $t$-witness tuples $P$. \ 

For a fixed $s$, we can bound the number of potential partial witnesses naively by noting 
\begin{itemize}
    \item the number of choices of $(r,s_i)$ can be bounded by the number of ways to write $s$ as the sum of at most $s+k$ nonnegative integers, which is $\sum_{r=1}^{s+k}\binom{s+r}r\le\sum_{r=1}^{s+k}\binom{2s+k}r \le 2^{2s+k}$ (notice that we get a larger count here than the analogous quantity of $2^{2s}$ in Lyu's proof \cite{Lyu}, which is a side effect of looking at a more complicated circuit class),
    \item the choices for $(B_i)$ can be bounded by $\prod_i \binom{w}{s_i}\le w^s$,
    \item and the choices for $(\alpha_i)$ can be bounded by $2^s$,
\end{itemize}   giving a total count of $(8w)^s2^k$. Combining this count with the previous paragraph's observation and (\ref{eqn:nump}), while remembering to sum over all sizes, we derive \[\Pr_{\rho}[\dt(F|_\rho)\ge t]\le \sum_{s=t}^{t+w-1}(2p)^s(8w)^s2^k =\sum_{s=t}^{t+w-1}(16pw)^s2^k\le (20pw)^s2^k.\]
\end{proof}

\begin{remark} One may ask whether the failure probability of $(20pw)^t2^k$ tight. We show that $\pargate_{kw}$ can be expressed as a $\g(k)\circ \AND_w$ cirucit and prove this saturates the above bound in the Appendix (\Cref{apx:tight}).
\end{remark}

After defining witnesses, partial witnesses, and witness searchers for $\g(k)\circ \{\AND,\OR\}$ circuits, we notice that Lyu's proof of the multi-switching lemma directly goes through with these definitions with zero changes (even down to the exact algorithm of the canonical partial decision tree and global witness searcher). Due to this, we defer the proof of the multi-switching lemma to the appendix. However, we do highlight here what properties about the circuit class is needed in order to invoke Lyu's lift from a switching lemma to a multi-switching lemma. The key properties needed were that 

\begin{itemize} \item the number of partial witnesses for a depth $t$ canonical decision tree needed to be small, and 
\item there needed to exist a witness searcher function ${\cal S}$ such that for all $\rho$ and a partial witness for $\rho$, ${\cal S}$ recovers the full witness with decent probability over an advice string.
\item given a complete witness, there needed to be a small chance that a random restriction witnessed it\end{itemize}

\begin{thm}
\label{lem:multiswitch}
Let ${\cal F} = \{F_1,\dots, F_m\}$ be a list of 
$\g(k)\circ \AND_w$ circuits on $\{0,1\}^m$. Then \[ \Pr_{\rho\sim R_p}[{\cal F}|_\rho\text{ does not have $r$-partial depth-$t$ $\dt$}]\le 4(64(2^km)^{1/r}pw)^t \]
\end{thm}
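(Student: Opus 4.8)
The plan is to lift the single-function switching lemma (\Cref{thm:switch}) to the multi-switching setting by following Lyu's template, now that the three structural properties highlighted above have been verified for $\g(k)\circ\AND_w$ circuits. First I would define the \emph{canonical partial decision tree} (CPDT) for the list $\calF = \{F_1,\dots,F_m\}$: process $F_1$ using \Cref{alg:one} until its CDT either terminates or has made $r$ queries; if it terminated, move on to $F_2$ and continue; whenever the total number of live queries across all the sub-CDTs reaches $r$ we stop, having built a depth-$\le r$ tree. The claim is that if $\calF|_\rho$ has \emph{no} $r$-partial depth-$t$ $\dt$, then for every path $\pi$ of this depth-$r$ tree there is some $F_i$ with $\dt(F_i|_{\rho\circ\pi}) \ge t$, which means that along $\pi$ some sub-CDT of some $F_i$ would have had to make $\ge t$ queries — and this is exactly the event we bound. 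More precisely, following a bad path produces a transcript consisting of: (i) the index $i^*$ of the offending circuit together with a record of which earlier circuits were fully resolved at which point along the way, and (ii) a $t$-witness (in the sense of \Cref{defn:wit}) for $F_{i^*}|_{\rho\circ(\text{prefix})}$.

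The second step is to set up the union bound over \emph{partial} transcripts. Analogously to \Cref{claim:unique}, the redundant data (here, the clause indices $(\ell_i)$ inside the witness, and the identities of the resolved circuits) is recoverable from the ground assignment $z$ together with the partial transcript, so I would build a global witness searcher that runs the per-circuit searcher \Cref{alg:two} as a subroutine and outputs the full transcript with probability $2^{-s}$ conditioned on $\rho(\Lambda,z)$ being good, exactly as in \Cref{lem:search}. Chaining this with the $p$-boundedness of $\Lambda$ (so a fixed size-$s$ witness is realized with probability $\le p^s$), we get
\[
\Pr_{\rho\sim R_p}[\calF|_\rho\text{ has no }r\text{-partial depth-}t\ \dt] \le \sum_{\text{partial transcripts }P} p^{\func{size}(P)}.
\]
The remaining work is to count partial transcripts. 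The new ingredient beyond the single-function count $(8w)^s 2^k$ is the bookkeeping for which circuits got resolved: across the $\le s$ stops, the CPDT may advance to a new circuit at most $t/r \cdot (\text{something})$ — more carefully, since each fully-processed circuit contributed a depth-$r$ chunk except possibly the last, the number of circuits touched is at most $\approx t/r$, and each is named by an index in $[m]$, contributing a factor $m^{O(t/r)}$. Multiplying, summing the geometric series over $s\in[t,t+w-1]$, and folding the $2^k$ into $m^{t/r}$ via $(2^k m)^{t/r}$ (using that each of the $\le t/r$ sub-witnesses carries its own factor $2^k$) yields the stated bound $4(64(2^k m)^{1/r}pw)^t$.

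The main obstacle I expect is the accounting in the counting step: getting the number of ``resolved-circuit'' configurations to come out as $(2^k m)^{t/r}$ rather than something like $m^t$ requires the observation that a circuit $F_i$ can only be abandoned (declared resolved) after its sub-CDT has used a full batch of $r$ queries — so the number of circuit-switches is genuinely $O(t/r)$, not $O(t)$ — and simultaneously that each such sub-CDT can contribute at most one factor of $2^k$ (from the $\le k$ zero-length stops inside its own witness). Interleaving these two bounds correctly, while making sure the ``$+w$'' slack in the witness size and the geometric-series constant do not blow up the base beyond $64$, is where the care is needed. I would handle this exactly as Lyu does — indeed the point of isolating the three bullet properties before the theorem statement is precisely that, with \Cref{defn:wit}, \Cref{claim:unique}, \Cref{lem:search}, and the partial-witness count $(8w)^s2^k$ in hand, Lyu's argument goes through verbatim with $m \mapsto 2^k m$ in the base — so the proof is deferred to the appendix and consists of checking that no other property of $\ac$ was used.
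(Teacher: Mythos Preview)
Your bottom line --- run Lyu's multi-switching argument verbatim with the per-circuit witness machinery of \Cref{defn:wit}, \Cref{lem:search}, and the partial-witness count $(8w)^s2^k$ substituted in, so that each of the $\le t/r$ processed circuits contributes its own factor $2^k$ and the $m^{t/r}$ becomes $(2^km)^{t/r}$ --- is exactly the paper's approach, and your counting paragraph is correct.

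Your sketch of the CPDT and global witness, however, is garbled in a way that would not work if executed as written. The canonical partial tree queries up to $t$ variables in total (not $r$); it repeatedly finds the next $F_i$ whose current decision-tree depth still exceeds $r$, runs \Cref{alg:one} on that $F_i$ \emph{to completion} (not capping at $r$ queries), records the answers $\beta_i$, and halts only once the cumulative query count reaches $t$. The bound $R\le t/r$ comes from the fact that each processed $F_i$ was selected because $\dt(F_i|_x)>r$, forcing its sub-witness to have size $\ge r$. Correspondingly, the global witness is a \emph{list} $(R,L_i,S_i,W_i,\beta_i)$ with one full witness $W_i$ per processed circuit and $\sum_i S_i\in[t,t+w]$ --- not a single $t$-witness for one offending $F_{i^*}$ plus bookkeeping for the earlier ones. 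Your later paragraph (``each of the $\le t/r$ sub-witnesses carries its own factor $2^k$'') already uses this correct structure, so the inconsistency is in the exposition rather than the argument itself; just make sure the written-out appendix proof follows Lyu's actual CPDT rather than the version you sketched.
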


\begin{proof}
    See \Cref{apx:multiswitch}.
\end{proof}

\begin{remark}
At this point, we can observe an aspect of the expression that illuminates an important unifying flavor of the rest of the results. Notice that the only difference in the failure probability expression between the standard $\ac$ multi-switching lemma in (lyu) and the above one for $\g(k)$ is that every occurrence of $m$ is multiplied by a factor of $2^k$. This means if constants in the exponent can be ignored, the multi-switching lemma asymptotically gives the same result as the $\ac$ version if $2^k \approx m$. In particular, we can expect any result for size $s$ $\ac$ circuits to immediately extend to $\gt^0(\log s)$ circuits with no loss in parameters! This will be demonstrated in various settings in future sections.
\end{remark}

With our multi-switching lemma in hand, we can simplify depth 2 circuits with high probability. To extend this to constant depth circuits, we also require a depth reduction lemma. In the case of $\ac$, this was trivial enough to embed in the main proof, but in the case of $\gt^0(k)$, we need to be more delicate and use more specific properties of decision trees.

\begin{lem}
\label{lem:depthred}
 Any depth 2 circuit of the form $\g(k)\circ \dt_w$ with top gate fan-in $m$ can be expressed as a circuit in $G(k)\circ \{\AND,\OR\}_w$ of size $m2^{w}$.
\end{lem}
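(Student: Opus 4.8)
The plan is to take a $\g(k)\circ\dt_w$ circuit $F=G(T_1,\dots,T_m)$, where $G\in\g(k)$ has fan-in $m$ and each $T_i$ is a decision tree of depth $\le w$, and replace each tree $T_i$ by a small-width DNF or CNF in a way that preserves the property that $G$ is applied to — namely, the number of satisfied inputs. By the convention adopted in the paper for $\g(k)\circ\{\AND,\OR\}$ circuits, we want $\AND$ clauses under orlike $G$ and $\OR$ clauses under andlike $G$, so I will split into two cases by the type of $G$.

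First I would handle the orlike case. For each tree $T_i$, enumerate its root-to-leaf paths ending in a $1$-leaf; each such path is a conjunction of at most $w$ literals, so $T_i$ is computed by an unambiguous $\OR_{\le 2^w}\circ\AND_{\le w}$ DNF $D_i$ — unambiguous because every input follows exactly one path, hence satisfies exactly one clause of $D_i$. Now here is the key point, exactly as sketched in the depth-reduction overview: since each $D_i$ has at most one satisfied clause on any input $x$, the number of satisfied clauses among $\bigcup_i\{\text{clauses of }D_i\}$ equals the number of indices $i$ with $T_i(x)=1$. Therefore, for any orlike $G$ (which depends on its input only through the count of ones, once that count is $\ge k$), feeding $G$ the list of all $\AND_{\le w}$ clauses from all the $D_i$'s computes the same function as $G(T_1,\dots,T_m)$. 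This yields a $\g(k)\circ\AND_{\le w}$ circuit; the andlike case is dual, expanding each $T_i$ into an unambiguous $\AND_{\le 2^w}\circ\OR_{\le w}$ CNF via $0$-paths and using that at most one clause per tree is falsified. The size bound is immediate: each tree contributes at most $2^w$ clauses, each of fan-in at most $w\le w$, so the total number of wires is $O(m2^w)$, and the replacement top gate still has fan-in bounded by $m2^w$.

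The one technical subtlety I would be careful about is the exact statement ``$G$ applied to a longer list of bits with the same number of ones gives the same value.'' This is true when $G$ is orlike and the count is $\ge k$: both the short list (the $T_i$'s) and the long list (the clauses) have the same number of ones, so if that number is $\ge k$ then $G$ outputs its constant value on both, and if it is $<k$ we need $G$ to still agree — but it does, because $G$'s value is literally determined by (is a function of) the multiset of its input bits only through... wait, that is not automatic. So the honest formulation is: define $G'$ on $m2^w$ bits to output, on input $y$, the value $G(u)$ where $u\in\{0,1\}^m$ is any string with the same number of ones as $y$ — this is well-defined precisely because $G$ is orlike (for weight $\ge k$ it is constant; for weight $<k$ we can pick $u$ to be the specific weight-$(\text{wt}(y))$ prefix, and since $\text{wt}(y)<k$ implies $\text{wt}(u)<k$, consistency just needs $G$ itself to be symmetric on low weights — which it need not be!). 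So the correct fix, which I expect to be the crux, is to not symmetrize blindly: instead observe that the map $x\mapsto(\text{which clauses fire})$ is such that the \emph{set} of fired clauses, grouped by which $D_i$ they came from, recovers exactly the bit-vector $(T_1(x),\dots,T_m(x))$ up to knowing which $i$'s fired; and since within block $i$ at most one clause fires, an orlike gate that first ORs within each block and then applies $G$ works — i.e., the clean statement is really $G\circ\bigl(\OR\circ\AND_w\bigr)^{\times m}$ collapsing, using that $G$ composed with a layer of ORs where each OR sees a disjoint block and at most one input per... no: a $k$-$\OR\circ\OR$ is a $k$-$\OR$ only under unambiguity. For general orlike $G$, I would keep the middle $\OR$ layer conceptually — $G(\OR(D_1\text{'s clauses}),\dots)$ — and then note each such $\OR$ is over a disjoint block with $\le 1$ satisfied clause, so it can be \emph{merged into} the bottom layer only when $G$ is genuinely count-based, which is the $k$-$\OR$ special case; for the general $\g(k)$ statement as written, the honest route is: the circuit $G(D_1,\dots,D_m)$ with each $D_i$ an $\OR\circ\AND_w$ is a depth-3 circuit, and I collapse the top two layers $G\circ\OR$ into a single $\g(k)$-type gate by absorbing, using that an orlike gate of ORs, where the ORs partition the new inputs into $m$ blocks each contributing $\le 1$, is itself orlike with the same threshold $k$ on the flattened input — this \emph{is} valid because the flattened gate's value depends on the input only through per-block ORs, and being $\ge k$ of those nonzero is a monotone-in-count condition at the $\ge k$ regime. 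I would spell this absorption out carefully as the single real lemma-step, then the size count $m2^w$ follows, completing both the orlike and (dually) andlike cases.
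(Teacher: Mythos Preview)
You have correctly identified the key ingredient (the DNFs obtained from decision trees are \emph{unambiguous}, so the total number of satisfied clauses equals the number of $i$ with $T_i(x)=1$), and your overall plan matches the paper's. But your construction of the new top gate has a real gap.

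Your final ``absorption'' step proposes to take as the new top gate the function $G' := G\circ(\text{blockwise }\OR)$ on $m2^w$ bits: partition the input into $m$ blocks of size $2^w$, OR each block, then apply $G$. The problem is that $G'$ is \emph{not} in $\g(k)$ as a function on arbitrary inputs. If an input $y\in\{0,1\}^{m2^w}$ has $\ge k$ ones all concentrated in a single block, the blockwise-OR vector has only a single $1$, and $G$ need not be constant there. Membership in $\g(k)$ requires constancy on \emph{all} inputs of weight $\ge k$, not just the ones arising from clause evaluations. Your earlier attempt (define $G'(y)=G(u)$ for any $u$ with the same weight as $y$) you already correctly rejected, since $G$ need not be symmetric below threshold $k$; but the absorption attempt fails for the analogous reason.

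The paper's fix is exactly the piecewise definition you were circling around but never landed on: define
\[
F'(y)=\begin{cases} F(1^m) & \text{if }\mathrm{wt}(y)\ge k,\\ F(\text{blockwise }\OR\text{ of }y) & \text{if }\mathrm{wt}(y)<k.\end{cases}
\]
This forces $F'\in\g(k)$ \emph{by construction}. One then checks that on the actual inputs $y=(C_j^i(x))$ the two definitions agree with $F(T_1(x),\dots,T_m(x))$: if $\ge k$ of the $T_i$ are satisfied then (since $F$ is orlike) $F(T_1,\dots,T_m)=F(1^m)$ and also $\mathrm{wt}(y)\ge k$; if $<k$ of the $T_i$ are satisfied then by unambiguity $\mathrm{wt}(y)<k$ as well, so $F'(y)=F(\text{blockwise }\OR)=F(T_1,\dots,T_m)$. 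The size bound $m2^w$ is then immediate. So: your outline is right, but the single ``real lemma-step'' you promised to spell out carefully is precisely the step where your stated argument breaks, and the correct move is this explicit piecewise redefinition of the top gate.
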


\begin{proof}
Say the circuit we start with is $F(D_1,\dots, D_m)$, where $D_i$ are the bottom layer depth $w$ decision trees. Assume $F$ is orlike (the andlike case is analogous). By enumerating over all 1-paths, expand out each $D_i$ as an $\OR$ of $\AND$s, namely $C^i_1\vee C^i_2\vee\dots\vee C^{i}_{2^w}$. Now define a function $F'$ over $m2^w$ bits, where \[F'(x_1^1,\dots x_{2^w}^1,x_1^2,\dots, x_{2^w}^m) = \begin{cases} F(1^m) & \sum_{i,j}x_j^i\ge k \\ F(\bigvee_{i=1}^{2^w} x_j^1,\dots, \bigvee_{i=1}^{2^w} x_j^m) & \text{otherwise}   \end{cases}\] Clearly by construction, $F'\in \g(k)$. Therefore to prove the lemma, it suffices to show that over all input assignments, $F(D_1,\dots, D_m) = F'(C_1^1,\dots C_{2^w}^1,C_1^2,\dots, C_{2^w}^m)$.   \ 

If $\ge k$ of the $D_i$ are satisfied, we know since $F$ is an orlike $\g(k)$ function, $F(D_1,\dots, D_m) = F(1^m)$. This also clearly implies $\ge k$ of the $C^i_j$ are satisfied. Therefore by construction of $F'$, $F'(C_1^1,\dots, C_{2^w}^m)$ also evaluates to $F(1^m)$. \ 

If $ < k$ of the $D_i$ are satisfied, then we need to use the following observation. For any assignment of inputs, at most one of the clauses $C^i_1,\dots, C^i_{2^m}$ can be satisfied for each $i$, since each assignment uniquely defines a path in a decision tree. In more conventional terms, the DNF created by the decision tree $D_i$ is \emph{unambiguous}. Therefore the amount of $D_i$ satisfied is exactly equal to the number of $C^i_j$ satisfied, and so $<k$ clauses $C_i^j$ are satisfied. This forces us into the second case of the piecewise definition of $F'$, and  so \[F'(C_1^1,\dots, C_{2^w}^m) = F(\bigvee_{i=1}^{2^w} C_j^1,\dots, \bigvee_{i=1}^{2^w} C_j^m) = F(D_1,\dots, D_m)\] as desired. 
\end{proof}

We have finally built up the tools to prove our main result: a constant depth simplification lemma.

\begin{thm}
\label{thm:constdepth}
Let $G$ be any gate, and let $F$ be a $G\circ \gt_d^0(k)$ circuit of size $m$. Then for $p = \frac{1}{128(m2^k)^{1/w}}(128w(m2^k)^{1/w})^{-d+1}$ and any $t\ge 1$, \[ \Pr_{\rho\sim R_p} [F|_\rho\text{ is not computed by a $((2^d-1)t,G\circ \dt_w)$-decision tree}]\le 4d\cdot 2^{-t}  \]
\end{thm}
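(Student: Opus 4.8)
The plan is to run the inductive simplification sketched in the proof overview, alternating the multi-switching lemma and the depth-reduction lemma for $d$ rounds and union-bounding the failures. First I would prepend a trivial bottom layer of $\AND_1$ gates, so $F = G \circ \gt_d^0(k) \circ \AND_1$ and the bottom-two-layer subcircuits have the form $\g(k)\circ \AND_w$ (of width $1$ at this point), consistent with the paper's convention that $\AND$s sit below orlike gates and $\OR$s below andlike gates. The invariant I would maintain is: after $i$ rounds, for a composed restriction $\rho_1\circ\cdots\circ\rho_i \sim R_{p_1\cdots p_i}$, the restricted circuit is a $(D_i,\, G\circ\gt_{d-i}^0(k)\circ\{\AND,\OR\})$-tree whose bottom $\{\AND,\OR\}$-layer has width at most $w$, for an accumulated tree depth $D_i$; after the final round $\gt_0^0(k)$ is trivial, so the leaf class is exactly $G\circ\dt_w$.

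One round proceeds as follows. Given the $(D_{i-1},\, G\circ\gt_{d-i+1}^0(k)\circ\{\AND,\OR\})$-tree structure, let $\mathcal F_i$ be the collection of all $\g(k)\circ\AND_w$ subcircuits formed by a bottom $\g(k)$-gate together with the $\{\AND,\OR\}$-gates beneath it, ranging over all $\le 2^{D_{i-1}}$ leaves of the tree built so far, so that $|\mathcal F_i|\le 2^{D_{i-1}}m$. Draw $\rho_i\sim R_{p_i}$ and apply \Cref{lem:multiswitch} to $\mathcal F_i$ with common-tree-depth parameter $r_i$ and individual-depth parameter $t$: except with probability $4\bigl(64(2^k 2^{D_{i-1}}m)^{1/r_i}p_i w\bigr)^{t}$ (with $w$ replaced by $1$ in the first round), there is a single depth-$r_i$ tree $T_i'$ so that on each of its paths every member of $\mathcal F_i$ collapses to a depth-$t$ decision tree. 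Appending $T_i'$ to the current tree gives $D_i = D_{i-1}+r_i$ and turns every leaf circuit into $G\circ\gt_{d-i}^0(k)\circ\dt_t$; applying \Cref{lem:depthred} to each bottom $\g(k)\circ\dt_t$ subcircuit rewrites it as $\g(k)\circ\{\AND,\OR\}_t$, which merges into the $\g(k)$-layer directly above, dropping one $\g(k)$-layer and restoring the invariant. I would choose $p_1\approx\frac1{128(m2^k)^{1/w}}$ with $r_1\approx w$, and $p_i\approx\frac1{128w(m2^k)^{1/w}}$ with $r_i\approx D_{i-1}+w$ for $i\ge 2$, so that the growing factor $(2^k 2^{D_{i-1}}m)^{1/r_i}$ stays bounded, each round fails with probability at most $4\cdot2^{-t}$, the recursion $D_i = D_{i-1}+r_i$ telescopes to $D_d\le(2^d-1)t$, and $p_1 p_2\cdots p_d$ equals the stated $p$ (using that composing independent $R_{p_i}$'s yields $R_{\prod p_i}$). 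Union-bounding over the $d$ rounds then gives failure probability $\le 4d\cdot 2^{-t}$.

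The main obstacle is the interaction between the accumulated tree depth and the size of the collection handed to the multi-switching lemma in the next round: because $\mathcal F_i$ ranges over all $\le 2^{D_{i-1}}$ leaves of the tree built so far, the effective ``$m$'' in round $i$ is $2^{D_{i-1}}m$, which forces $r_i\gtrsim D_{i-1}$ and hence the near-doubling $D_i\approx 2D_{i-1}+t$ that is exactly what produces the $(2^d-1)$ factor. One must check that the single restriction parameter $p$ in the statement, split as $p=p_1\cdots p_d$ with the $p_i$ above, simultaneously keeps every per-round failure at $4\cdot2^{-t}$, validates the hypotheses of \Cref{lem:multiswitch} and \Cref{lem:depthred} (in particular the relevant boundedness of the composed restriction, and reconciling the width bound $w$ with the width-$t$ layers produced by depth reduction), and makes the final tree depth at most $(2^d-1)t$. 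A secondary point is the size blow-up from iterating \Cref{lem:depthred}; since each application multiplies only the number of $\{\AND,\OR\}$-gates by $2^w$ while leaving the number of $\g(k)$-gates unchanged, and $|\mathcal F_i|$ counts only $\g(k)$-gates (times leaves), this stays under control.
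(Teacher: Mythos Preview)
Your high-level plan—iterate the multi-switching lemma and the depth-reduction lemma for $d$ rounds, accumulate the common tree, and union-bound the per-round failures—is exactly the paper's strategy. But two concrete choices in your execution diverge from the paper and both are load-bearing.

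First, you fix the \emph{individual} decision-tree depth in each call to \Cref{lem:multiswitch} at $t$ and let the common-tree depth $r_i$ grow. The paper does the opposite: the individual depth is held at $w$ throughout—so that after \Cref{lem:depthred} the bottom width stays $w$ and the final leaves are $G\circ\dt_w$ as the statement demands—while the \emph{common}-tree depth is what increases, namely $t,2t,4t,\ldots$, summing to $(2^d-1)t$. With your choice the width becomes $t$ after the first depth reduction (the very tension you flag), the final leaves would be $G\circ\dt_t$ rather than $G\circ\dt_w$, and your recursion is internally inconsistent: $r_i\approx D_{i-1}+w$ with $D_i=D_{i-1}+r_i$ gives $D_d=(2^d-1)w$, yet you also assert $D_i\approx 2D_{i-1}+t$ and $D_d\le(2^d-1)t$.

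Second, you propose applying \Cref{lem:multiswitch} once to a \emph{single} collection $\mathcal F_i$ of size $\le 2^{D_{i-1}}m$ spanning all current leaves. The paper instead applies the lemma \emph{per leaf} (so the collection size stays $\le m$) and then union-bounds over the $2^{D_{i-1}}$ leaves. This is not cosmetic. In the bound $4\bigl(64(2^kM)^{1/r}pw\bigr)^{t}$ the collection size $M$ sits inside the base as $M^{1/r}$, and in the paper's parametrization $r$ equals the individual depth $w$; thus your big collection contributes a factor $2^{D_{i-1}/w}$ to the base, which exceeds $1$ as soon as $D_{i-1}\ge w$ and cannot be killed by any choice of exponent. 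The per-leaf application keeps $M=m$, so the base is exactly $1/2$ with the stated $p_i$, and the $2^{D_{i-1}}$ from the union bound is absorbed simply by taking the exponent (the common-tree depth added in round $i$) to be $2^it=D_{i-1}+t$—which is precisely the doubling that yields $(2^d-1)t$. So the per-leaf-plus-union-bound structure is essential, not optional.
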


\begin{proof}
WLOG assume the circuit is layered (all paths down the circuit are of length exactly $d+1$). We first append an extra layer of $\{\AND,\OR\}_1$ gates to the bottom of the circuit so that the input level fan-in is $1$. We then apply a random restriction $\rho_0\sim R_{p_0}$ with $p_0 = \frac{1}{128(m2^k)^{1/w}}$ and use \Cref{lem:multiswitch} on all the depth-2 subcircuits to deduce that \[ \Pr_{\rho_0\sim R_{p_0}}[F|_\rho\text{ is not computed by $(t,G\circ \gt^0_{d-1}(k)\circ \dt_w)$-decision tree}]\le 4\cdot 2^{-t}. \] 

Letting $F^{(0)}$ be a good tree from above which does simplify, we see that there are at most $2^t$ leaves of the partial decision tree, with each leaf containing a $G\circ \gt^0(k)_{d-1}\circ \dt_w$ circuit (which we will refer to as ``leaf-circuits''). By \Cref{lem:depthred}, these circuits can be simplified to $G\circ \gt^0_{d-1}(k)\circ \{\AND,\OR\}_w$ circuits.  We apply \Cref{lem:multiswitch} on the depth-2 subcircuits of a particular leaf-circuit with $p_1 = \frac{1}{128w(m2^k)^{1/w}}$, using $2t$ instead of $t$, union bound over all $2^t$ leaves, and then apply \Cref{lem:depthred} to get that \[\Pr_{\rho_1\sim R_{p_1}}[F^{(0)}|_{\rho_1}\text{ is not a $(t+2t,G\circ \gt_{d-2}^0(k)\circ \{\AND,\OR\}_w$-decision tree}]\le 4\cdot 2^{-2t}\cdot 2^t = 4\cdot 2^{-t}. \]

Iterating this argument $d-2$ more times, where we apply \Cref{lem:multiswitch} on the depth 2 subcircuits using $p_i = \frac{1}{128w(m2^k)^{1/w}}$ and $2^it$ instead of $t$ on the $i$th iteration, and then union bound over all $2^{(2^i-1)t}$ leaves, we get that on the $i$th iteration, our desired single depth simplification happens with probability $2\cdot 2^{-t}$. If the desired simplifications happen on all iterations, we result in a $((2^d-1)t,G\circ \dt_w)$-decision tree with probability at most $\sum_{i=0}^{d-1} 4\cdot 2^{-t} = 4d\cdot 2^{-t}$ (via a union bound over the $d$ iterations) and with a restriction from $R_p$ where $p=\prod p_i = \frac{1}{128(m2^k)^{1/w}}\cdot (128w(m2^k)^{1/w})^{-d+1}.$ The conclusion follows.
\end{proof}

With this theorem, we can let $G$ be a $\g(k)$ gate to get the following corollary.

\begin{cor}
\label{cor:gt0}
    Let $C$ be a $\gt_d^0(k)$ circuit of size $m$ and let $p =\frac{1}{40(128(k+\log m))^{d-1}}$. Then \[\Pr_{\rho\sim R_p} [\dt(C|_\rho)\ge t] \le 2\cdot 2^{-\frac{t}{2^d-1}+k}\]
\end{cor}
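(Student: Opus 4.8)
The plan is to read the statement off from \Cref{thm:constdepth} by taking the auxiliary gate $G$ in that theorem to be the output $\g(k)$ gate of $C$ — so that $C$ is literally a ``$G\circ\gt_{d-1}^0(k)$'' circuit of size $m$ — then clean up the leftover $G\circ\dt_w$ leaf circuits with one more switching step, and finally eliminate the free width parameter $w$ by a convenient choice. Concretely: set $w:=k+\lceil\log m\rceil$, so that $(m2^k)^{1/w}\le 2$; then the restriction probability prescribed by \Cref{thm:constdepth} (invoked with depth parameter $d-1$) is $p_0=\Theta\!\big((k+\log m)^{-(d-2)}\big)$, and that theorem gives $\rho_0\sim R_{p_0}$ under which, except with probability $\le 4(d-1)2^{-t}$, the circuit $C|_{\rho_0}$ is computed by a $\big((2^{d-1}-1)t,\ G\circ\dt_w\big)$-decision tree.

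Next I would dispose of the leaf circuits. Each leaf carries a depth-$2$ circuit $G\circ\dt_w$ with $G\in\g(k)$; this is \emph{not} in general a shallow decision tree (a high fan-in $\g(k)$ gate over depth-$w$ trees can force $\approx mw$ queries), so a further switching round is genuinely needed. Rewrite each $G\circ\dt_w$ leaf as a $G\circ\{\AND,\OR\}_w$ circuit via \Cref{lem:depthred}, draw an independent $\rho_1\sim R_{p_1}$ with $p_1:=\tfrac1{40w}=\Theta\!\big(1/(k+\log m)\big)$, and apply \Cref{thm:switch} to each of the at most $2^{(2^{d-1}-1)t}$ leaf circuits with target depth $2^{d-1}t$; a union bound over the leaves makes this round fail with probability at most $2^{(2^{d-1}-1)t}(20p_1w)^{2^{d-1}t}2^k\le 2^{k}2^{-t}$, after which every $(G\circ\{\AND,\OR\}_w)|_{\rho_1}$ has decision-tree depth $\le 2^{d-1}t$. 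Composing, $C|_{\rho_0\circ\rho_1}$ has decision-tree depth at most $(2^{d-1}-1)t+2^{d-1}t=(2^d-1)t$ except with probability at most $4(d-1)2^{-t}+2^k2^{-t}\le 2\cdot 2^{k-t}$, and $\rho_0\circ\rho_1\sim R_{p_0p_1}$ with $p_0p_1=\Theta\!\big((k+\log m)^{-(d-1)}\big)$ matching the stated $p$. Reindexing with $T:=(2^d-1)t$ gives $\Pr_{\rho\sim R_p}[\dt(C|_\rho)\ge T]\le 2\cdot 2^{-T/(2^d-1)+k}$, which is the claim.

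The main obstacle is organizational rather than conceptual: the $d$ rounds (the $d-1$ internal to \Cref{thm:constdepth} plus the extra one) must use geometrically doubling target depths $t,2t,\dots,2^{d-1}t$ so that the partial-tree depths add up to exactly $(2^d-1)t$ and the per-round failure probabilities each stay at $O(2^{-t})$ and hence telescope (rather than picking up a $\mathrm{poly}(d)$ factor), while the $2^k$ from \Cref{thm:switch}/\Cref{lem:multiswitch} is kept explicit in the final bound instead of being folded into the restriction probability as \Cref{thm:constdepth} does it — this explicit bookkeeping is what pins down the somewhat artificial constants $40$ and $128$ in the statement. One should also check the low-$k$ regime, where $4(d-1)$ is not dominated by $2^k$: there the asserted bound $2\cdot 2^{-T/(2^d-1)+k}$ is $\ge 1$ (hence trivial) whenever $T\le(2^d-1)(k+1)$, and for larger $T$ one absorbs the remaining $O_d(1)$ overhead into the constant — harmless since $d$ is constant throughout. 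The leaf-handling step, and the need to keep the overall decision-tree depth from blowing up there, is the one place requiring real (if light) work beyond invoking the cited lemmas.
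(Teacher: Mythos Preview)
Your proposal is correct and follows essentially the same route as the paper's proof: view $C$ as $G\circ\gt_{d-1}^0(k)$ with $G$ the top $\g(k)$ gate, invoke \Cref{thm:constdepth} with $w=k+\log m$ (so that $(m2^k)^{1/w}\le 2$) to get a $((2^{d-1}-1)t,\,G\circ\dt_w)$-tree, flatten the leaf circuits via \Cref{lem:depthred}, hit each with one more round of \Cref{thm:switch} at target depth $2^{d-1}t$ and restriction rate $p_2=1/(40w)$, union-bound over the $2^{(2^{d-1}-1)t}$ leaves, and reindex. Your explicit flagging of the low-$k$ constant issue (where $4(d-1)$ is not dominated by $2^k$) is a point the paper glosses over; otherwise the two arguments are step-for-step the same.
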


\begin{proof}
    Applying \Cref{thm:constdepth} to $C$ with $w = k+\log m$, it follows for $p_1 = \frac{1}{128(128(k+\log m))^{d-2}}$, \[\Pr_{\rho\sim R_{p_1}} [C|_\rho\text{ is not computed by a $((2^{d-1}-1)t,\g(k)\circ \dt_{k+\log m})$-decision tree}]\le 4d\cdot 2^{-t}.\]

    Fix a $\rho$ such that $C$ simplifies to such a tree, $T$. By \Cref{lem:depthred}, the leaf circuits simplify to $\g(k)\circ \{\AND,\OR\}_{k+\log m}$ circuits. Let $\ell$ be a leaf, and let $C_\ell$ be the associated leaf-circuit.  By \Cref{thm:switch}, we know that for $p_2 = 1/40w$, $\Pr_{\tau\sim R_{p_2}}[\dt(C_\ell|_\tau)\ge 2^{d-1}t]\le 2^{-2^{d-1}t}2^k$. Union bounding over all $\le 2^{(2^d-1)t}$ leaves $\ell$, it follows that \begin{align}\Pr_{\rho\sim R_{p_1},\tau\sim R_{p_2}} [C|_{\rho\circ \tau}\text{ is not computed by a $((2^{d-1}-1)t,\dt_{2^{d-1}t})$-} & \text{decision tree}] \nonumber\\ &  \le 2^{(2^d-1)t}\cdot2^{-2^{d-1}t}2^k + 4d\cdot 2^{-t} \nonumber\\ 
    & \le 2\cdot 2^{-t+k} .\label{eqn:dt}\end{align}

    Because $\rho\circ \tau\sim R_{p_1p_2}$, $p:=p_1p_2 = \frac{1}{40(128(k+\log m))^{d-1}}$, and a $((2^{d-1}-1)t,\dt_{2^{d-1}t})$-tree is simply a $\dt_{2^d-1}$, (\ref{eqn:dt}) implies \[\Pr_{\rho\sim R_p}[\dt(C|_\rho)\ge (2^d-1)t]\le 2\cdot 2^{-t+k}. \] The desired result then follows after a change of variables from $t\to t/(2^d-1)$.
\end{proof}

\section{Applications of The $\gt^0(k)$ Simplification Theorem}

\subsection{Exponential Lower Bounds Against Parity}
\label{sec:constdepth}
Given \Cref{thm:switch} and \Cref{cor:gt0}, we can establish correlation bounds of $\gt_d^0(k)$ circuits against $\pargate$ (parity).

\begin{thm}
\label{thm:paritycorr}
Let $C\in \gt_d^0(k)$ have size $m$ and let $\pargate$ be the parity function. Then the correlation of $C$ and $\pargate$ is  \[\E_{x\sim \{0,1\}^n}[(-1)^{C(x) + \pargate(x)}]\le 2^{-\Omega(n/(k+\log m)^{d-1})+k}.\]

\end{thm}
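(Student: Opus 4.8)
The plan is to combine the simplification theorem (\Cref{cor:gt0}) with the standard observation that parity restricts to a parity on the surviving variables. First I would apply \Cref{cor:gt0} to $C$ with the restriction parameter $p = \frac{1}{40(128(k+\log m))^{d-1}}$, so that for any $t \ge 1$,
\[
\Pr_{\rho \sim R_p}[\dt(C|_\rho) \ge t] \le 2 \cdot 2^{-t/(2^d-1) + k}.
\]
The key case-split: if $C|_\rho$ has decision-tree depth $< t$, then in particular $C|_\rho$ depends on fewer than $2^t$ restricted inputs, and in fact a decision tree of depth $< \lvert \Lambda\rvert$ cannot compute $\pargate$ on the live coordinates $\Lambda$ — any function of decision-tree depth $< \lvert\Lambda\rvert$ has correlation $0$ with the parity of $\Lambda$ (a depth-$(\lvert\Lambda\rvert-1)$ tree has some path fixing all but one live variable, so flipping that free variable flips parity but not the tree's output, giving zero correlation conditioned on that path, and averaging over paths gives zero). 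So conditioned on (i) $\lvert\Lambda\rvert \ge t$ and (ii) $\dt(C|_\rho) < t$, the conditional correlation of $C|_\rho$ with parity-on-$\Lambda$ is exactly $0$.

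Next I would bound the correlation by splitting over the random restriction. Writing $\rho = \rho(\Lambda, z)$ and conditioning on the live set $\Lambda$ and the fixed bits $z$, the overall correlation $\lvert \E_x[(-1)^{C(x)+\pargate(x)}] \rvert$ is at most $\E_\rho$ of the conditional correlation of $C|_\rho$ against parity on $\Lambda$. This conditional correlation is $\le 1$ always, equals $0$ on the good event, so the total is at most $\Pr_\rho[\text{bad}]$ where ``bad'' means $\dt(C|_\rho) \ge t$ or $\lvert\Lambda\rvert < t$. Choosing $t = \Theta(pn) = \Theta(n/(k+\log m)^{d-1})$, a Chernoff bound gives $\Pr[\lvert\Lambda\rvert < t] \le 2^{-\Omega(pn)}$ since $\E\lvert\Lambda\rvert = pn$, and \Cref{cor:gt0} gives $\Pr[\dt(C|_\rho)\ge t] \le 2 \cdot 2^{-\Omega(t) + k} = 2^{-\Omega(n/(k+\log m)^{d-1}) + k}$ (absorbing the $2^d$ factor into the $\Omega$ since $d$ is constant). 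Summing the two contributions yields the claimed bound $2^{-\Omega(n/(k+\log m)^{d-1}) + k}$.

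I expect the main technical care (not obstacle) to be two bookkeeping points: first, making sure the ``conditional correlation is $0$'' step is stated cleanly — one wants to fix $\Lambda$ and $z$, note $\pargate|_\rho(x) = \pargate(x_\Lambda) \oplus (\text{const})$, and invoke the fact that a shallow decision tree is uncorrelated with the parity of its own live input set; this uses $\dt(C|_\rho) < t \le \lvert\Lambda\rvert$ so that no path of the tree queries all of $\Lambda$. Second, balancing the two probability terms: the decision-tree failure bound wants $t$ large while the size of $\Lambda$ wants $t \le pn$; setting $t = pn/2$ (say) makes both terms $2^{-\Omega(pn)+k}$, and since $pn = \Theta(n/(k+\log m)^{d-1})$ this is exactly the target. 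No genuinely hard step arises once \Cref{cor:gt0} is in hand — this is the routine ``restriction kills the circuit but not parity'' finale, and the only subtlety special to $\gt^0(k)$, namely the additive $+k$ loss, is already baked into \Cref{cor:gt0}.
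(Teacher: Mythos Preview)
Your proposal is correct and follows essentially the same approach as the paper: apply \Cref{cor:gt0} with $p=\Theta((k+\log m)^{-(d-1)})$, set $t=\Theta(pn)$, use Chernoff to ensure $|\Lambda|\ge t$, and observe that a depth-$<|\Lambda|$ decision tree has zero correlation with parity on $\Lambda$. The paper phrases the last step as walking down the decision tree to reach a constant leaf while parity remains nontrivial, which is just an explicit unpacking of the same fact you invoke; your only imprecision is writing ``some path'' where you mean ``every path,'' but the argument is the standard one.
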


\begin{proof}
The uniform distribution is equivalent to performing a fair random restriction (a random restriction where non-star variables are set to a uniform bit), and then filling in the $\star$s with uniform bits. We will show that under a fair random restriction, $C$ will become constant with high probability while $\pargate$ becomes a parity over the live variables. Averaging over these live variables then gives a correlation of zero. The total correlation is then the probability $C$ doesn't become constant.  \ 

Let $p =\frac{1}{40(128(k+\log m))^{d-1}}$. Applying \Cref{cor:gt0}, we see that \[\Pr_{\rho\sim R_p} [\dt(C|_\rho)\ge pn/4] \ge 2\cdot 2^{-\frac{pn}{4(2^d-1)}+k}.\] By a Chernoff bound, we know $\rho$ will have $\ge pn/2$ stars with $\ge 1-2^{-pn/8}$ probability. Let $\calE$ be the event both of these events happen, and fix such a $\rho$. Consider performing a random walk down the depth $\le pn/4$ decision tree (start at the root and iteratively pick which of the 2 children to travel to uniformly, effectively filling in $\le pn/4$ of the variables with uniform bits), which induces a random restriction $\tau$. No matter which path restriction $\tau$ was taken, $C|_{\rho\circ\tau}$ becomes constant, while $\pargate|_{\rho\circ\tau}$ becomes a parity over $\ge pn/2-pn/4 = pn/4$ variables. The correlation of these two functions is trivially $0$. Therefore,    
\begin{align*}
    \E_{x\sim \{0,1\}^n}[(-1)^{C(x) + \pargate(x)}] & = |\E_{\rho}\E_x[(-1)^{C|_\rho(x)+\pargate|_\rho(x)}]| \\ & \le \Pr[\neg \calE] + \E_\rho[|\E_x[(-1)^{C|_\rho(x)+\pargate|_\rho(x)}]| \big| \calE] \\ & \le 2\cdot 2^{-\frac{pn}{4(2^d-1)}+k} + 2^{-pn/8} \E_\rho[\E_{\tau}|\E_x[(-1)^{C|_{\rho\circ\tau}(x)+\pargate|_{\rho\circ\tau}(x)}]| \big| \calE] \\ & \le 2^{-\Omega(n/(k+\log m)^{d-1})+k}.
\end{align*}

\end{proof}

As an application, we can observe that for $0\le k \le .1n^{1/d}$ one can set $m = 2^{\Theta((n/k)^{\frac{1}{d-1}})}$ in the above lemma such that the correlation is $< 1/2$, yielding us the following corollary.

\begin{cor}
\label{cor:lb}
For some absolute constant $C$, integer $d$ and $0\le k \le .1n^{1/d}$, $\gt_d^0(k)$ circuits computing $\pargate_n$ requires size $2^{\Omega((n/k)^{\frac{1}{d-1}})}$.
\end{cor}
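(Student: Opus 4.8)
The plan is to obtain \Cref{cor:lb} as an essentially immediate consequence of the correlation bound \Cref{thm:paritycorr}. If a $\gt_d^0(k)$ circuit $C$ of size $m$ computes $\pargate_n$ exactly, then $\E_{x}[(-1)^{C(x)+\pargate(x)}] = 1$, whereas \Cref{thm:paritycorr} gives $\E_{x}[(-1)^{C(x)+\pargate(x)}] \le 2^{-c\,n/(k+\log m)^{d-1} + k}$ for some constant $c>0$ depending only on $d$. Equating the two and taking base-$2$ logarithms forces
\[
c\cdot \frac{n}{(k+\log m)^{d-1}} \;\le\; k,
\qquad\text{i.e.}\qquad
k + \log m \;\ge\; \left(\frac{c\,n}{k}\right)^{1/(d-1)} .
\]
So the entire task reduces to showing that the additive term $k$ on the left is negligible compared to the right-hand side, which then yields $\log m = \Omega\big((n/k)^{1/(d-1)}\big)$ and hence $m = 2^{\Omega((n/k)^{1/(d-1)})}$, as claimed.

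The second step is the elementary size comparison using the hypothesis $0 \le k \le .1\,n^{1/d}$. From $k \le .1\,n^{1/d}$ we get $n/k \ge 10\,n^{(d-1)/d}$, hence $(n/k)^{1/(d-1)} \ge 10^{1/(d-1)}\,n^{1/d} \ge n^{1/d} \ge 10k$, so $k \le \tfrac1{10}(n/k)^{1/(d-1)}$. Consequently $\log m \ge (c\,n/k)^{1/(d-1)} - k \ge \tfrac12 (c\,n/k)^{1/(d-1)} = \Omega\big((n/k)^{1/(d-1)}\big)$, and exponentiating gives the stated lower bound. Equivalently — and this is how the prose preceding the corollary phrases it — one argues contrapositively: choosing the threshold $m = 2^{\Theta((n/k)^{1/(d-1)})}$ with a suitable constant, \Cref{thm:paritycorr} makes the correlation strictly less than $1/2 < 1$, so no $\gt_d^0(k)$ circuit of that size can compute $\pargate_n$ exactly.

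There is essentially no hard step here; it is a one-line corollary once the calculation is organized. The only point that requires care is the interplay between the constant $c$ hidden in the $\Omega(\cdot)$ of \Cref{thm:paritycorr} and the constant $.1$ in the hypothesis on $k$: one must check that the regime $k \le .1\,n^{1/d}$ (or, if needed, $k \le \eps\,n^{1/d}$ for a small enough $\eps$ depending on $d$) is genuinely strong enough that the $+k$ term in the exponent of the correlation bound gets absorbed into the main term rather than cancelling it. Once that bookkeeping of the $d$-dependent constants is pinned down, the conclusion follows.
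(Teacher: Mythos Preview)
Your proposal is correct and takes essentially the same approach as the paper: the paper's ``proof'' is simply the sentence preceding the corollary, which plugs $m = 2^{\Theta((n/k)^{1/(d-1)})}$ into \Cref{thm:paritycorr} and observes the correlation drops below $1/2$, exactly your contrapositive formulation. Your more detailed calculation showing $k \le \tfrac{1}{10}(n/k)^{1/(d-1)}$ under the hypothesis $k \le .1\,n^{1/d}$ fleshes out the bookkeeping the paper leaves implicit, and your caveat about the interplay between the hidden constant $c$ and the constant $.1$ is well taken.
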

This is an interesting result in multiple ways. First, notice the dependence of the lower bound on the ``$k$'' parameter is extremely tight and the corollary becomes absurdly false if $k = n^{1/d}$. This is seen by the fact $\pargate_{n^{1/d}}\in \g(n^{1/d})$, and so one can create a size $O(n^{1-1/d})$ $\gt_d^0(n^{1/d})$ formula computing $\pargate_n$ simply by having the $i$th depth from the bottom have $n^{1-i/d}$ $\pargate_{n^{1/d}}$ gates, each of which takes in inputs from $n^{1/d}$ gates below it. Hence we observe a ``sharp threshold'' behavior where a difference in constants can change an exponential lower bound to a sublinear one.\

We also observe that this lower bound almost matches the classic $2^{\Omega(n^{\frac{1}{d-1}})}$ construction for $\ac_d$ circuits calculating parity. Thus, augmenting $\ac$ with unbounded fan-in gates which have the power to calculate the majority of polynomially many bits has \emph{no effect} on its ability to calculate parity, even thought we know such gates require exponentially sized $\ac_d$ circuits. In fact, by an argument resembling Shannon's classic circuit lower bound, there exists gates in $\g(k)$ which require size $2^{\Omega(n^{1/2d})}$ $\ac_d$ circuits.\ 

We can also show that the size lower bound in \Cref{cor:lb} and the correlation bound in \Cref{thm:paritycorr} is tight. In particular, the gap between the $2^{\Omega(n^{\frac{1}{d-1}})}$ lower bound for $\ac$ and $2^{\Omega(n^{\frac{1}{d}})}$ bound established for $\gt^0(.1n^{1/d})$ cannot be bridged. We defer the formal proofs to \Cref{sec:tight}.

\subsection{Correlation Bounds for $\gt^0(k)$ Circuits With Few Arbitrary Threshold Gates}
\label{sec:arbgate}

In this section, we prove that state of the art correlation bounds against $\ac$ circuits \cite{st19} with a small number of threshold gates extends to if we instead start with $\gt^0(\log^2 n)$ circuits. We first give an overview of their proof. As in previous works studying this correlation \cite{rw93,v07,ls11,st19}, the hard function we uncorrelate with is \[\rw_{m,k,r}(x) = \bigoplus_{i=1}^m\bigwedge_{j=1}^k\bigoplus_{\ell = 1}^r x_{ijk}\] A uniform string can be sampled by performing a random restriction and then filling the $\star$s with uniform bits. Driven by this, the overlying strategy is to apply a random restriction, and show the circuit collapses while the $\rw$ function maintains integrity. It turns out because our multi-switching lemma gives no loss in parameters (up to constants), we can apply the exact same argument in \cite{st19}, except replace the $\ac$ simplification lemma (Corollary 3.2 of \cite{st19}) with our more general \Cref{thm:constdepth}.


\begin{thm}
\label{thm:anycorr}
    Fix $u$. Let $v=.005\log n$ and $q=\sqrt{n/(v+1)}$ There exists a function $\rw_{q,v,q}\in \P$ and small enough constant $\tau$ such that for all circuits  $\any_u\circ \thr \circ \gt_d^0(\Omega(\log^2 n))$ circuits $F$ where each of the $u$ $\thr \circ \gt_d^0$ subcircuits of $F$ has size at most $s = n^{\tau \log n}$, we have  \[|\E_{x\sim U_n}[(-1)^{\rw(x) + C(x)}]\le 2^{-\Omega(n^{.499}/u)}\]
\end{thm}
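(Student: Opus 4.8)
The plan is to run the Razborov--Wigderson-style argument of Servedio and Tan \cite{st19} essentially verbatim, with the sole change of substituting our simplification theorem (\Cref{thm:constdepth}) for their $\ac$ simplification lemma (\cite{st19}, Corollary~3.2); the ``no loss in parameters'' feature of \Cref{thm:constdepth} is exactly what makes this substitution transparent. As in the proof of \Cref{thm:paritycorr}, we realize the uniform distribution as: draw a restriction $\rho\sim R_p$, then fill the $\star$s uniformly. The correlation $|\E_x[(-1)^{\rw(x)+F(x)}]|$ is then bounded by (i)~the probability that some $\thr\circ\gt_d^0(k)$ subcircuit of $F$ fails to collapse under $\rho$, plus (ii)~conditioned on every subcircuit collapsing and on $\rho$ being ``good'' for $\rw$, the worst-case correlation of the resulting shallow threshold-leaf decision tree against $\rw_{q,v,q}|_\rho$.

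For (i), set $k=\Theta(\log^2 n)$, $w=\lceil\log(m2^k)\rceil=\Theta(\log^2 n)$ where $m\le s=n^{\tau\log n}$ bounds the size of each $\thr\circ\gt_d^0(k)$ subcircuit, and let $p$ be as in \Cref{thm:constdepth}. The crucial bookkeeping is that $m2^k$ is still only $n^{\Theta(\log n)}$, so $(m2^k)^{1/w}=\Theta(1)$ and hence $p=\Theta_d(\log^{-2(d-1)}n)$ --- the same, up to constants, as for a size-$n^{O(\log n)}$ $\ac$ circuit of width $\log s=\Theta(\log^2 n)$. Applying \Cref{thm:constdepth} with $G=\thr$ to each of the $u$ subcircuits $F_1,\dots,F_u$ and union-bounding, except with probability $4du\cdot 2^{-t}$ every $F_i|_\rho$ is computed by a $((2^d-1)t,\,\thr\circ\dt_w)$-decision tree. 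Querying these $u$ trees in sequence, $F|_\rho=\any_u(F_1|_\rho,\dots,F_u|_\rho)$ is then computed by a $(D,\,\any_u\circ\thr\circ\dt_w)$-decision tree with $D=u(2^d-1)t$, i.e.\ a depth-$D$ decision tree whose leaves are ANYs of $u$ degree-$w$ polynomial threshold functions.

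For (ii), I would invoke the restriction-robustness and hardness of $\rw$ from \cite{rw93,st19}: with the stated parameters $v=.005\log n$, $q=\sqrt{n/(v+1)}$ (and, if needed, a second structured restriction exactly as in \cite{st19}), with probability $1-2^{-\Omega(n^{.499})}$ over $\rho$ the function $\rw_{q,v,q}|_\rho$ stays ``hard'' in the sense that any $\thr\circ\dt_w$ circuit composed with an arbitrary decision tree of depth at most (say) half the surviving parity length $\Omega(n^{.499})$ has correlation at most $2^{-\Omega(n^{.499})}$ with it --- this is exactly the step that handles the threshold gates, via the Razborov--Wigderson number-on-forehead bound (a discriminator lemma turns a correlated $\thr\circ\dt_w$ into a correlated depth-$w$ decision tree, which cannot correlate with a surviving parity on $\gg w$ bits). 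Choosing $t$ a sufficiently small ($d$-dependent) multiple of $n^{.499}/u$, so that $D=u(2^d-1)t$ sits below half the surviving parity length while $4du\cdot 2^{-t}=2^{-\Omega(n^{.499}/u)}$, and expanding the top $\any_u$ gate along each root-to-leaf path in its Fourier representation (an $L_1$ factor of $2^{O(u)}$, reducing the leaf correlation to the single-threshold-subcircuit bound), each leaf correlates with $\rw_{q,v,q}|_\rho$ by at most $2^{O(u)}\cdot 2^{-\Omega(n^{.499})}=2^{-\Omega(n^{.499})}$ since $u=o(n^{.499})$. Combining (i) and (ii), $|\E_x[(-1)^{\rw(x)+F(x)}]|\le 4du\cdot 2^{-t}+2^{-\Omega(n^{.499})}+2^{-\Omega(n^{.499})}=2^{-\Omega(n^{.499}/u)}$.

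The only genuinely new ingredient is \Cref{thm:constdepth} entering in step~(i); everything concerning the threshold gates and $\rw$ in step~(ii) is imported unchanged from \cite{st19}, so I expect the main obstacle to be purely the parameter verification above --- confirming that $m2^k$ stays quasipolynomial so that $p$, and hence every downstream quantity, does not degrade relative to the $\ac$ case. I would also flag the point the paper promises to make explicit here: one cannot shortcut step~(i) by expanding each $\gt_d^0(\log^2 n)$ subcircuit into an $\ac$ circuit, since doing so inflates the size superpolynomially and raises the effective width beyond $\Theta(\log^2 n)$, at which point the RW analysis would require correlation bounds against $\omega(\log n)$-party number-on-forehead protocols --- a longstanding open problem. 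This is precisely why the direct switching lemma for $\gt^0(k)$ is needed.
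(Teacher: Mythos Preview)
Your step~(i) is essentially the paper's, with one inessential difference: the paper applies \Cref{thm:constdepth} \emph{once} with $G$ taken to be the entire top $\any_u\circ\thr$ portion (viewing $F$ as a single $G\circ\gt_d^0(k)$ circuit of size $u\cdot s$), whereas you apply it to each of the $u$ subcircuits with $G=\thr$ and union-bound. Both routes produce leaves of the form $\any_u\circ\thr\circ\dt_w$ with $w=\Theta(\log^2 n)$; the paper's avoids the extra factor of $u$ in the common-tree depth.

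Your step~(ii), however, misidentifies the mechanism and as written would not close. The parenthetical ``a discriminator lemma turns a correlated $\thr\circ\dt_w$ into a correlated depth-$w$ decision tree, which cannot correlate with a surviving parity on $\gg w$ bits'' is not what \cite{st19} does, and with $w=\Theta(\log^2 n)$ it lands exactly in the $\omega(\log n)$-party NOF regime you yourself flag as open in your last paragraph. The second restriction is not ``if needed''; it is the crux. In the paper (following the Second and Third Steps of Section~2 in \cite{st19}), one composes $\rho'$ with a further restriction that prunes the bottom fan-in from $\dt_w$ all the way down to $\AND_v$ with $v=.005\log n$, while $\rw_{q,v,q}|_\rho$ simultaneously reduces to $\gip_{q/2,v+1}$ except with probability $2^{-\tilde{\Omega}(pq)}$ (Lemma~4.3 of \cite{st19}). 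Only after this pruning does the threshold layer go through: $\any_u\circ\thr\circ\AND_v$ is simulated by a randomized $(v{+}1)$-party NOF protocol (Theorem~21 of \cite{st19}), and the BNS bound for $\gip$ (Theorem~14 of \cite{st19}) gives the $2^{-\Omega(q^{.99}/u)}$ correlation. There is no Fourier expansion of $\any_u$ with a $2^{O(u)}$ loss --- that would leave you with \emph{products} of $\thr\circ\dt_w$ functions rather than a single one --- and no ``surviving parity'' step; the $\any_u$ is absorbed directly into the protocol simulation, and the hard function on the other side is $\gip$, not parity.
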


\begin{proof}
We immediately apply \Cref{thm:constdepth} to $F$ with $G$ being the top $\any_u\circ \thr$ circuit, $ m=u\cdot 2^{(\eps/100d)\log^2 n}, w=\eps \log m, t = q/2$ and $k = (\eps/100d)\log^2 n$ to get that for $\rho'\sim R_p$, where $p = n^{-\eps/50}$, \[\Pr[F|_{\rho'}\text{ is an }(m/2,\any_u\circ \thr \circ \dt_w)\text{-decison tree}]\le 1 - 4d\cdot 2^{-q/2}. \] This computational model is now void of $\g(k)$ gates, and we can essentially port in the rest of \cite{st19} to finish. By the ``Second Step'' and ``Third Step'' under Section 2 of \cite{st19}, one can compose $\rho'$ with another restriction to get a final random restriction $\rho$ that simplifies the tree further and prunes the fan-in to an $\any_u\circ \thr \circ \AND_v$ circuit. 

It was shown in Lemma 4.3 of \cite{st19} that the same random restriction $\rho$ will have $\rw|_\rho$ equal(after restricting additional bits and negating input bits and/or the output) $\gip_{q/2, v+1}$ except with probability $2^{-\tilde{\Omega}(pq)}$. Theorem 21 in \cite{st19} then states that an $\any_u\circ \thr \circ \AND_v$ circuits can be calculated by a randomized NOF $(v+1)$-party protocol with error $\gamma = 2^{-q^{.99}/u}$ using $O(uv^3\log n\log (n/\gamma)) = O(q^{.99}v^3\log n)$ bits. Finally, by Theorem 14 in \cite{st19}, we can conclude the correlation between $\gip_{q/2, v+1}$ and $\any_u\circ \thr \circ \AND_v$ is at most $2^{-\Omega(q^{.99}/u)}$. Hence the overall correlation can be bounded, via union bound, by the sum of the error probabilities and the correlation of $\gip_{q/2, v+1}$ and $\any_u\circ \thr \circ \AND_v$, yielding  

\[|\E_{x\sim U_n}[(-1)^{\rw(x) + C(x)}]\le 4d\cdot 2^{-q/2} +2^{-\tilde{\Omega}(pq)} +  2^{-\Omega(q^{.99}/u)} = 2^{-\Omega(n^{.49}/u)}\] as desired.
\end{proof}

With this theorem, we can prove the actual correlation bound for $\gt^0(k)$ circuits with arbitrary gates.

\begin{thm}
    Let $C$ be a $\gt^0(\Omega(\log^2 n))$ circuit, $g$ of whose gates are arbitrary $\thr$ gates. Then \[\E[(-1)^{C(x) + \rw(x)}]\le 2^{-\Omega(\frac{n^{.499}}{g} - g)}.\] In particular, plugging in $g = \Theta(n^{.249})$ tells us \[\E[(-1)^{C(x) + \rw(x)}]\le 2^{-\Omega(n^{.249})}\]
\end{thm}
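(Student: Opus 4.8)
The plan is to reduce the claim to \Cref{thm:anycorr} by a hybrid argument over the $g$ arbitrary threshold gates, following the standard technique (as in \cite{st19}) for pushing a small number of ``expensive'' gates up to the top of the circuit. First I would observe that a $\gt_d^0(k)$ circuit $C$ with $g$ arbitrary $\thr$ gates can be rewritten, after a modest blow-up in size, as a $\g(k)$-circuit of depth $d$ sitting below a small collection of $\thr \circ \gt_d^0(k)$ subcircuits whose outputs feed into some top combining gate. Concretely, one can think of each of the $g$ threshold gates as being ``at the top'': by a standard unrolling (each $\thr$ gate that lies deep in the circuit is duplicated and its subcircuit treated separately), $C$ is computable as $H(T_1, \dots, T_g, y)$ where each $T_i$ is a $\thr \circ \gt_d^0(k)$ circuit of size at most $s = \mathrm{poly}(m)$, $y$ is (a tuple coming from) a $\gt_d^0(k)$ subcircuit, and $H$ is an arbitrary function of its $g+O(1)$ inputs. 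Absorbing $y$'s circuit into the bottom and noting $H$ has small arity, this is exactly an $\any_{O(g)} \circ \thr \circ \gt_d^0(\Omega(\log^2 n))$ circuit with $u = O(g)$ arbitrary threshold gates, each subcircuit of size $s$.

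Next I would invoke \Cref{thm:anycorr} directly with $u = \Theta(g)$. That theorem requires $s = n^{\tau \log n}$ for a small constant $\tau$, so I need $m = n^{O(\log n)}$; this is consistent with the hypothesis ``size $n^{\Omega(\log n)}$'' by choosing constants appropriately, and the $\g(k)$ parameter $k = \Omega(\log^2 n)$ in the hypothesis matches what \Cref{thm:anycorr} demands (indeed, since the circuit has size $n^{O(\log n)}$, the lifting principle allows $\g$-parameter up to $\Theta(\log s) = \Theta(\log^2 n)$ with no loss). Plugging $u = \Theta(g)$ into the conclusion of \Cref{thm:anycorr} gives
\[
\left|\E_{x\sim U_n}\big[(-1)^{\rw(x)+C(x)}\big]\right| \le 2^{-\Omega(n^{.499}/g)}.
\]

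Finally I would account for the $-g$ loss in the exponent claimed in the statement. This extra term comes from the bookkeeping in the hybrid reduction: pushing $g$ threshold gates to the top via the $\any$ gate, or equivalently chaining $g$ applications of a one-gate-at-a-time argument, typically costs an additive factor on the order of $g$ (or $\mathrm{poly}\log$ factors absorbed into $\tilde\Omega$) in the exponent — for instance, from union-bounding over the $2^{O(g)}$ branches of the top combining gate or from the $2^{-\Omega(g)}$-type error terms introduced when replacing each threshold gate. So the honest bound is $2^{-\Omega(n^{.499}/g - g)}$, and then optimizing — setting $g = \Theta(n^{.249})$ so that $n^{.499}/g = \Theta(n^{.25})$ dominates $g = \Theta(n^{.249})$ — yields $2^{-\Omega(n^{.249})}$. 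The main obstacle I anticipate is the first step: carefully justifying that an arbitrary $\gt_d^0(k)$ circuit with $g$ scattered $\thr$ gates really does collapse into the $\any_{O(g)} \circ \thr \circ \gt_d^0(k)$ normal form without blowing up the size past $n^{O(\log n)}$ or the $\g$-parameter past $\Theta(\log^2 n)$, and ensuring the depth-$d$ structure needed by \Cref{thm:constdepth} (which underlies \Cref{thm:anycorr}) is preserved; the rest is a direct citation of \Cref{thm:anycorr} plus an elementary optimization of $g$.
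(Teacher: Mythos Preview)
Your overall approach---reduce to \Cref{thm:anycorr} via the Lovett--Srinivasan argument---is exactly what the paper does (the paper's proof is literally ``follows from \Cref{thm:anycorr} exactly like how Theorem~3 follows from Lemma~6 in \cite{ls11}''). But your first step, as you yourself flag, is where the real content lies, and your description of it is not quite right.

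You claim that $C$ can be written \emph{directly} as $H(T_1,\dots,T_g,y)$ with each $T_i \in \thr\circ\gt_d^0(k)$ and $H$ arbitrary of arity $O(g)$, i.e.\ that $C$ itself is an $\any_{O(g)}\circ\thr\circ\gt_d^0(k)$ circuit. This fails when the threshold gates are nested: if $T_2$ takes $T_1$ as one of its inputs, then $T_2$ viewed as a function of $x$ is not $\thr\circ\gt_d^0(k)$ but $\thr$ applied to a circuit that itself contains a threshold gate. No amount of ``duplication'' fixes this without the $2^g$ branching. Consequently, the $2^{O(g)}$ factor is not post-hoc ``bookkeeping'' layered on after invoking \Cref{thm:anycorr}; it is exactly what makes the normal form available in the first place.

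The actual \cite{ls11} argument is: enumerate all $2^g$ possible output patterns $b\in\{0,1\}^g$ of the threshold gates (in topological order). For each fixed $b$, let $C_b$ be $C$ with each threshold gate hardwired to $b_i$ (a pure $\gt_d^0(k)$ circuit), and let $P_b(x)$ be the indicator that the threshold gates actually output $b$ on $x$. Then $P_b$ is a conjunction of $g$ conditions, the $i$th being ``$T_i$ with earlier thresholds fixed to $b_{<i}$ outputs $b_i$''---and \emph{this} is a genuine $\thr\circ\gt_d^0(k)$ condition. Hence $P_b(x)\cdot(-1)^{C_b(x)}$ is (up to splitting into its $\pm 1$ parts) an $\any_{g+1}\circ\thr\circ\gt_d^0(k)$ function, to which \Cref{thm:anycorr} applies with $u=g+1$. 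Since $\E[(-1)^{C(x)+\rw(x)}]=\sum_b \E[P_b(x)(-1)^{C_b(x)+\rw(x)}]$, summing the $2^g$ bounds gives $2^g\cdot 2^{-\Omega(n^{.499}/g)} = 2^{-\Omega(n^{.499}/g - g)}$. The optimization at $g=\Theta(n^{.249})$ is then exactly as you say.
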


\begin{proof}
    This follows from \Cref{thm:anycorr} exactly like how Theorem 3 follows from Lemma 6 in \cite{ls11}.
\end{proof}

\begin{remark}
    We note that an argument analogous to the above can be used to show $2^{-\Omega(n^{.499})}$ correlation bounds against $\gt^0(\Omega(\log^2 n))$ circuits with $n^{.499}$ gates, via the same argument presented in \cite{st19}.
\end{remark}

It is worth noting that if we had tried performing this argument by expanding the size $n^{\Omega(\log n)}$ $\gt^0(\log^2 n)$ circuit naively into an $\ac$ circuit, not only would we get a loss in parameters, but the argument \emph{will not go through}. The proof crucially relied on correlation bounds against $v=.005\log n$ party protocols. Had we asymptotically increased the size of our circuit by writing it as an $\ac$ circuit, then after applying random restrictions to prune the fan-in of our circuit, we will be left with trying to uncorrelate against arbitrary $\omega(\log n)$-party protocols, a longstanding open problem (Problem 6.21 in \cite{kn96}).

\subsection{Derandomizing the Multi-Switching Lemma and PRGs for $\gt^0(k)$}
\label{sec:prg}

Using the same techniques appearing in  \cite{kel21,Lyu}, we can completely derandomize our switching and multi-switching lemma. We defer the proof to the appendix (\Cref{apx:demulti}).

\begin{thm}
\label{thm:demulti}
Let ${\cal F} = \{F_1,\dots, F_m\}$ be a list of size $m$ $\g(k)\circ \{\AND,\OR\}_w$ circuits.
     Let $(\Lambda, z)$ be a joint random variable such that 
     \begin{itemize}
         \item $\Lambda$ is a $(t+w)$-wise $p$-bounded subset of $[n]$
         \item Conditioned on any instance of $\Lambda$, $z$ $\eps$-fools CNF of size $\le m^2$.
     \end{itemize}
     Then  
     \[\Pr_{\Lambda, z}[{\cal F}|_{\rho(\Lambda, x)}\text{ has no $r$-partial depth-$t$ }\dt] \le 4(m2^k)^{t/r}(64pw)^t + (64wm)^{t+w}(2m)^{2kt/r}\cdot \eps\]
\end{thm}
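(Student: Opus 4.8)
The plan is to show that the proof of the (uniformly random) multi-switching lemma \Cref{lem:multiswitch} relativizes to a pseudorandom ground assignment: the only way uniform randomness of $z$ enters that argument is through probabilities of events that, once the star-set $\Lambda$ is frozen, depend on $z$ only through a single CNF of size at most $m^2$, and there are not too many such events, so replacing $z$ by a source that $\eps$-fools size-$m^2$ CNFs costs only $\eps$ per event. Since $z$ is guaranteed pseudorandom only \emph{conditioned on each value of $\Lambda$}, I would fix $\Lambda$, bound $\Pr_z[\mathcal F|_{\rho(\Lambda,z)}\text{ has no }r\text{-partial depth-}t\ \dt \mid \Lambda]$, and average over $\Lambda$ only at the very end.

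So fix $\Lambda$. As in the appendix proof of \Cref{lem:multiswitch} (following \cite{Lyu}), if $\mathcal F|_{\rho(\Lambda,z)}$ has no $r$-partial depth-$t$ $\dt$ then the global canonical partial decision tree produces a large \emph{global partial witness} $P$ for $\rho(\Lambda,z)$; union-bound over the candidate $P$. For a fixed $P$, \Cref{claim:unique} says that for every $\rho$ there is at most one completion $(\ell_i)$ with $(\ell_i,P)$ a witness, so
\[\{z : P\text{ is a partial witness for }\rho(\Lambda,z)\}\ =\ \bigsqcup_{(\ell_i)}\ \{z : (\ell_i,P)\text{ is a global witness for }\rho(\Lambda,z)\},\]
a \emph{disjoint} union over completions. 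The structural heart of the argument is that, with $\Lambda$, $P$ and $(\ell_i)$ all fixed, the set on the right is decided by a CNF $\phi_{(\ell_i),P,\Lambda}$ in $z$: being a witness records (i) that every clause the canonical run skips before $C_{\ell_j}$ is killed by $\rho(\Lambda,z)$ — with $\Lambda$ fixed, ``$C|_{\rho(\Lambda,z)}\equiv 0$'' is a width-$\le w$ condition on $z$ — (ii) that each recorded clause $C_{\ell_j}$ behaves as prescribed by $s_j,B_j,\alpha_j$ — again width-$\le w$ (the reply $\alpha_j$ is absorbed by the existential over black-box inputs) — and (iii) that $\Lambda$ covers the $\le t+w$ variables the witness identifies (with $\Lambda$ fixed this is a constant). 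Since the $m$ circuits have at most $m^2$ clauses in total, $\phi_{(\ell_i),P,\Lambda}$ has size at most $m^2$, so $\Pr_z[\phi_{(\ell_i),P,\Lambda}(z)\mid\Lambda]\le \Pr_{z\sim U_n}[\phi_{(\ell_i),P,\Lambda}(z)]+\eps$. Summing over completions and then over partial witnesses,
\[\Pr_z[\mathcal F|_{\rho(\Lambda,z)}\text{ bad}\mid\Lambda]\ \le\ \sum_{P}\Pr_{z\sim U_n}[P\text{ is a partial witness for }\rho(\Lambda,z)]\ +\ N\eps,\]
where $N$ is the number of (partial witness, completion) pairs, a quantity independent of $\Lambda$.

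Now average over $\Lambda$. Since $\Lambda$ is $(t+w)$-wise $p$-bounded (matching the hypothesis of \Cref{thm:switch}) and, in the first term, $z$ is genuinely uniform and independent of $\Lambda$, the quantity $\E_\Lambda\sum_P\Pr_{z\sim U_n}[P\text{ partial for }\rho(\Lambda,z)]$ is precisely the object bounded inside the proof of \Cref{lem:multiswitch}, hence at most $4(m2^k)^{t/r}(64pw)^t$. For the error term it remains to count $N$: this is the number of global partial witnesses of size $s\in[t,t+w-1]$ — carrying the $(8w)^{s}$ factor from the choices of $(r,s_i,B_i,\alpha_i)$ and the $m^{t/r}$ and $2^{kt/r}$ ``per-stage'' factors coming from the $r$-partial global structure, exactly as tallied in \Cref{lem:multiswitch} — times the number of completions $(\ell_i)$ for each (each coordinate in $[m]$, the length controlled by $t,w,k,r$); tracking these through gives $N\le (64wm)^{t+w}(2m)^{2kt/r}$. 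Adding the two contributions yields the claimed bound.

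I expect the main obstacle to be the fixed-$\Lambda$ structural step together with this count. One has to reopen the appendix construction of the global canonical decision tree and global witness searcher and verify that ``$(\ell_i,P)$ is a global witness for $\rho(\Lambda,z)$'' is, for every fixed $\Lambda$, genuinely a conjunction of $\le m^2$ width-$\le w$ clauses in $z$ — in particular that all the skipped-clause and searcher-compatibility conditions are local — and then carry out the completion-count carefully enough to land on the exact constants $(64wm)^{t+w}$ and $(2m)^{2kt/r}$. This is the same derandomization recipe used in \cite{kel21,Lyu} (and, at the single-switching level, behind \Cref{thm:switch}); once CNF-locality is in place, the probabilistic part is routine.
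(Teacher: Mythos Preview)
Your proposal is correct and follows essentially the same route as the paper's own proof: fix $\Lambda$, decompose the bad event as a (disjoint) union over completed global witnesses via \Cref{claim:uniquecomp}, observe that for each fixed $(\Lambda,\text{full witness})$ the indicator is a size-$\le m^2$ CNF in $z$ (the paper spells out exactly the clause-by-clause ``skipped clauses falsified / recorded clause not falsified'' conjunction you describe), apply the $\eps$-fooling hypothesis, and then bound the uniform-$z$ term by the computation inside \Cref{lem:multiswitch} while bounding the $N\eps$ term by the same witness count you outline. The only cosmetic difference is that the paper writes $\E_\Lambda\E_z h_\Lambda(z)\le \E_\Lambda(\eps+\E_x h_\Lambda(x))$ directly rather than fixing $\Lambda$ first and averaging last, and it invokes \Cref{claim:uniquecomp} rather than \Cref{claim:unique} for the global decomposition; the completion count $\prod_i m^{S_i+k}=m^{S+kR}\le m^{S+kt/r}$ combined with the partial-witness count from \Cref{apx:multiswitch} lands on exactly the $(64wm)^{t+w}(2m)^{2kt/r}$ you anticipate.
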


\begin{proof}
    See the \Cref{apx:demulti} in the appendix.
\end{proof}

Using the derandomized multi-switching lemma, we can use the partition-based template in order to create PRGs for $\gt^0(k)$. The arument to reduce any constant depth to depth 2 will be a very similar argument. \cite{Lyu} simply uses a CNF PRG to tackle the base case of $\ac_2$ circuits, but we cannot do so with a $\g(k)\circ \AND_w$ circuit unless we want to expand it out as a CNF and incur a multiplicative $k$ loss in our seed length. We instead use the derandomized switching lemma one more time to simplify $\gt{(k)}\circ \AND_w$ to a $\dt_{\log m}$ and then fool this with an $(\eps/m, \log m)$-wise independent seed. We quickly prove this latter statement in the following lemma.

\begin{lem}[PRG for Depth $t$ Decision Trees]
\label{lem:basecase}
There exists a $\eps$-error PRG with $O(\log \log n + t + \log(1/\eps))$ seed length for $\dt_t$.
\end{lem}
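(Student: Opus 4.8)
The plan is to prove \Cref{lem:basecase} by a standard two-stage construction: first fool all the low-degree Fourier structure of a depth-$t$ decision tree using a bounded-independence source, then bound the seed length of that source.

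\medskip

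\textbf{Step 1: A decision tree is fooled by $O(t)$-wise $\delta$-independence.} Let $T$ be a depth-$t$ decision tree on $n$ variables. The key observation is that $T$ can be written as a sum over its (at most $2^t$) root-to-leaf paths $\pi$, $T(x)=\sum_{\pi}b_\pi\cdot\mathbbm{1}[x\text{ is consistent with }\pi]$, where each indicator $\mathbbm{1}[x\sim\pi]$ is a conjunction of at most $t$ literals and hence a $\{0,1\}$-valued function depending on at most $t$ coordinates. Equivalently, $T$ is a multilinear polynomial all of whose monomials have degree at most $t$ (each path indicator expands into monomials supported on the $\le t$ queried coordinates). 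Now if $z$ is drawn from an $(\delta, t)$-wise independent source, then for every monomial $x^S$ with $|S|\le t$ we have $|\E_{z}[x^S]-\E_{U_n}[x^S]|\le 2^t\delta$ (expanding $x^S$ over $\{0,1\}$ inputs costs a factor $2^{|S|}$; alternatively work directly with the $\le 2^t$ path indicators, each being a single $2^{-|\pi|}$-probability event approximated to within $\delta$). Since $T$ has at most $2^t$ path indicators each with coefficient in $\{-1,1\}$ (or, $T-\E[T]$ has $\ell_1$ Fourier mass at most $2^t$ over monomials of degree $\le t$), summing the per-term errors gives $|\E_{z}[T(z)]-\E_{U_n}[T(z)]|\le 2^{O(t)}\delta$. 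Choosing $\delta=\eps\cdot 2^{-c t}$ for a suitable constant $c$ makes this at most $\eps$.

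\medskip

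\textbf{Step 2: Seed length.} By the construction of Alon, Goldreich, H\aa stad, and Peralta cited in the preliminaries (\cite{aghp}), there is an explicit $(\delta,\kappa)$-wise independent source over $\{0,1\}^n$ with seed length $O(\log\log n+\kappa+\log(1/\delta))$. Plugging in $\kappa=t$ and $\log(1/\delta)=\log(1/\eps)+O(t)$ yields seed length $O(\log\log n + t + \log(1/\eps))$, as claimed. (One must double-check that the earlier lemma statement, which wrote this PRG as fooling "$\dt_t$" over $n$ variables, is the object we need in \Cref{sec:prg}; it is, since after the derandomized switching lemma the residual object on each partition block is exactly a depth-$\log m$ decision tree, and $n$ there is the block length.)

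\medskip

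The only mild subtlety — and the place to be careful rather than an actual obstacle — is bookkeeping the two competing $2^{O(t)}$ factors: the number of monomials/paths ($\le 2^t$) and the per-monomial $2^{|S|}\le 2^t$ blow-up when passing between $\{\pm1\}$ and $\{0,1\}$ conventions. Both are absorbed into the additive $O(t)$ in $\log(1/\delta)$, so the final bound is unaffected; one just needs to state the polynomial/path decomposition cleanly and invoke $(\delta,t)$-wise independence with $\delta$ set accordingly. There is no deep difficulty here; this is a routine "bounded independence fools bounded-depth decision trees" argument, included only because the PRG of \Cref{sec:prg} needs it as its depth-2 base case.
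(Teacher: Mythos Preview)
Your proposal is correct and is essentially the same argument as the paper's: decompose the depth-$t$ tree into its at most $2^t$ root-to-leaf indicators, observe each depends on $\le t$ coordinates so an $(\eps/2^t,t)$-wise independent source approximates each to within $\eps/2^t$, and sum. The paper states exactly this, invoking the \cite{aghp} construction for the seed-length bound; your extra remarks about the monomial expansion and $\{0,1\}$ vs.\ $\{\pm 1\}$ blow-ups are fine but unnecessary once you work directly with the path indicators.
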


\begin{proof}
Let $D$ be a $(\eps/2^t, t)$-wise independent distribution, samplable using $O(\log\log n  + t + \log(2^t/\eps)) = O(\log n + t + \log(1/\eps))$ bits. For arbitrary $T\in \dt_t$, label the leaves $L_1,\dots, L_{2^t}$, and let the value of leaf $L_i$ be $\ell_i$. Then 
\[ T(x) = \sum_{j=1}^{2^t}\ell_j\cdot \mathbbm{1}(T(x)\text{ reaches }L_j). \]

Note that $\ell_j\cdot \mathbbm{1}(T(x)\text{ reaches }L_j)$ depends on at most $t$ bits, and so $D$ will $\eps/2^t$-fool it. Therefore, \begin{align*}|\E_{x\sim U_n}[T(x)] - \E_{x\sim D}[T(x)]| & \le \sum_{j=1}^{2^t}|\E_{x\sim U}[\ell_j\cdot \mathbbm{1}(T(x)\text{ reaches }L_j) ]-\E_{x\sim D}[\ell_j\cdot \mathbbm{1}(T(x)\text{ reaches }L_j)]| \\ & \le \sum_{j=1}^{2^t}\eps/2^t \\ & = \eps. \end{align*}
\end{proof}

With this, we are now ready to prove our final PRG for $\gt^0(\log m)$.

\begin{thm}
\label{thm:prg}
For $m,n\in \N$ and $w\le \log m$, there is an $\eps$-error PRG with $O((w\log^{d-1}(m) + \log^2(m))\log(m/\eps)\log\log m)$ seed length for $\gt_d^0(\log m)\circ \AND_w$ circuits.
\end{thm}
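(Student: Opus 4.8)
The plan is to follow the standard partition-based PRG template of \cite{Lyu,kel21}, using the derandomized multi-switching lemma (\Cref{thm:demulti}) as the engine and \Cref{lem:basecase} as the base-case fooler. I would induct on the depth $d$. The key structural fact is that a random restriction with a suitable survival probability $p$, when applied to a $\gt_d^0(\log m)\circ\AND_w$ circuit, simplifies it (via \Cref{thm:demulti} together with the depth-reduction \Cref{lem:depthred}) to a $\gt_{d-1}^0(\log m)\circ\{\AND,\OR\}_w$ circuit sitting at the leaves of a shallow partial decision tree — exactly the object handled by the inductive hypothesis one depth lower. The ``partition'' refinement of \cite{Lyu} lets us amortize the cost: instead of drawing one global pseudorandom restriction and paying a union-bound factor over all depth-$2$ subcircuits, we pseudorandomly partition the coordinates into $\approx 1/p$ blocks, assign all-but-one block pseudorandomly (fooling the relevant CNFs of size $\mathrm{poly}(m)$ with an $\eps'$-PRG), and recurse on the surviving block. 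The seed at each of the $d-1$ levels of recursion costs an $(t+w)$-wise $p$-bounded subset (seed $O(w\log m)$ via bounded independence, or cheaper) plus an $\eps'$-CNF-fooler (seed $O(\log^2 m \cdot \log(1/\eps'))$ or, using the near-optimal $\ac_2$ PRG, $\tilde O(\log^2 m \log(1/\eps'))$), and the error terms from \Cref{thm:demulti} must be driven below $\eps/d$ by taking $t = \Theta(\log(m/\eps))$ and $\eps'$ accordingly tiny, i.e. $\log(1/\eps') = \Theta(\log(m/\eps) + w\log m)$.

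Concretely, I would proceed as follows. \textbf{Step 1: set parameters.} Choose $p = \Theta\big(1/(w (m 2^k)^{1/w})\big)$ with $k=\log m$ (so $m2^k = m^2$ and $(m2^k)^{1/w} = m^{2/w}$, which is $O(1)$ when $w = \Theta(\log m)$, the regime of interest), and $r = t = \Theta(\log(m/\eps) \cdot \log\log m)$ — the extra $\log\log m$ absorbs the $(m2^k)^{t/r}$ slack, exactly as in Lyu's analysis. \textbf{Step 2: one restriction step.} Apply \Cref{thm:demulti} to the $\le m$ depth-$2$ subcircuits of the current circuit with these parameters and an $\eps'$-CNF-fooling $z$; conclude that except with probability $\le \eps/(2d)$ over the (pseudorandom) restriction, the whole circuit is computed by an $(O(t), \gt_{d-1}^0(\log m)\circ\{\AND,\OR\}_w)$-decision tree, after invoking \Cref{lem:depthred} to collapse the $\dt_w$ leaves back into $\{\AND,\OR\}_w$ bottom fan-in. \textbf{Step 3: recurse.} Feed the $2^{O(t)}$ leaf-circuits, which are $\gt_{d-1}^0(\log m)\circ\{\AND,\OR\}_w$, into the inductive hypothesis; the shallow decision tree on top is fooled for free since it reads only $O(t)$ bits (this is where the partition bookkeeping ensures the ``read'' coordinates and the recursive coordinates are disjoint). \textbf{Step 4: base case.} At $d=1$ (a single $\g(\log m)\circ\AND_w$ circuit), apply the derandomized \emph{switching} lemma (\Cref{thm:demulti} with $r=t$, or its single-circuit specialization) once more to reduce to $\dt_{\log m}$, then fool that with \Cref{lem:basecase} using an $(\eps/\mathrm{poly}(m), \log m)$-wise independent source of seed length $O(\log\log n + \log m + \log(m/\eps))$. \textbf{Step 5: add up seeds.} Sum the $O(d)$ contributions: the dominant term is $(d-1)$ copies of $[\,O(w \log^{?} m)\text{ for the bounded-independence restriction} + \tilde O(\log^2 m \cdot \log(1/\eps'))\text{ for the CNF fooler}\,]$, and tracking the recursion one sees each level multiplies the ``width to fool'' by a $\log m$ factor exactly as in Lyu, yielding the claimed $O\big((w\log^{d-1} m + \log^2 m)\log(m/\eps)\log\log m\big)$.

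The main obstacle, and the place demanding genuine care rather than bookkeeping, is \textbf{Step 2's control of the error term from \Cref{thm:demulti}} — specifically the second summand $(64wm)^{t+w}(2m)^{2kt/r}\eps'$. Because $k = \log m$, the factor $(2m)^{2kt/r} = (2m)^{2(\log m) t/r}$ is enormous unless $r$ is chosen a $\log m$ (or $\log\log m$) factor above $t$; so one cannot naively set $r = t$ at this stage. I would handle this the way \cite{Lyu} handles the analogous blowup in the (randomized) multi-switching lemma: set $r = t\cdot\Theta(\log\log m)$ so that $(m2^k)^{t/r} = m^{O(t/r)} = 2^{O(t/\log\log m \cdot \log m)}$... which is still too big — the correct move is instead to observe that the \emph{partition} structure means we only ever need $r$ comparable to the block size $\Theta(1/p) = \Theta(\log m)$, and we re-randomize between levels, so the relevant inequality is $t \le r = \Theta(w)$ and $(m 2^k)^{t/r}(64pw)^t = ((m2^k)^{1/r} \cdot 64 pw)^t$, which is $\le 2^{-\Omega(t)}$ precisely because $p \asymp 1/(w(m2^k)^{1/r})$. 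Making this self-consistent — choosing $(p, r, t, \eps')$ so that \emph{both} error terms in \Cref{thm:demulti} are $\le \eps/(2d)$ while $t = \Omega(\log(m/\eps))$ and $r = O(w) = O(\log m)$ — is the crux, and it is exactly the calculation that forces the $\log\log m$ loss over the ideal Håstad seed length. The rest (disjointness of read vs.\ recursive coordinates under the pseudorandom partition, composing the polytime samplers, verifying the CNFs that arise in the derandomization have size $\le m^2$) is routine and parallels \cite{Lyu} line for line.
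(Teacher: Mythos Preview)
Your high-level plan is exactly the paper's: a Lyu-style partition-based hybrid, one level of the derandomized multi-switching lemma (\Cref{thm:demulti}) per depth, \Cref{lem:depthred} to restore a bounded bottom fan-in, and \Cref{lem:basecase} at the bottom. Where you go astray is entirely in the choice of the partial-tree parameter $r$, and this confusion propagates into the rest of your parameter discussion.

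The paper does not take $r=t$, nor $r=t\cdot\Theta(\log\log m)$, nor $r=\Theta(w)$, and there is no constraint $t\le r$. It simply sets $r=\log m$, $t=10\log(m/\eps)$, $\ell=512w$ (so $p=1/\ell=1/(512w)$). With $k=\log m$ this gives $(m2^k)^{t/r}=(m^2)^{t/\log m}=4^t$ and $(64pw)^t=(1/8)^t$, so the first error term in \Cref{thm:demulti} is $4\cdot 2^{-t}\le \eps/(4\ell)$ with no $\log\log m$ fudge and no assumption that $w=\Theta(\log m)$. The choice $r=\log m$ is also what makes the recursion close up: after \Cref{lem:depthred} the leaves are $\gt_{d-1}^0(\log m)\circ\{\AND,\OR\}_{\log m}$ circuits, i.e.\ the same class with $w'=\log m$, which is why the inductive seed per block is $O(\log^{d-1}m\cdot\log(m/\eps)\log\log m)$ and the $\ell=\Theta(w)$ blocks contribute the $O(w\log^{d-1}m\cdots)$ term. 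The second error term $(64wm)^{t+w}(2m)^{2kt/r}\eps'$ becomes $(64wm)^{t+w}(2m)^{2t}\eps'$, so one takes the CNF-fooler error $\eps'=\eps/((64mw)^{t+w+1}(2m)^{2t})$; the CNF PRG then needs seed $O(\log m\cdot\log(1/\eps')\cdot\log\log m)=O(\log^2 m\cdot\log(m/\eps)\cdot\log\log m)$, and \emph{this} is where the $\log\log m$ and the additive $\log^2 m$ in the final seed length come from, not from any tension in choosing $r$ versus $t$. Once you fix $r=\log m$, the rest of your Steps~2--5 go through verbatim and match the paper line for line.
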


\begin{proof}
Let $\ell =512w$ and $t = 10\log(m/\eps)$.
\begin{itemize}
    \item Let $H: [n]\to [\ell]$ be a $2t$-wise independent hash function which needs \[O(t\log n) = O(\log n\log(m/\eps))\] bits. We will let $H_i$ be an $n$-bit string such that $(H_i)_j = 1$ iff $H(j) = i$.
    \item Let $\eps' = \eps/(\ell\cdot 2^{t+1})$ and set $X_1,\dots, X_\ell$ to be strings that $\eps'$-fool $\gt_{d-1}^0(\log m)\circ\AND_{\log m}$ circuits of size $4m^2$ if $d\ge 2$, which by the inductive hypothesis uses \[O(\log^{d-1}(m)\log(m/\eps) \log \log m)\] seed length per $X_i$. If $d=1$, use the PRG from \Cref{lem:basecase} giving a seed length, which needs $ O(\log (m/\eps))$ seed per $X_i$.
    \item Let $Y$ be a string that $\eps/((64mw)^{t+w+1}(2m)^{2t})$-fools CNF of size $m^2$, samplable using \[O(\log m \log((mw)^{t+w}m^{t}/\eps)\log\log m) = O( \log(m/\eps) \log^2 m\log\log m)\] bits
\end{itemize}

The PRG will sample the above strings and output the following computation \[Y\oplus (X_1\wedge H_1)\oplus \cdots \oplus (X_\ell\wedge H_\ell) \] where $\wedge$ and $\oplus$ are the bitwise AND and XOR operations, respectively. Therefore, we get a total seed length of

 \begin{align*} O(\log n\log(m/\eps) + \ell \log (m/\eps) +  \log(m/\eps)\log^2 m\log\log m)  = O(\log(m/\eps)\log^2 m\log\log m)\end{align*} if $d=1$ and 
\begin{align*} O(\log n\log(m/\eps) + \ell \log^{d-1}(m)\log(m/\eps)\log\log m + & \log(m/\eps)\log^2 m\log\log m)  \\ & = O((w\log^{d-1}(m) + \log^2(m))\log(m/\eps)\log\log m).\end{align*}

Let $C$ be an arbitrary $\gt_d^0(\log m)\circ \AND_w$ circuit, and let $U_1,\dots U_\ell$ be independent and uniform $n$-bit strings. Like in \cite{Lyu} we use a hybrid argument to prove the theorem using the hybrid distributions \begin{align*} D_i = Y\oplus \bigoplus_{1\le j\le i}(U_i\wedge H_i)\oplus \bigoplus_{i < j\le \ell}(X_i\wedge H_i)\end{align*} for $0\le i\le \ell$. Noting $D_0$ is the PRG output, while $D_{\ell}$ is a uniform string, it suffices to show \begin{align}|\E_{x\sim D_{i-1}}[C(x)] - E_{x\sim D_{i}}[C(x)]|\le \eps/\ell\label{eqn:hybrid}\end{align} for all $1\le i\le \ell$, from which summing over all $i$ and applying the triangle inequality gets the desired result.\

Notice each $H_i$ is $2t$-wise $\frac{1}{\ell}$-bounded. Conditioned on $H$, note that $Z_i := Y\oplus \bigoplus_{1\le j < i}(U_i\wedge H_i)\oplus \bigoplus_{i < j\le \ell}(X_i\wedge H_i)$ $\eps/((64mw)^{t+w+1}(2m)^{2t})$-fools CNF of size $m^2$ since $Y$ does. Let ${\cal F}$ be the collection of all bottom depth-2 $\g(k)\circ \{\AND_w, \OR_w\}$ subcircuits of $C$. Therefore, if we let ${\cal E}$ be the event that ${\cal F}|_{\rho(H_i, Z_i)}$  has no $\log m$-partial depth-$t$ $\dt$, by \Cref{thm:demulti} it follows

\begin{align*}\Pr_{H,Y,U_{<i},X_{>i}}[{\cal E}] & \le 4(m2^{\log m})^{t/\log m}(64w/\ell)^t + (64mw)^{t+w}(2m)^{2 t\log m/\log m}\cdot \frac{\eps}{(64mw)^{2t}(2m)^{2t}} \\ & \le 4\cdot 2^{2t}(1/16)^t+ \frac{\eps}{(64mw)^{40\log(m/\eps)-\log m}} \\ & \le 4(1/4)^t + \frac{\eps}{4\ell} \\ & \le \frac{\eps}{2\ell} \end{align*}

Conditioning on $\neg {\cal E}, H,Y,U_{<i},X_{>i}$, we see  upon replacing all depth $2$ subcircuits with $\dt_t$s and applying \Cref{lem:depthred}, $C|_{\rho(H_i, Z_i)}$ is computable by a depth-$t$ $\dt$ where each leaf $L_j$ is an $\gt_{d-1}^0(k)\circ \{\AND,\OR\}_w$ circuit of size $\le m\cdot 2^{\log m} + m\le 2m^2$, and will become a $\dt_{\log m}$ tree if $d=1$. In either case, we see by construction that $X_i$ will fool it. Formally, we note that conditioned on the good events above, we have $C_{\rho(H_i,Z_i)}(y) = \sum_{j=1}^{2^t}L_j(y)\cdot\mathbbm{1}\{T(y)\text{ reaches }L_j\}$. By construction of $X_i$,\[|\E_{X_i}[L_j(X_i+Z_i)\cdot\mathbbm{1}\{T(X_i+Z_i)\text{ reaches }L_j\}] - \E_{U_i}[L_j(U_i+Z_i)\cdot\mathbbm{1}\{T(U_i+Z_i)\text{ reaches }L_j\}] | \le \frac{\eps}{\ell\cdot 2^{t+1}}\] Summing over all $1\le j\le 2^t$, applying the Triangle Inequality, and using linearity of expectation, we see \[\E_{X_i}[C_{\rho(H_i, Z_i)}(X_i+Z_i)] - \E_{U_i}[C_{\rho(H_i,Z_i)}(U_i+Z_i)]|\le \frac{\eps}{2\ell}.\]

Therefore, 

\[|\E_{x\sim D_{i-1}}[C(x)] - \E_{x\sim D_i}[C(x)]|\le \Pr[{\cal E}] + \Pr[\neg {\cal E}]\cdot \frac{\eps}{2\ell}\le \frac{\eps}\ell\] and (\ref{eqn:hybrid}) is proven.

\end{proof}

From this, we immediately get PRGs for size-$m$ $\gt_d^0(\log m)$ circuits.

\begin{thm}
   For every $m,n,d$ and $\eps > 0$, there is an $\eps$-PRG for size-$m$ $\gt_d^0(\log m)$ with seed length $O((\log^{d-1}(m) + \log^2(m))\log(m/\eps)\log\log m)$
\end{thm}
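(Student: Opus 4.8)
The plan is to derive this final PRG statement as an immediate corollary of \Cref{thm:prg}, which already handles the slightly more general class $\gt_d^0(\log m) \circ \AND_w$ for $w \le \log m$. First I would observe that a size-$m$ $\gt_d^0(\log m)$ circuit $C$ is a special case of this: append a trivial bottom layer of $\AND_1$ gates (each just passing through a single input literal), so that $C$ becomes a $\gt_d^0(\log m) \circ \AND_1$ circuit. Formally this increases the depth by one and the size by at most $m$, but since the gate set and structure is otherwise unchanged we can equally well just view $C$ itself as lying in the class $\gt_d^0(\log m) \circ \AND_w$ with $w = 1$ (treating the identity as $\AND_1$). Either way, $w \le \log m$ holds.

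Next I would simply invoke \Cref{thm:prg} with this circuit, $w = 1$ (or $w=\log m$ if we prefer to be safe about the appended layer), and the given $\eps$. The seed length guaranteed by \Cref{thm:prg} is $O((w\log^{d-1}(m) + \log^2(m))\log(m/\eps)\log\log m)$, and plugging in $w \le \log m$ gives the bound $O((\log^{d-1}(m) + \log^2(m))\log(m/\eps)\log\log m)$ stated here. One should double-check the edge cases in the depth parameter: if appending the extra $\AND_1$ layer bumps the depth from $d$ to $d+1$, the resulting seed length would be $O((\log^{d}(m) + \log^2(m))\log(m/\eps)\log\log m)$, which is off by one in the exponent; hence the cleaner route is to absorb the bottom layer into the existing structure (reading singleton gates as $\AND_1$, which are themselves $\g(\log m)$ gates since constants and literals trivially qualify) rather than adding a new level, so that the depth stays exactly $d$.

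I do not expect any real obstacle here; the entire content of the result is in \Cref{thm:prg}, and this statement is just the instantiation $w \le \log m$ together with the observation that $\gt_d^0(\log m) \subseteq \gt_d^0(\log m) \circ \AND_{\log m}$. The only point requiring a sentence of care is the reconciliation of the bottom-layer fan-in: we must make sure that reducing the input-level fan-in to $1$ (as is done inside the proof of \Cref{thm:constdepth} as well) does not cost us a depth increase in the final bound, which is handled by noting that a depth-$d$ $\gt_d^0(\log m)$ circuit can be regarded directly as a depth-$d$ $\gt_{d-1}^0(\log m) \circ \AND_1$ circuit (splitting off its bottom layer), so \Cref{thm:prg} applies with the correct depth exponent $d-1$ on $\log m$... but since \Cref{thm:prg} is stated for $\gt_d^0(\log m) \circ \AND_w$ at depth $d$ plus a bottom $\AND_w$ layer, we get exactly the $\log^{d-1}(m)$ term as written. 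This completes the proof.
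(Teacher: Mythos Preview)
Your proposal is correct and takes essentially the same approach as the paper: add a trivial bottom layer of $\AND_1$ gates and invoke \Cref{thm:prg} with $w=1$. Your extended discussion about depth bookkeeping is unnecessary---\Cref{thm:prg} is already stated for circuits of the form $\gt_d^0(\log m)\circ\AND_w$, so appending $\AND_1$ gates to a $\gt_d^0(\log m)$ circuit lands you exactly in that class with the same $d$ and with $w=1$, yielding the $\log^{d-1}(m)$ term directly.
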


\begin{proof}
    Add trivial fan-in 1 gates to the bottom so that we effectively have a $\gt_d^0(\log m)\circ\AND_1$ circuit. By \Cref{thm:prg}, we can fool this with seed length $O((\log^{d-1}(m) + \log^2(m))\log(m/\eps)\log\log m).$
\end{proof}

\subsection{Fourier Spectrum Bounds for $\gt^0(k)$}
\label{sec:fourier}

Linial, Mansour, Nisan, and Tal showed that many notions of the Fourier spectrum of a function class is intimately related \cite{lmn93,man92,tal17}. \cite{tal17} writes out four key properties and conveniently describes the implications existing between them. We report a slightly altered version here.

\begin{thm}[\cite{lmn93,man92,tal17}]
\label{thm:equivalence}
Say for a class of functions, $\calC$ we have the following property.
\begin{itemize}
    \item \textbf{ESFT}: Exponentially small Fourier tails. For all $f\in\calC$,  \[W^{\ge k}[f]\le C e^{-\Omega(k/t)}.\] for some constant $C$.
\end{itemize}
 Then, $\calC$ also satisfies the following for some constant $C'$.
\begin{itemize}
    \item \textbf{SLPT}: Switching lemma type property. For all $ f\in\calC, d, p$, \[\Pr_{\rho\sim R_p}[\deg(C|_\rho)\ge d]\le C'\cdot O(pt)^d.\]

    \item \textbf{InfK}: Bounded total degree-$k$ influence. For all $f\in\calC$,$0\le k\le n$, \[\text{Inf}^k[f]\le C'\cdot O(t)^k.\]
    \item \textbf{L1}: Bounded $L_1$ norm at the $k$th level. For all $f\in\calC$, $0\le k\le n$, \[\sum_{|S| = k}|\widehat{f}(x)|\le C'\cdot O(t)^k\]

    \item \textbf{FMC}: Fourier mass concentration. For all $f\in\calC$, $f$ is $\eps$-concentrated on $t^{O(t\log (1/\eps))}$ coefficients.
\end{itemize}
\end{thm}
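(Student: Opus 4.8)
The statement is the familiar package of Fourier-analytic implications due to Linial--Mansour--Nisan, Mansour, and Tal \cite{lmn93,man92,tal17}, and the plan is to reprove the four implications \textbf{ESFT}$\Rightarrow$\textbf{InfK}, \textbf{ESFT}$\Rightarrow$\textbf{L1}, \textbf{ESFT}$\Rightarrow$\textbf{FMC}, and \textbf{ESFT}$\Rightarrow$\textbf{SLPT}, following those references. Throughout I would write $W^{=\ell}[f]=\sum_{|S|=\ell}\widehat f(S)^2$, so that $W^{\ge\ell}[f]=\sum_{j\ge\ell}W^{=j}[f]$, and record the hypothesis as $W^{\ge\ell}[f]\le Ce^{-c\ell/t}$ for absolute constants $C,c>0$. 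I would dispatch \textbf{InfK} first, then build \textbf{FMC} on top of \textbf{L1}, and treat \textbf{SLPT} last, since it is the only step whose ``easy half'' does not already finish it.

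For \textbf{InfK}: using $\mathrm{Inf}_S(f)=\E_x[D_Sf(x)^2]=\sum_{T\supseteq S}\widehat f(T)^2$ one gets the identity $\mathrm{Inf}^k[f]=\sum_{|S|=k}\sum_{T\supseteq S}\widehat f(T)^2=\sum_{\ell\ge k}\binom{\ell}{k}W^{=\ell}[f]$. Summation by parts, via the Pascal identity $\binom{\ell}{k}-\binom{\ell-1}{k}=\binom{\ell-1}{k-1}$, rewrites this as $\sum_{\ell\ge k}\binom{\ell-1}{k-1}W^{\ge\ell}[f]$. Plugging in \textbf{ESFT} and $\binom{\ell-1}{k-1}\le \ell^{k-1}/(k-1)!$, it remains to estimate $\sum_{\ell\ge k}\ell^{k-1}e^{-c\ell/t}$; since $x\mapsto x^{k-1}e^{-cx/t}$ is unimodal, this sum is at most the Gamma integral $\int_0^\infty x^{k-1}e^{-cx/t}\,dx=(k-1)!\,(t/c)^k$ plus a single maximum-value correction term, which by Stirling is at most $(k-1)!\,(t/c)^{k-1}$. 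Hence $\mathrm{Inf}^k[f]\le 2C\,(t/c)^k=O(t)^k$.

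For \textbf{L1} I would reproduce the argument of \cite{tal17} (refining \cite{man92}): the naive bound $\sum_{|S|=k}|\widehat f(S)|\le \sqrt{N_k\cdot W^{=k}[f]}$, with $N_k$ the number of nonzero level-$k$ coefficients, is too lossy on its own, so the correct argument couples \textbf{ESFT} with a random-restriction step, controlling $L_{1,k}(f)$ by $p^{-k}\,\E_{\rho\sim R_p}[L_{1,k}(f|_\rho)]$ and then bounding the restricted quantity using that the surviving low-degree structure is small, with constants tuned so the output is $O(t)^k$. Granted \textbf{L1} together with \textbf{ESFT}, I would deduce \textbf{FMC} by truncation: pick $K=O(t\log(1/\eps))$ with $Ce^{-cK/t}\le \eps/2$, discard all levels above $K$ (total weight lost $\le\eps/2$), and at each level $\ell\le K$ keep only the coefficients with $|\widehat f(S)|\ge\theta_\ell:=\eps/\bigl(2(K+1)(O(t))^\ell\bigr)$; the discarded mass at level $\ell$ is at most $\theta_\ell\sum_{|S|=\ell}|\widehat f(S)|\le\theta_\ell(O(t))^\ell$, which sums over $\ell\le K$ to $\le\eps/2$, while the number of surviving coefficients is at most $\sum_{\ell\le K}(O(t))^\ell/\theta_\ell=O\bigl(K^2(O(t))^{2K}/\eps\bigr)=t^{O(t\log(1/\eps))}$, as claimed.

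The step I expect to be the main obstacle is \textbf{SLPT}. Its easy half is an expectation bound obtained exactly as in the \textbf{InfK} computation: $\E_{\rho\sim R_p}[W^{\ge d}[f|_\rho]]=\sum_T\Pr[\mathrm{Bin}(|T|,p)\ge d]\,\widehat f(T)^2\le\sum_{\ell\ge d}(e\ell p/d)^d W^{=\ell}[f]$, and summation by parts against \textbf{ESFT} bounds this by $O(pt)^d$. The real difficulty is passing from $\E_\rho[W^{\ge d}[f|_\rho]]\le O(pt)^d$ to $\Pr_\rho[\deg(f|_\rho)\ge d]\le O(pt)^d$: a direct Markov step fails because $W^{\ge d}[f|_\rho]$, while positive whenever $\deg(f|_\rho)\ge d$, can be as small as $4^{-m}$ where $m$ is the number of stars of $\rho$, and every attempt to bound the probability by the number of surviving high-degree monomials loses a multiplicative factor of order $2^{\Theta(n)}$. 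Resolving this requires the more careful conditioning/union-bound argument of \cite{lmn93,tal17} (a cousin of the canonical-decision-tree/witness bookkeeping underlying \Cref{thm:switch}), which I would carry out to obtain $\Pr_\rho[\deg(f|_\rho)\ge d]\le C'\,O(pt)^d$ and thereby complete the theorem.
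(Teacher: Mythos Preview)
The paper does not prove \Cref{thm:equivalence}; it is quoted as a black box from \cite{lmn93,man92,tal17} and then applied in \Cref{thm:gtfourier}. So there is no ``paper's own proof'' to compare your proposal against---your write-up is effectively a sketch of the cited literature, not of anything the authors do.

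That said, two remarks on the sketch itself. Your treatments of \textbf{InfK} and of \textbf{FMC} (given \textbf{L1}) are clean and correct. For \textbf{L1} you point to the right reference but stop short of the actual mechanism; the key step in \cite{tal17} is not a generic ``random-restriction plus ESFT'' bound but rather goes through \textbf{InfK}: one shows $\bigl(\sum_{|S|=k}|\widehat f(S)|\bigr)^2\le 2^k\,\mathrm{Inf}^k[f]$ via Cauchy--Schwarz and a combinatorial identity, and then invokes the \textbf{InfK} bound you already established. Your \textbf{SLPT} discussion correctly flags that Markov on $\E_\rho[W^{\ge d}[f|_\rho]]$ is insufficient, but the resolution is \emph{not} ``a cousin of the canonical-decision-tree/witness bookkeeping underlying \Cref{thm:switch}''---that machinery is circuit-specific and irrelevant here. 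The actual argument (see \cite{tal17}) is purely Fourier-analytic: one union-bounds over the at most $\binom{|\Lambda|}{d}$ candidate top monomials $S\subseteq\Lambda$ of size $d$, and for each such $S$ bounds $\Pr_z[\widehat{f|_{\rho(\Lambda,z)}}(S)\ne 0]$ by $\mathrm{Inf}_S(f)$ (since $\widehat{f|_\rho}(S)=D_Sf|_\rho$ evaluated at the fixed coordinates), then sums using \textbf{InfK}. So \textbf{SLPT}, like \textbf{L1}, is really a corollary of \textbf{InfK}, not an independent switching-lemma-style argument.
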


Due to the above unification result, it appears like we can bootstrap \Cref{cor:gt0} to give us a plethora of information about the Fourier spectrum of $\gt^0(k)$. Unfortunately, upon closer inspection, the Corollary doesn't quite give the exact property of SLPT. We instead show that $\gt^0(k)$ has ESFT. Our proofs will use the following lemma.

\begin{lem}[\cite{lmn93}]
\label{lem:restail}
For $f:\{\pm 1\}^n\to\{\pm 1\}$, $0\le \ell\le n$, and $p\in [0,1]$, \[W^{\ge \ell}[f]\le 2\E_{\rho\sim R_p}W^{\ge kp}[f|_\rho]\]  
\end{lem}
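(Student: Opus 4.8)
The plan is to carry out the standard Linial--Mansour--Nisan ``a random restriction kills the high-degree mass'' computation. Write $\rho\sim R_p$ in the form $\rho(\Lambda,z)$, where the live set $\Lambda\subseteq[n]$ contains each coordinate independently with probability $p$ and $z\in\pmone^n$ is a uniformly random ground assignment (only the coordinates outside $\Lambda$ are ever used). Fixing $\Lambda$ and substituting $x_i=z_i$ for $i\notin\Lambda$ into the Fourier expansion $f(x)=\sum_S\widehat f(S)x^S$ shows that for every $T\subseteq\Lambda$,
\[
\widehat{f|_{\rho(\Lambda,z)}}(T)=\sum_{U\subseteq[n]\setminus\Lambda}\widehat f(T\cup U)\,z^{U},
\]
and averaging over the uniform $z$ (the monomials $z^U$ over the dead coordinates being orthonormal) yields $\E_z[\widehat{f|_{\rho}}(T)^2]=\sum_{U\subseteq[n]\setminus\Lambda}\widehat f(T\cup U)^2$.

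First I would sum this identity over all $T\subseteq\Lambda$ with $|T|\ge\ell p$ and then take the expectation over $\Lambda$. The only bookkeeping point is that, for a fixed $S\subseteq[n]$, the unique $\Lambda$-compatible split $S=T\sqcup U$ is $T=S\cap\Lambda$, $U=S\setminus\Lambda$, and it contributes $\widehat f(S)^2$ to the sum precisely when $|S\cap\Lambda|\ge\ell p$; hence
\[
\E_{\rho\sim R_p}\big[W^{\ge\ell p}[f|_\rho]\big]=\sum_{S\subseteq[n]}\widehat f(S)^2\cdot\Pr_\Lambda\big[\,|S\cap\Lambda|\ge\ell p\,\big].
\]
To conclude I would discard every term with $|S|<\ell$ and lower bound the surviving probabilities by $1/2$: for $|S|\ge\ell$ the count $|S\cap\Lambda|$ is $\mathrm{Binomial}(|S|,p)$ with mean $p|S|\ge p\ell$, and since the median of a binomial is at least the floor of its mean, $\Pr_\Lambda[\,|S\cap\Lambda|\ge\ell p\,]\ge 1/2$. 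This leaves $\E_{\rho\sim R_p}[W^{\ge\ell p}[f|_\rho]]\ge\tfrac12\sum_{|S|\ge\ell}\widehat f(S)^2=\tfrac12\,W^{\ge\ell}[f]$, which rearranges to the stated inequality.

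There is essentially no hard step: this is the classical LMN restriction lemma and the argument is a direct Fourier expansion plus Parseval. The one thing to watch is the rounding when $\ell p$ is not an integer, where one should read $W^{\ge\ell p}$ as $W^{\ge\floor{\ell p}}$ so that the median estimate $\floor{|S|p}\ge\floor{\ell p}$ still closes the bound (with the naive ceiling convention the constant $2$ would not suffice, e.g.\ for $f=x_1x_2$, $\ell=2$, $p=0.6$). Everything else is routine.
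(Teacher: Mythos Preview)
Your proposal is correct and is exactly the classical Linial--Mansour--Nisan argument. The paper does not supply its own proof of this lemma; it simply cites \cite{lmn93}, so there is nothing further to compare against. Your handling of the rounding issue (interpreting $W^{\ge \ell p}$ as $W^{\ge \lfloor \ell p\rfloor}$ so that the binomial median bound closes) is the right caveat, and indeed the paper's statement has a typo ($kp$ should read $\ell p$).
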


We first start off with depth 2 circuits.

\begin{lem}
\label{lem:spectrumbasecase}
    Let $f$ be a $\g(k)\circ \{\AND,\OR\}_w$. Then \[W^{\ge \ell}[f] \le 2\cdot 2^{-\ell/80w + k}\]
\end{lem}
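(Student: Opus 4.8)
The plan is to combine the decision-tree switching bound for depth-2 $\g(k)\circ\{\AND,\OR\}_w$ circuits (\Cref{thm:switch}) with the random-restriction characterization of Fourier tails (\Cref{lem:restail}). First I would apply \Cref{lem:restail} with a carefully chosen restriction parameter $p$; the natural choice that makes the arithmetic work out is $p = \frac{1}{40w}$, so that $\ell p$ becomes $\ell/40w$ and the restricted tail is measured above level $\ell p$. This reduces the task to bounding $\E_{\rho\sim R_p}\left[W^{\ge \ell p}[f|_\rho]\right]$.

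Next I would split this expectation according to whether the restricted circuit $f|_\rho$ has small decision-tree depth. Specifically, let $\calE$ be the event $\dt(f|_\rho)\ge \ell p$. On the complement $\neg\calE$, the function $f|_\rho$ is computed by a decision tree of depth $<\ell p$, hence is a polynomial of degree $<\ell p$, so $W^{\ge \ell p}[f|_\rho]=0$ and that part of the expectation vanishes. On the event $\calE$ we use the trivial bound $W^{\ge \ell p}[f|_\rho]\le 1$ (Parseval), so the expectation is at most $\Pr_\rho[\calE]=\Pr_\rho[\dt(f|_\rho)\ge \ell p]$. Now \Cref{thm:switch} applies directly: with $R_p$ being $n$-wise (hence $(\ell p + w)$-wise) $p$-bounded, taking $t=\ell p = \ell/40w$ gives
\[
\Pr_{\rho\sim R_p}[\dt(f|_\rho)\ge \ell/40w]\le (20pw)^{\ell/40w}2^k = (1/2)^{\ell/40w}2^k = 2^{-\ell/40w + k}.
\]
Combining with the factor of $2$ from \Cref{lem:restail} yields $W^{\ge \ell}[f]\le 2\cdot 2^{-\ell/40w + k}$, which is even slightly stronger than the claimed $2\cdot 2^{-\ell/80w+k}$; the looser constant $80$ in the statement gives slack, so one could equally well take $p=1/80w$ or simply note the bound is stronger than claimed.

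I do not anticipate a serious obstacle here — the argument is essentially the standard LMN-style derivation of Fourier tail bounds from a switching lemma, and all the heavy lifting (the switching lemma itself with the $2^k$ loss, and \Cref{lem:restail}) is already available. The one point requiring a little care is bookkeeping the relationship between $\ell$, $p$, and $t$ so that $t$ is a genuine (integer) decision-tree depth threshold and so that $\Lambda=R_p$ indeed satisfies the $(t+w)$-wise $p$-bounded hypothesis of \Cref{thm:switch}; since $R_p$ is $n$-wise $p$-bounded this is automatic. A minor subtlety is that $\ell p$ need not be an integer, but replacing it with $\lceil \ell p\rceil$ only helps the degree/vanishing argument and changes constants negligibly, so it can be absorbed into the slack. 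If one wants the exact constant $80$ as stated, simply run the whole argument with $p=\frac{1}{80w}$ throughout.
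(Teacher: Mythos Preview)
Your proposal is correct and follows essentially the same approach as the paper: apply \Cref{lem:restail} with $p=1/40w$, then split on whether $f|_\rho$ has small decision-tree depth, bounding the bad event via \Cref{thm:switch} and the good event by the degree argument. The only cosmetic difference is that the paper sets the threshold $t=\ell/80w$ (half of $p\ell$) rather than $t=p\ell$, which is why it lands on the constant $80$; as you note, your choice gives the slightly sharper $2^{-\ell/40w+k}$.
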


\begin{proof}
    Let $p = 1/40w$ and $t = \ell/80w$. By \Cref{thm:switch}, if $\rho\sim R_p$, $f|_\rho$ becomes a depth-$t$ $\dt$ with $\ge 1-(20w/40w)^t2^k = 1-2^{-t+k}$ probability. Such trees have no Fourier mass above level $t$. Say $\rho$ is good if $f|_\rho$ does indeed become a $\dt$. Using \Cref{lem:restail} it follows \begin{align*}W^{\ge \ell}[f] &\le 2\E_{\rho\sim R_{p}}[W^{\ge p\ell}[f|_\rho]] \\ &\le 2\E_{\rho\sim R_{p}}[W^{\ge \ell/40w}[f|_\rho]|\rho\text{ is good}] + 2\cdot 2^{-t+k} \\ & \le 2\cdot 2^{-\ell/80w+k}.\end{align*}
\end{proof}

We can now use this as a base case to prove ESFT for $\gt^0$. We will need to utilize the following lemma.

\begin{lem}[\cite{tal17}]
Let $f:\{\pm 1\}^n\to\{\pm 1\}$, $0\le \ell\le n$, and let $T$ be a depth $d$ decision tree such that for any leaf $\ell$ and the corresponding restriction $\rho_\ell$ induced by the root-to-leaf path, we have $W^{\ge \ell}[f|_{\rho_\ell}]\le \eps$. Then $W^{\ge \ell + d}[f]\le \eps$.
\end{lem}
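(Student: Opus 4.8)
The plan is to run the standard decision-tree decomposition of $f$ and control degrees carefully. Write the leaves of $T$ as $v$ (the statement overloads $\ell$ for both the degree threshold and a generic leaf, so I rename leaves to avoid the clash). For a leaf $v$, let $B_v\subseteq[n]$ be the set of variables queried along the root-to-$v$ path, so $|B_v|\le d$; let $\mathbbm{1}_v\colon\pmone^n\to\{0,1\}$ be the indicator that an input follows that path — a function of the coordinates in $B_v$ only, with $\E[\mathbbm{1}_v]=2^{-|B_v|}=:p_v$ and $\sum_v p_v=1$ — and regard $f|_{\rho_v}$ as a function of the coordinates outside $B_v$. Then $f=\sum_v \mathbbm{1}_v\cdot f|_{\rho_v}$ pointwise. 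Split $f|_{\rho_v}=g_v+h_v$, where $g_v$ is the part of Fourier degree $\ge\ell$ and $h_v$ the part of degree $<\ell$; the hypothesis says $\|g_v\|_2^2=W^{\ge\ell}[f|_{\rho_v}]\le\eps$.

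The first step is a degree count. Every monomial of $\mathbbm{1}_v h_v$ is the product of a monomial supported on $B_v$ (degree $\le|B_v|\le d$) with a monomial of $h_v$ (degree $\le\ell-1$), hence has degree $\le\ell+d-1$. So if $P_{\ge\ell+d}$ denotes the projection onto monomials of degree $\ge\ell+d$, then $P_{\ge\ell+d}(\mathbbm{1}_v h_v)=0$, and therefore $P_{\ge\ell+d}(\mathbbm{1}_v f|_{\rho_v})=P_{\ge\ell+d}(\mathbbm{1}_v g_v)$. Summing over $v$ and using linearity,
\[
P_{\ge\ell+d}(f)=\sum_v P_{\ge\ell+d}(\mathbbm{1}_v g_v)=P_{\ge\ell+d}\Big(\sum_v \mathbbm{1}_v g_v\Big).
\]

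The second step is an $L^2$ estimate. Since $\mathbbm{1}_v\mathbbm{1}_{v'}\equiv 0$ for $v\ne v'$, the functions $\mathbbm{1}_v g_v$ are pairwise orthogonal, so $\big\|\sum_v\mathbbm{1}_v g_v\big\|_2^2=\sum_v\|\mathbbm{1}_v g_v\|_2^2$; and since $\mathbbm{1}_v$ and $g_v$ depend on disjoint (hence independent) blocks of coordinates, $\|\mathbbm{1}_v g_v\|_2^2=\E[\mathbbm{1}_v]\,\E[g_v^2]=p_v\,W^{\ge\ell}[f|_{\rho_v}]$. A coordinate projection is a contraction, so
\[
W^{\ge\ell+d}[f]=\|P_{\ge\ell+d}(f)\|_2^2\le\Big\|\sum_v\mathbbm{1}_v g_v\Big\|_2^2=\sum_v p_v\,W^{\ge\ell}[f|_{\rho_v}]\le\Big(\sum_v p_v\Big)\eps=\eps .
\]

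The point to be most careful about is exactly this interaction of degrees: one should \emph{not} try to argue that the projected pieces $P_{\ge\ell+d}(\mathbbm{1}_v g_v)$ remain pairwise orthogonal (projecting an orthogonal family need not keep it orthogonal). The fix is to apply the contraction $\|P_{\ge\ell+d}(\cdot)\|_2\le\|\cdot\|_2$ to the single function $\sum_v\mathbbm{1}_v g_v$ and use orthogonality of the $\mathbbm{1}_v g_v$ \emph{before} projecting, as above. An alternative, more hands-on route is induction on the depth $d$: the base case $d=0$ is the hypothesis itself, and the step reduces to the $d=1$ case $f=\tfrac{1+x_i}{2}f|_{x_i=1}+\tfrac{1-x_i}{2}f|_{x_i=-1}$, for which a direct comparison of Fourier coefficients gives $W^{\ge\ell+1}[f]\le\tfrac12\big(W^{\ge\ell}[f|_{x_i=1}]+W^{\ge\ell}[f|_{x_i=-1}]\big)$, and one then iterates down the subtrees. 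Nothing here uses Boolean-valuedness; the statement holds verbatim for any $f\colon\pmone^n\to\R$.
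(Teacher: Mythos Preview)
The paper does not actually prove this lemma; it is stated with a citation to \cite{tal17} and used as a black box in the proof of \Cref{thm:esft}. Your argument is correct and is essentially the standard proof: decompose $f=\sum_v \mathbbm{1}_v\, f|_{\rho_v}$, observe that the low-degree part $\mathbbm{1}_v h_v$ has degree at most $\ell+d-1$ so it is annihilated by $P_{\ge \ell+d}$, and then use disjointness of the leaf indicators together with the fact that $\mathbbm{1}_v$ and $g_v$ depend on disjoint coordinate blocks to compute $\|\sum_v \mathbbm{1}_v g_v\|_2^2=\sum_v p_v\,W^{\ge\ell}[f|_{\rho_v}]$, finishing with the contraction property of the Fourier projection. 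Your remark about not trying to argue orthogonality \emph{after} projecting is well taken, and the inductive alternative you sketch (reducing to the identity $W^{\ge \ell+1}[f]\le \tfrac12\big(W^{\ge\ell}[f|_{x_i=1}]+W^{\ge\ell}[f|_{x_i=-1}]\big)$, which follows from the parallelogram-type relation $\widehat{f}(T)^2+\widehat{f}(T\cup\{i\})^2=\tfrac12\big(\widehat{f|_{x_i=1}}(T)^2+\widehat{f|_{x_i=-1}}(T)^2\big)$) is also valid and is closer to how Tal originally phrases it.
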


We now state and prove the theorem. Define the \emph{effective size} of a Boolean circuit to be the number of gates in the circuit at distance 2 or more from the inputs.

\begin{thm}
\label{thm:esft}
    Let $f$ be a $\gt_d^0(k)\circ \{\AND,\OR\}_w$ circuit with effective size $m$. Then \[W^{\ge \ell}[f] \le 4^d\cdot 2^{-\frac{\ell}{80w(128(k+\log m))^{d-1}} + k} \]
\end{thm}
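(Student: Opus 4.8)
The plan is to prove Theorem~\ref{thm:esft} by induction on the depth $d$, using Lemma~\ref{lem:spectrumbasecase} as the base case $d=1$ and combining Theorem~\ref{thm:constdepth} (the constant-depth simplification theorem) with Tal's decision-tree composition lemma for the inductive step. The key observation is that Theorem~\ref{thm:constdepth}, applied with $G$ a trivial (identity/$\AND_1$) gate, produces a decision tree of moderate depth whose leaves are $\gt_{d-1}^0(k)\circ\{\AND,\OR\}_w$-type circuits that we can handle by the inductive hypothesis after one application of Lemma~\ref{lem:depthred}.

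First I would set up the induction. For $d=1$, Lemma~\ref{lem:spectrumbasecase} gives $W^{\ge\ell}[f]\le 2\cdot 2^{-\ell/80w+k}$, which is dominated by $4^1\cdot 2^{-\ell/(80w)+k}$ since the product $(128(k+\log m))^{d-1}=1$ when $d=1$; so the base case holds. For the inductive step, assume the bound for effective depth $d-1$. Given $f\in\gt_d^0(k)\circ\{\AND,\OR\}_w$ of effective size $m$, apply Theorem~\ref{thm:constdepth} with the top ``gate'' $G$ being the identity, so that $f$ is a $G\circ\gt_d^0(k)$-like circuit; with $w$ as the width parameter and $p_* = \tfrac{1}{128(m2^k)^{1/w}}\bigl(128w(m2^k)^{1/w}\bigr)^{-(d-1)}$ we get that except with probability $4d\cdot 2^{-t}$, $f|_{\rho}$ is computed by a $((2^d-1)t,\; G\circ\dt_w)$-decision tree. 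Replacing $w$ by its intended value $k+\log m$ (so that $m2^k\le 2^{2w}$ and $(m2^k)^{1/w}\le 4$), the restriction parameter becomes of order $1/(128(k+\log m))^{d-1}$ up to constants, matching the exponent denominator in the target. The leaves are $\dt_w$ circuits; but to continue the induction cleanly it is more natural to stop one layer earlier: apply Theorem~\ref{thm:constdepth} to reduce from depth $d$ to a decision tree of depth $\approx t$ over $\gt_1^0(k)\circ\{\AND,\OR\}_w$ leaves (equivalently $\g(k)\circ\{\AND,\OR\}_w$, via Lemma~\ref{lem:depthred}), and then invoke Lemma~\ref{lem:spectrumbasecase} on each leaf.

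The core of the argument is the Fourier-tail bookkeeping. Using Lemma~\ref{lem:restail} with $p=p_*$: $W^{\ge\ell}[f]\le 2\,\E_{\rho\sim R_{p_*}}W^{\ge p_*\ell}[f|_\rho]$. Split according to whether $\rho$ is ``good'' (the simplification of Theorem~\ref{thm:constdepth} succeeds). On bad $\rho$, bound $W^{\ge p_*\ell}\le 1$, contributing $\le 2\cdot 4d\cdot 2^{-t}$. On good $\rho$, $f|_\rho$ is a depth-$D$ decision tree ($D=(2^d-1)t$ or similar) whose leaves are $\g(k)\circ\{\AND,\OR\}_w$ circuits, each of which by Lemma~\ref{lem:spectrumbasecase} has $W^{\ge m'}[\cdot]\le 2\cdot 2^{-m'/80w+k}$ for any threshold $m'$; by Tal's composition lemma, $W^{\ge D+m'}[f|_\rho]\le 2\cdot 2^{-m'/80w+k}$. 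Choosing $t$ (hence $D$) to be a suitable constant fraction of $p_*\ell$, say $D\le p_*\ell/2$, and then taking $m' = p_*\ell - D\ge p_*\ell/2$, the good-$\rho$ contribution is $\le 2\cdot 2\cdot 2^{-p_*\ell/(160w)+k}$, and the bad-$\rho$ term is $\le 8d\cdot 2^{-t}$ with $t=\Theta(p_*\ell)$; absorbing the $8d$ into the $4^d$ factor (since $8d\le 4^d$ for all $d\ge 1$... actually one should be slightly careful: take $t$ large enough relative to the $4^{d-1}$ from the inductive leaf bound, or just track the constant $4^d$ as the running product of the per-level losses) yields $W^{\ge\ell}[f]\le 4^d\cdot 2^{-\ell/(80w(128(k+\log m))^{d-1})+k}$ after substituting $p_*=\Theta((128(k+\log m))^{-(d-1)})$ and adjusting hidden constants.

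The main obstacle I anticipate is the constant-chasing needed to land exactly on $4^d$ and on the precise denominator $80w(128(k+\log m))^{d-1}$: one must verify that (i) with $w=k+\log m$ the factor $(m2^k)^{1/w}$ is an absolute constant so that $p_*$ really is $\Theta\bigl((128(k+\log m))^{-(d-1)}\bigr)$ with the stated $128$, (ii) the depth $D$ of the simplified tree can be taken below half of $p_*\ell$ by choosing $t$ appropriately, so that the surviving threshold in Tal's lemma is $\ge p_*\ell/2$, reproducing the extra factor of $80\cdot 2 = 160$... here one may instead track the denominator as $80w$ from Lemma~\ref{lem:spectrumbasecase} directly by being slightly more generous in the split, e.g. arranging $D + m' = p_*\ell$ with $m' = \Omega(p_*\ell)$ — the clean way is to reprove the statement with an explicit constant and check it telescopes — and (iii) the multiplicative loss per depth-level (a factor of roughly $4$, coming from the $2\times2$ in Lemma~\ref{lem:restail}/Lemma~\ref{lem:spectrumbasecase} together with the $4d\,2^{-t}$ bad event, which is $\le 2^{-t/2}$ once $t\gtrsim\log d$) multiplies up to exactly $4^d$ across the induction. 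None of these steps is conceptually hard, but getting the advertised constants requires care; alternatively one can prove the weaker $C^d\cdot 2^{-\Omega(\ell/(w(k+\log m)^{d-1}))+k}$ bound first and note the stated theorem is just a sharpening of the constants.
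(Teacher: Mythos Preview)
Your approach differs from the paper's in a key structural way. The paper does not invoke Theorem~\ref{thm:constdepth} at all; instead it proves the statement by a genuine induction on $d$, applying the multi-switching lemma (Lemma~\ref{lem:multiswitch}) exactly \emph{once} per inductive step. Concretely, for the step from $d-1$ to $d$ the paper samples $\rho\sim R_p$ with $p=1/(128w)$ and applies Lemma~\ref{lem:multiswitch} (with $r=k+\log m$) only to the bottom depth-2 subcircuits; via Lemma~\ref{lem:depthred} this yields a $(t,\gt_{d-1}^0(k)\circ\{\AND,\OR\}_{k+\log m})$-tree whose leaves are handed directly to the inductive hypothesis at depth $d-1$ with the new bottom width $k+\log m$. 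The $4^d$ arises as a clean per-level multiplicative loss, and the denominator $80w(128(k+\log m))^{d-1}$ appears because the first restriction contributes the factor $w$ while each of the $d-1$ inductive calls contributes $128(k+\log m)$.

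Your proposal instead tries to collapse the whole depth in one shot via Theorem~\ref{thm:constdepth}, and this creates two concrete problems. First, Theorem~\ref{thm:constdepth} as stated applies to $G\circ\gt_d^0(k)$ circuits, not to $\gt_d^0(k)\circ\{\AND,\OR\}_w$; its proof begins by appending a trivial $\{\AND,\OR\}_1$ bottom layer, which discards the given width $w$ entirely. Second, and relatedly, you conflate two different parameters both called $w$: the one in the theorem you are proving is the \emph{given} bottom-layer width and must survive into the final bound, whereas the one in Theorem~\ref{thm:constdepth} is a \emph{chosen} partial-decision-tree parameter (typically set to $k+\log m$). Your line ``Replacing $w$ by its intended value $k+\log m$'' is exactly where this bites --- you cannot reassign the given $w$. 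If you reopen the proof of Theorem~\ref{thm:constdepth} to start from the existing width-$w$ layer (so $p_0$ carries a $1/w$ and later $p_i$ carry $1/(k+\log m)$), and then apply Lemma~\ref{lem:spectrumbasecase} to the leaves with width $k+\log m$, you do recover a bound of the right shape $C^d\cdot 2^{-\Omega(\ell/(w(k+\log m)^{d-1}))+k}$; but that is no longer a black-box citation of Theorem~\ref{thm:constdepth}. Finally, note that although you announce an inductive hypothesis for depth $d-1$, your argument never actually uses it --- it jumps straight to the base case --- so either drop the induction framing or follow the paper's one-level-at-a-time scheme.
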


\begin{proof}
    We apply induction. The base case of $d=1$ is taken care of by \Cref{lem:spectrumbasecase}. \ 
    
    We now prove the inductive step for depth $d$. Sample $\rho\sim R_p$ with $p = \frac{1}{128w(m2^k)^{1/(k+\log m)}} = \frac{1}{128w}$, and let $t = p\ell/2 = \ell/256w$. By \Cref{lem:multiswitch}, all the bottom depth-2 $\g(k)\circ\{\AND,\OR\}_w$ subcircuits of $f|_\rho$ can be calculated by a $(k+\log m)$-partial depth-$t$ decision tree with probability $\ge 1-4\cdot 2^{-t}$. By \Cref{lem:depthred}, this implies $f|_\rho$ becomes a $(t,\gt_{d-1}^0(k)\circ\{\AND,\OR\}_{k+\log m})$-tree $T$. Furthermore, each leaf circuit has effective size $\le m$. Call $\rho$ \emph{good} if $f|_\rho$ simplifies to such a tree. Then

    \[
        W^{\ge \ell}[f] \le 2\E_{\rho\sim R_{p}}[W^{\ge p\ell}[f|_\rho]] \le 2\E_{\rho\sim R_{p}}[W^{\ge p\ell}[f|_\rho]|\rho\text{ is good}] + 8\cdot 2^{-t}. 
    \]

Fix a good $\rho$. For a leaf $L$ of $T$, let $\tau_L$ be the restriction induced by the path to $L$ in $T$. We know by Lemma (cite) that \[W^{\ge p\ell}[f|_\rho]\le \max_{\text{leaf }L}W^{\ge p\ell - t}[f|_{\rho\circ\tau_L}]\le \max_{\text{leaf }L}W^{\ge p\ell/2}[f|_{\rho\circ\tau_L}].\] As $f|_{\rho\circ\tau_L}$ is a $\gt_{d-1}^0(k)\circ\{\AND,\OR\}_{k+\log m}$ circuit for every $L$ we can then use the inductive hypothesis to bound \[\max_{\text{leaf }L} W^{\ge p\ell/2}[f|_{\rho\circ\tau_L}]\le 4^{d-1}\cdot 2^{-\frac{p\ell}{80(k+\log m)(128(k+\log m))^{d-2}} + k} = 4^{d-1}\cdot 2^{-\frac{\ell}{80w(128(k+\log m))^{d-1}} + k}.\] Putting this all together, we get

\begin{align*}
W^{\ge \ell}[f] & \le 2\E_{\rho\sim R_{p}}[W^{\ge p\ell}[f|_\rho]|\rho\text{ is good}] + 8\cdot 2^{-t} \\ 
& \le 2\cdot 4^{d-1}\cdot 2^{-\frac{\ell}{80w(128(k+\log m))^{d-1}} + k} + 8\cdot 2^{-\ell/256w} \\ & \le 4^d\cdot 2^{-\frac{\ell}{80w(128(k+\log m))^{d-1}} + k}
\end{align*}

\end{proof}

We can bootstrap \Cref{thm:esft} with \Cref{thm:equivalence} to yield the following properties about $\gt^0(k)$

\begin{thm}
\label{thm:gtfourier}
Let $f$ be a size-$m$ $\gt_d^0(k)$ circuit and define $t:= (k+\log m)^{d-1}$. Then the following is true for some $C$

    \begin{enumerate}
    \item ESFT: $W^{\ge \ell}[f]\le C\cdot 2^k \cdot 2^{-\Omega\left(\frac{\ell}{t}\right)}.$
    \item SLTP: For all $0<p<1$, $\Pr_{\rho\sim R_p}[\deg(f|_\rho)\ge \ell]\le C\cdot  O(pkt)^\ell.$

    \item InfK: $\text{Inf}^\ell[f]\le  C\cdot O(kt)^\ell.$
    \item L1: $\sum_{|S|= \ell} |\widehat{f}(x)|\le C\cdot  O(kt)^\ell.$
    \item FMC: $f$ is $\eps$-concentrated on $2^{O((k+\log(1/\eps))t\log t)}$ coefficients.
\end{enumerate} where $f$ and any hidden constants only depend on $d$.
\end{thm}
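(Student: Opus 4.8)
The plan is to read item~(1) directly off \Cref{thm:esft} and to obtain items~(2)--(5) by feeding~(1) into the equivalences of \Cref{thm:equivalence}. For~(1), I would write the size-$m$ $\gt_d^0(k)$ circuit $f$ as a $\gt_d^0(k)\circ\{\AND,\OR\}_1$ circuit by appending a bottom layer of fan-in-$1$ identity gates; this has effective size at most $m$ and bottom width $w=1$, so \Cref{thm:esft} gives $W^{\ge\ell}[f]\le 4^d\cdot 2^{-\ell/(80(128(k+\log m))^{d-1})+k}$, which is exactly the asserted $C\cdot 2^k\cdot 2^{-\Omega(\ell/t)}$ with $t=(k+\log m)^{d-1}$ and $C=4^d$.

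The one subtlety in applying \Cref{thm:equivalence} is that its ESFT hypothesis asks for a bound $C'e^{-\Omega(\ell/t')}$ with a \emph{fixed} constant, whereas~(1) carries a $2^k$ prefactor; I would absorb this prefactor into the rate by splitting on the size of $\ell$. Writing $c=1/(80\cdot 128^{d-1})$ so that $W^{\ge\ell}[f]\le 4^d\,2^{-c\ell/t+k}$: if $\ell\ge 2kt/c$ then $-c\ell/t+k\le -c\ell/(2t)\le -c\ell/(2kt)$, hence $W^{\ge\ell}[f]\le 4^d\,2^{-(c/2)\ell/(kt)}$; and if $\ell< 2kt/c$ then $W^{\ge\ell}[f]\le 1\le 2\cdot 2^{-(c/2)\ell/(kt)}$, since the exponent is then bounded by $1$ in absolute value. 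Thus the class of size-$m$ $\gt_d^0(k)$ circuits satisfies ESFT with rate parameter $t':=kt=k(k+\log m)^{d-1}$ and a constant depending only on $d$. Plugging $t'$ into \Cref{thm:equivalence} yields (2), (3), (4) verbatim: $\Pr_{\rho\sim R_p}[\deg(f|_\rho)\ge\ell]\le C'\,O(pt')^\ell=C'\,O(pkt)^\ell$, $\text{Inf}^\ell[f]\le C'\,O(t')^\ell=C'\,O(kt)^\ell$, and $\sum_{|S|=\ell}|\widehat f(S)|\le C'\,O(t')^\ell=C'\,O(kt)^\ell$.

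For the Fourier-mass-concentration bound~(5) I would not invoke \Cref{thm:equivalence} as a black box, since that route gives only $(kt)^{O(kt\log(1/\eps))}=2^{O(kt\log(1/\eps)\log(kt))}$, which is weaker than the claimed $2^{O((k+\log(1/\eps))t\log t)}$; instead I would run the standard thresholding argument using~(1) and~(4) directly. Choose a cutoff level $\ell^\star=\Theta\!\big(t(k+\log(1/\eps))\big)$ large enough that $W^{>\ell^\star}[f]\le\eps/2$, which is possible by~(1). For each level $0\le i\le\ell^\star$ let $L_i=C'\,O(kt)^i$ be the $L_1$ bound from~(4), set $\theta_i=\eps/(2\ell^\star L_i)$, and keep exactly those $S$ with $|S|\le\ell^\star$ and $|\widehat f(S)|\ge\theta_i$; the discarded coefficients at level $i$ carry mass at most $\theta_i L_i\le\eps/(2\ell^\star)$, so the total discarded mass is at most $\eps$. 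The number of kept coefficients is at most $\sum_{i\le\ell^\star}L_i/\theta_i=\sum_{i\le\ell^\star}2\ell^\star L_i^2/\eps\le\mathrm{poly}(\ell^\star,1/\eps)\cdot(kt)^{O(\ell^\star)}=2^{O(\ell^\star\log(kt))}$, and since $k\le(k+\log m)^{d-1}=t$ when $d\ge 2$ we have $\log(kt)=O(\log t)$, giving $2^{O(t(k+\log(1/\eps))\log t)}$ as claimed.

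The main obstacle, such as it is, is purely the bookkeeping around the $2^k$ prefactor: everything hinges on recognizing that this multiplicative loss is equivalent to degrading the ESFT rate from $1/t$ to $1/(kt)$, after which \Cref{thm:equivalence} applies off the shelf for (2)--(4) and the thresholding argument delivers~(5). (In the degenerate case $d=1$, where $t=1$ and $k$ need not be at most $t$, the displayed bound in~(5) should be read with $\log t$ standing for $\log(k+\log m)$; all statements remain correct under that reading.)
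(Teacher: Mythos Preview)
Your proposal is correct and follows the same route as the paper: item~(1) from \Cref{thm:esft} via a trivial bottom $\AND_1$ layer, absorption of the $2^k$ prefactor into the ESFT rate so that \Cref{thm:equivalence} yields (2)--(4), and a direct thresholding argument (rather than the black-box FMC conclusion) for~(5). The only cosmetic differences are that the paper absorbs $2^k$ via the one-liner $W^{\ge\ell}[f]\le(W^{\ge\ell}[f])^{1/k}$ (valid since $W\le 1$) instead of your case split on $\ell$, and for~(5) uses a single uniform threshold $\eps/(2(C'kt)^w)$ across all low levels rather than your per-level $\theta_i$.
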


\begin{proof}
Add a trivial $(d+1)$-st layer of $\AND_1$ gates at the base of $f$ and apply \Cref{thm:esft} to deduce that \[W^{\ge \ell}[f]\le 4^d\cdot 2^{-\frac{\ell}{80(128(k+\log m))^{d-1}} + k}, \] proving the first item. Now since we know $W^{\ge \ell}[C]\le 1$ (by Parseval's) and $k\ge 1$, it follows that \[W^{\ge \ell}[f]\le (W^{\ge \ell}[C])^{1/k}\le C_d \cdot 2^{-\Omega\left(\frac{\ell}{kt}\right)}. \] Therefore, the second, third, and fourth items follow by applying \Cref{thm:equivalence} (as well as a version of the fifth item with weaker parameters). We now prove Item 5. \\

Notice that for $w:= t\cdot O(k +\log(1/\eps))$, we have by Item 1 that $W^{\ge w}[f]\le \eps/2$. Now by Item 4, \begin{align}\sum_{|S|< w} |\widehat{f}(S)| \le \sum_{i=0}^{w-1} O(kt)^i \le (C'kt)^w. \label{eqn:l1bound}\end{align} Now let $\calF = \{S : |S| < w\text{ and } |\widehat{f}(S)|\ge \frac{\eps/2}{(C'kt)^w}\} $. Notice that \begin{align*}\sum_{S\in \calF} \widehat{f}(S)^2 & = 1 - \sum_{|S|\ge w} \widehat{f}(S)^2 - \sum_{|S| < w, S\notin\calF}\widehat{f}(S)^2 \\ & \ge 1-\eps/2 - \frac{\eps/2}{(C'kt)^w}\sum_{|S| < w}|\widehat{f}(S)| \\ & \ge 1-\eps.\end{align*} By \cref{eqn:l1bound}, the maximum number of terms in $\calF$ can be at most \[(C'kt)^w/\left(\frac{\eps/2}{(C'kt)^w}\right) = 2(C'kt)^{2w}/\eps = 2^{O(w\log(kt) + \log(1/\eps))} = 2^{O((k+\log(1/\eps))t\log t)}\] and thus Item 5 is proved.
\end{proof}

As a first application, the work of Kushilevitz and Mansour \cite{km91} allows us to translate FMC to learnability results.

\begin{lem}[\cite{km91}]
\label{lem:learn}
    Let $f$ be a Boolean function such that there exists a $t$-sparse multivariate polynomial $g$ (over the Fourier basis) such that $\E_{x\sim U_n}[(f(x)-g(x))^2]\le \eps$. There exists a randomized algorithm, whose running time is polynomial in $t,n, 1/\eps, \log(1/\delta)$ such that given blackbox access to $f$ and $\delta > 0$, outputs a function $h$ such that over the randomness of the algorithm, \[\Pr[\E_{x\sim U_n}[(f(x)-h(x))^2]\le O(\eps)]\ge 1-\delta.\]
\end{lem}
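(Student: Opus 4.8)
The plan is to run the Kushilevitz--Mansour learning algorithm: use black-box sampling to locate every ``heavy'' Fourier coefficient of $f$, estimate each of them, and output the resulting sparse polynomial. Write $g=\sum_{S\in\Gamma}\widehat g(S)\,x^S$ with $|\Gamma|\le t$, and fix the threshold $\theta:=\sqrt{\eps/t}$ (assume $\eps\le 1$, since otherwise $h\equiv 0$ already works). First I would record two Parseval consequences. Since $f$ is $\pmone$-valued, $\sum_S\widehat f(S)^2=1$, so at most $1/\theta^2=t/\eps$ sets $S$ satisfy $|\widehat f(S)|\ge\theta$. And Parseval applied to $f-g$ gives $\sum_S(\widehat f(S)-\widehat g(S))^2\le\eps$, so for \emph{any} collection $L$ of sets that contains every $\theta$-heavy coefficient, splitting the remaining coefficients according to whether they lie in $\Gamma$ gives
\[ \sum_{S\notin L}\widehat f(S)^2\ \le\ \sum_{S\notin\Gamma,\ |\widehat f(S)|<\theta}\widehat f(S)^2\ +\ \sum_{S\in\Gamma}\widehat f(S)^2\ \le\ \eps+t\theta^2\ =\ 2\eps, \]
where the first sum is at most $\sum_{S\notin\Gamma}(\widehat f(S)-\widehat g(S))^2\le\eps$ and the second is at most $|\Gamma|\theta^2$.

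The workhorse is the Goldreich--Levin/KM subroutine, which from black-box access lists all $S$ with $\widehat f(S)^2\ge\theta^2$ using $\mathrm{poly}(n,1/\theta,\log(1/\delta))$ queries to $f$. Partition the subsets of $[n]$ into buckets indexed by $\alpha=S\cap[k]$, and use the identity, valid for $\pmone$-valued $f$,
\[ \sum_{S:\ S\cap[k]=\alpha}\widehat f(S)^2\ =\ \E_{z,w\in\pmone^k,\ y\in\pmone^{n-k}}\bigl[f(z,y)\,f(w,y)\,z^\alpha\,w^\alpha\bigr], \]
which follows by collapsing the first $k$ coordinates into the function $y\mapsto\E_z[f(z,y)z^\alpha]$ and applying Parseval. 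The right-hand side is the mean of a quantity in $[-1,1]$, so by Hoeffding it is estimable to additive error $\theta^2/4$ with failure probability $\delta'$ from $\mathrm{poly}(1/\theta,\log(1/\delta'))$ samples of $f$. The subroutine then grows $k$ from $0$ to $n$: at each step it splits every surviving bucket into its two children, estimates the weight of each child, and keeps a child alive only if its estimated weight is at least $\theta^2/2$. Since the bucket weights at a fixed level are disjoint and sum to $1$, at most $O(1/\theta^2)$ buckets survive at any level, so the total number of estimates is $\mathrm{poly}(n,1/\theta)$; shrinking $\delta'$ by this factor and union bounding makes every estimate accurate to within $\theta^2/4$, whence no bucket of true weight $\ge\theta^2$ is ever dropped and every $\theta$-heavy set reaches the final list.

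Finally I would assemble $h$. Run the subroutine to get a list $L_0$ containing all $\theta$-heavy coefficients; for each $S\in L_0$ estimate $\widehat f(S)=\E_x[f(x)x^S]$ (again a $\pmone$-valued average) to additive error $\gamma:=\eps/(4\sqrt t)$ by sampling; discard each $S$ whose estimate has magnitude below $\theta/2$; let $L$ be the surviving set with estimates $\widetilde{a}_S$, and output $h:=\sum_{S\in L}\widetilde{a}_S\,x^S$. Since $\gamma\le\theta/4$ (using $\eps\le1$), no $\theta$-heavy coefficient is discarded, so $L$ still contains every $\theta$-heavy coefficient; moreover each retained $S$ has $|\widehat f(S)|\ge\theta/4$, so $|L|\le 16/\theta^2=16t/\eps$ by Parseval. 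Hence
\[ \E_x[(f(x)-h(x))^2]\ =\ \sum_{S\in L}\bigl(\widehat f(S)-\widetilde{a}_S\bigr)^2\ +\ \sum_{S\notin L}\widehat f(S)^2\ \le\ |L|\gamma^2+2\eps\ \le\ 3\eps, \]
where the final tail sum is bounded by the first display. Tracking all sample counts, the procedure runs in time $\mathrm{poly}(t,n,1/\eps,\log(1/\delta))$, and distributing a $\delta$ failure budget over the polynomially many estimations yields overall success probability $\ge 1-\delta$. The main obstacle is the bookkeeping inside the subroutine: one must choose the pruning threshold, the per-estimate accuracy, and the per-estimate failure probability consistently so that the Parseval bound genuinely caps the number of surviving buckets, so that no heavy coefficient is ever pruned, and so that the accumulated error and failure probability land at $O(\eps)$ and $\delta$ respectively; everything else reduces to the two Parseval computations above.
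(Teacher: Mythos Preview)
The paper does not prove this lemma at all; it is simply quoted from \cite{km91} and used as a black box. Your write-up is the standard Kushilevitz--Mansour argument (Goldreich--Levin bucket search to locate heavy coefficients, followed by direct estimation), and it is correct in substance.

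One small slip worth fixing: in your first display you upper bound $\sum_{S\notin L}\widehat f(S)^2$ by $\sum_{S\notin\Gamma,\,|\widehat f(S)|<\theta}\widehat f(S)^2+\sum_{S\in\Gamma}\widehat f(S)^2$ and then assert that the second sum is at most $|\Gamma|\theta^2$. As written that is false, since $\Gamma$ may contain $\theta$-heavy coefficients. What you actually need (and what your splitting gives) is the sum over $S\in\Gamma\setminus L$, each term of which is indeed below $\theta^2$ because $L$ contains all heavy coefficients; that sum has at most $|\Gamma|\le t$ terms and is $\le t\theta^2=\eps$. With that index set corrected the $2\eps$ bound goes through exactly as you intend, and the rest of the argument is fine.
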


Using this lemma, we can derive a learning algorithm for $\gt^0(k)$.

\begin{thm}
    There exists an algorithm such that given blackbox access to any $C\in\gt_d^0(k)$ of size $m$ and $\delta>0$, outputs a function $h$ such that over the randomness of the algorithm, \[\Pr[\E_{x\sim U_n}[(f-h)^2]\le O(\eps)]\ge 1-\delta.\] Furthermore, this algorithm runs in $\text{poly}(n, 2^{\tilde{O}((k+\log(1/\eps))(k+\log m)^{d-1})}, 1/\eps , \log(1/\delta))$
\end{thm}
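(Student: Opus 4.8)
The plan is to derive this directly from the Fourier mass concentration bound of \Cref{thm:gtfourier} (Item 5) together with the sparse-learning algorithm of Kushilevitz and Mansour (\Cref{lem:learn}); no new ingredient is needed, only parameter bookkeeping. First I would set $t := (k+\log m)^{d-1}$ as in \Cref{thm:gtfourier}, and (exactly as in the proof of that theorem) append a trivial layer of $\AND_1$ gates so that the size-$m$ circuit $C$ is of the form $\gt_d^0(k)\circ\AND_1$, to which \Cref{thm:esft} applies. Item 5 of \Cref{thm:gtfourier} then hands us a set $\calF$ of Fourier coefficients with $|\calF|\le N := 2^{O((k+\log(1/\eps))\,t\log t)} = 2^{\tilde{O}((k+\log(1/\eps))(k+\log m)^{d-1})}$ on which $C$ is $\eps$-concentrated, i.e. $\sum_{S\notin\calF}\widehat{C}(S)^2\le\eps$.

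Next I would observe that the truncated polynomial $g := \sum_{S\in\calF}\widehat{C}(S)\,x^S$ is an $N$-sparse multilinear polynomial over the Fourier basis satisfying $\E_{x\sim U_n}[(C(x)-g(x))^2] = \sum_{S\notin\calF}\widehat{C}(S)^2 \le \eps$, which is immediate from Parseval applied to $C-g$. Crucially we only need such a $g$ to \emph{exist}: the algorithm of \Cref{lem:learn} discovers the heavy coefficients itself (via Goldreich--Levin style queries) and merely takes the sparsity bound $N$ as input, never the set $\calF$. Feeding $N$, blackbox access to $C$, and the parameters $\eps,\delta$ into \Cref{lem:learn} then yields, with probability $\ge 1-\delta$, a hypothesis $h$ with $\E_{x\sim U_n}[(C(x)-h(x))^2]\le O(\eps)$, in time $\text{poly}(N,n,1/\eps,\log(1/\delta)) = \text{poly}\big(n,\ 2^{\tilde{O}((k+\log(1/\eps))(k+\log m)^{d-1})},\ 1/\eps,\ \log(1/\delta)\big)$, which is exactly the claimed bound.

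I do not expect a genuine obstacle here: the technical weight lies entirely in \Cref{thm:esft}/\Cref{thm:gtfourier}, which are already established, and the present statement is essentially a corollary. The only two steps warranting a line of care are (i) the elementary passage from ``$\eps$-concentrated on $N$ coefficients'' to ``admits an $N$-sparse squared-$\ell_2$ approximator of error $\eps$'' (the Parseval argument above), and (ii) if one wants the conclusion on the nose rather than up to the constant absorbed by the $O(\cdot)$ in \Cref{lem:learn}, re-running with $\eps$ scaled down by that constant, which inflates $N$ and hence the runtime by only a $\text{poly}$ factor. Optionally, since $C$ is Boolean, one may round the output to $\sgn(h)$ to obtain a Boolean hypothesis with disagreement probability $O(\eps)$, though the statement does not require this.
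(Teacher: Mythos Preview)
Your proposal is correct and follows essentially the same route as the paper: invoke the FMC bound (Item~5 of \Cref{thm:gtfourier}) to obtain an $N$-sparse $\ell_2$-approximator via Parseval, then feed that sparsity bound into the Kushilevitz--Mansour learner of \Cref{lem:learn}. The paper's proof is terser but identical in substance; your added remarks about the learner not needing $\calF$ explicitly and the optional $\sgn$-rounding are sound but not required.
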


\begin{proof}
    From \Cref{thm:gtfourier}, for any $C\in \gt_d^0(k)$, there exists $g$ of sparsity $t = 2^{\tilde{O}((k+\log(1/\eps))(k+\log m)^{d-1})}$, created by taking the Fourier expansion of $C$ and only keeping the $\eps$-concentrated coefficients $\calS\subset 2^{[n]}$, such that \[\E_{x\sim \{\pm 1\}^n}[(C(x)-g(x))^2]\le \E_{x\sim \{\pm 1\}^n}\left[\left(\sum_{S\notin \calS} \widehat{C}(S)x^S\right)^2\right] = \sum_{S\notin \calS} \widehat{C}(S)^2\le \eps.\] The result then follows by \Cref{lem:learn}.
\end{proof}

We also can prove a new correlation bound result with this Fourier spectrum. It is known that $\maj$ is a symmetric function that has $O_d(\log^{d-1}(m)/\sqrt{n})$ correlation against size-$m$ $\ac_d[\oplus]$ circuits. A natural question to ask is whether Majority is special in this regard, or if a random symmetric function (use $n+1$ coin tosses to assign a bit to each Hamming level) will display $O_d(\log^{d-1}(m)/n^{\alpha})$ correlation against size-$m$ $\ac_d[\oplus]$ circuits for some $\alpha$. Tal (\cite{tal17}, Theorem 6.1) used ESFT and L1 of $\ac$ to prove that random symmetric functions (or more specifically balanced symmetric functions) display $O_d(\log^{d-1}(m)/\sqrt{n})$ correlation against size-$m$ $\ac_d$ circuits, so it is natural to believe that this should similarly be true against $\ac[\oplus]$. Unfortunately, since $\pargate$ has all its Fourier weight at level $n$, this proof approach is doomed to fail for $\ac[\oplus]$ circuits, as the class doesn't demonstrate ESFT. However, we can now give partial progress towards this goal by showing a random symmetric function uncorrelates with $\gt^0(k)$ circuits, as this class contains gates which calculate parity as long as the Hamming weight of the input is at most $k$. This result can be seen as finding out how general of a circuit class we can stretch the Fourier argument before we reach the roadblock on this approach demonstrated by $\pargate$.

\begin{thm}
Let $f\in \gt_d^0(k)$, and let $g$ be a symmetric function, both mapping $\{\pm 1\}^n\to\{\pm 1\}$, and let $(k+\log m)^{d-1}\le O\left(\frac{n}{k+\log n}\right)^{1/3}$. Then \[\corr(f,g) := \E_x[f(x)g(x)] \le |\widehat{g}(\emptyset)| +  \frac{C_dk(k+\log m)^{d-1}}{\sqrt{n}}\] 
\end{thm}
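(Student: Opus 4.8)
Write $t := (k+\log m)^{d-1}$, so the target is $\corr(f,g) \le |\widehat g(\emptyset)| + O_d(kt/\sqrt n)$. The plan is to carry out the Fourier-analytic argument of Tal (\cite{tal17}, Theorem~6.1), feeding the level-$j$ $L_1$ bounds and exponentially small Fourier tails for $\gt_d^0(k)$ supplied by \Cref{thm:gtfourier} into the slots where Tal uses the analogous $\ac$ estimates. First, since $g$ is symmetric, $\widehat g(S)$ depends only on $|S|$; write $\widehat g(S) = c_j$ for $|S| = j$, so that Parseval gives $\sum_{j=0}^n \binom nj c_j^2 = 1$ and hence $|c_j| \le \binom nj^{-1/2}$. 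Expanding $\corr(f,g) = \sum_{j} c_j \sum_{|S|=j}\widehat f(S)$ in the Fourier basis and using $|c_0| = |\widehat g(\emptyset)|$, $|\widehat f(\emptyset)|\le 1$, and the triangle inequality reduces the problem to showing $\sum_{j\ge 1}\binom nj^{-1/2} L_{1,j}(f) \le O_d(kt/\sqrt n)$, where $L_{1,j}(f) := \sum_{|S|=j}|\widehat f(S)|$.

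Next I would split this sum at a threshold $J := \Theta_d\big(t\,(k+\log n)\big)$. For the low range $1 \le j < J$, insert the level-$j$ $L_1$ estimate $L_{1,j}(f) \le C_d(C_d' kt)^j$ from \Cref{thm:gtfourier} together with $\binom nj^{-1/2}\le (ej/n)^{j/2}$, so that the $j$-th term is at most $C_d\big(C_d' kt\sqrt{ej}/\sqrt n\big)^j$. Under the hypothesis $t \le O_d\big((n/(k+\log n))^{1/3}\big)$ (with a small enough implied constant, and using $k \le t$ for $d \ge 2$) the quantity $C_d' kt\sqrt{eJ}/\sqrt n$ is below a small absolute constant, so the low-range sum is a rapidly convergent series whose leading term is the $j=1$ term $C_dC_d'\sqrt e\,kt/\sqrt n$; hence the whole low range contributes $O_d(kt/\sqrt n)$.

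For the high range $j \ge J$ I would not use the $L_1$ bound (which is weak once $j$ is comparable to $J$) but instead Cauchy--Schwarz over the $\binom nj$ level-$j$ coefficients, $L_{1,j}(f) \le \binom nj^{1/2}\,W^{=j}[f]^{1/2}$, so the $j$-th term of our sum is at most $W^{=j}[f]^{1/2} \le W^{\ge j}[f]^{1/2}$. Now plug in the ESFT bound $W^{\ge j}[f] \le C_d\,2^{k}\,2^{-\Omega(j/t)}$ from \Cref{thm:gtfourier}; summing the resulting geometric tail from $j = J$ gives $O_d\big(t\,2^{k/2}\,2^{-\Omega(J/t)}\big)$, and the choice $J = \Theta_d\big(t(k+\log n)\big)$ makes this at most $kt/\sqrt n$, since $2^{-\Omega(J/t)}$ beats both the $2^{k/2}$ prefactor and a factor $\sqrt n$. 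Adding the two ranges gives $\sum_{j\ge1}\binom nj^{-1/2}L_{1,j}(f) = O_d(kt/\sqrt n)$, which together with the reduction in the first paragraph is the claim.

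The main obstacle is the balancing act in the choice of $J$. The low range wants $J$ small — small enough that the series over the $L_1$ bounds collapses to its $j=1$ term, which forces $J$ below roughly $n/(kt)^2$; the high range wants $J$ large — large enough that the exponentially small tail $W^{\ge j}[f]$ overcomes the $2^{k}$ prefactor, which forces $J$ above roughly $t(k+\log n)$. The stated bound $(k+\log m)^{d-1}\le O\big((n/(k+\log n))^{1/3}\big)$ is exactly the regime in which these two windows overlap, so a valid $J$ exists; making the constants line up (in particular a careful accounting of the level-$1$ $L_1$ bound, which is what carries the leading $kt/\sqrt n$) is the delicate part. A secondary point is that $g$ is an arbitrary symmetric function, so the sole information available about $\widehat g$ is Parseval's $|c_j|\le\binom nj^{-1/2}$; this is enough only because it is paired with genuinely sharp information about $f$ — small $L_1$ mass at low levels and exponential decay at high levels — rather than with any per-level control on $g$ itself.
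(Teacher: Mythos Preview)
Your proposal is correct and follows essentially the same route as the paper: split the Fourier sum at a threshold $\ell'=\Theta_d\bigl(t(k+\log n)\bigr)$, control the low levels via the $L_1$ bound together with $|\widehat g(S)|\le\binom{n}{|S|}^{-1/2}$ and a geometric-series argument, and control the high levels via Cauchy--Schwarz and ESFT. The only cosmetic difference is that the paper handles the high range with a single Cauchy--Schwarz $\sum_{|S|\ge\ell'}|\widehat f(S)\widehat g(S)|\le\sqrt{W^{\ge\ell'}[f]\,W^{\ge\ell'}[g]}\le\sqrt{W^{\ge\ell'}[f]}$, whereas you apply Cauchy--Schwarz level by level and then sum; this costs you an extra factor of $t$ in the tail, but it is absorbed by choosing the constant in $J$ appropriately.
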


\begin{proof}
    We note for $\ell'$ to be picked later, we can decompose \begin{align}\corr(f,g) = |\E_x[f(x)g(x)]| = \left|\sum_{S\subset[n]} \widehat{f}(S)\widehat{g}(S)\right|\le |\widehat{g}(\emptyset)| + \sum_{|S| < \ell'}|\widehat{f}(S)\widehat{g}(S)| + \sum_{|S|\ge \ell}|\widehat{f}(S)\widehat{g}(S)|.\label{eqn:corr}\end{align} We will bound the first summation using L1, and the second summation by ESFT. The second summation can be bounded as follows using Cauchy-Schwarz. \begin{align}\sum_{|S|\ge \ell}|\widehat{f}(S)\widehat{g}(S)|\le \sqrt{W^{\ge \ell'}[f]\cdot W^{\ge \ell'}[g]}\le \sqrt{ 4^d\cdot 2^{-\frac{\ell'}{80(128(k+\log m))^{d-1}} + k}}\le 1/\sqrt{n}\end{align}

    if we set $\ell' = c_d(k+\log n)(k + \log m)^{d-1}$ for some constant $c_d$ only depending on $d$. Now to bound the first summation, note since $g$ is symmetric, $\widehat{g}(S)$ is constant over all $S$ of same cardinality. Therefore, \[|\widehat{g}(S)| = \sqrt{\widehat{g}(S)^2} = \sqrt{\frac{1}{\binom{n}{|S|}}\sum_{S'; |S'|=|S|} \widehat{g}(S')^2}\le \sqrt{\frac{1}{\binom{n}{|S|}}}.\] Hence using L1 from \Cref{thm:gtfourier}, we can bound \begin{align}\sum_{|S| < \ell'}|\widehat{f}(S)\widehat{g}(S)|\le \sum_{1\le \ell < \ell'}\sqrt{\frac{1}{\binom{n}{\ell}}}\sum_{|S| = \ell}|\widehat{f}(S)|\le \sum_{1\le \ell < \ell'}\left(\frac{C_dk(k+\log m)^{d-1}}{\sqrt{n/\ell}}\right)^\ell\label{eqn:geoseries}\end{align} where $C_d$ is some constant depending on $d$. We bound this sum by a geometric series of the same first term and with common ratio $1/2$. Indeed, we see that the ratio of consecutive terms will be \[\frac{\left(\frac{C_dk(k+\log m)^{d-1}}{\sqrt{n/(\ell+1)}}\right)^{\ell+1}}{\left(\frac{C_dk(k+\log m)^{d-1}}{\sqrt{n/\ell}}\right)^\ell} =\frac{C_dk(k+\log m)^{d-1}}{\sqrt{n}}\sqrt{\frac{(\ell+1)^{\ell+1}}{\ell^\ell}}\le \frac{C_dk(k+\log m)^{d-1}}{\sqrt{n}}\sqrt{e\ell'} \le 1/2  \] where the last inequality follows from the assumption $(k+\log m)^{d-1}\le O\left(\frac{n}{k+\log n}\right)^{1/3}$. Hence the quantity in \Cref{eqn:geoseries} can be upper bounded by twice the first term, so \[\sum_{|S| < \ell'}|\widehat{f}(S)\widehat{g}(S)|\le \frac{C_dk(k+\log m)^{d-1}}{\sqrt{n}}.\] Hence from (\ref{eqn:corr}), \[\corr(f,g)\le |\widehat{g}(\emptyset)| + \frac{1}{\sqrt{n}} + \frac{C_dk(k+\log m)^{d-1}}{\sqrt{n}}.\]
\end{proof}
\section{Open Problems}

We conclude with some directions for future research.
\begin{itemize}
    \item Our tightness result in \Cref{apx:tight} uses a function in $\g(k)\circ \AND_w$, but it is not known whether a $k$-$\OR\circ \AND_w$ circuit can saturate the bound. In particular, is \Cref{thm:switch} tight for $\ac(k)$ or $\tc^0(k)$ circuits?
    \item It was already noted that \Cref{cor:lb} is tight in essentially every way possible for $\gt^0(k)$ circuits. However, all our tightness results (\Cref{apx:tight}) use constructions that abuse the generality of $\gt(k)$. Are there constructions exhibiting tightness which are in $\tc^0(k)$ or $\ac(k)$? Alternatively, can we obtain stronger size lower bounds if we were only concerned with $\ac(k)$ or even $\tc^0(k)$ circuits? Either finding a pathological construction in $\ac(k)/\tc^0(k)$ or proving stronger lower bounds for these weaker circuit class would be interesting.

    \item We touched up on how a result of Allender and Kouck\'{y} (\cite{ak10}, Theorem 3.8) states that there exists an absolute constant $C_{AK}$ such that $\maj_n$ can be written as an $\ac(n^\eps)$ circuit with depth $\le C_{AK}/\eps$ and size $O(n^{1+\eps})$. \cite{ak10} actually only use $\AND_2,\OR_2$, and $\maj_{n^\eps}$ gates. If we were allowed all the gate classes in $\ac(n^\eps)$, could we find a better construction (more specifically a lower depth blowup)? That way, we would get a stronger reduction from proving bounds on $\tc$ to $\ac(n^\eps)$ where we only need to show size lower bounds on smaller depth $\ac(n^\eps)$ circuits. 
    \item Are there any other applications of the generalized switching lemma? Due to the versatility of this theorem, it can essentially be plugged in wherever the classical switching lemma was used to get more general results. Perhaps this can generalize other results or even push a switching lemma argument that initially wouldn't go through (the remark after \Cref{apx:tight} gives an example of the general switching lemma giving stronger bounds than the classical one).
\end{itemize}

\section*{Acknowledgements}
The author thanks David Zuckerman and Chin Ho Lee for many valuable discussions, Xin Lyu for explaining his work in \cite{Lyu}, anonymous reviewers for valuable feedback and for pointing us to the construction presented in \Cref{apx:tightlb}, and Jeffrey Champion, Shivam Gupta, Michael Jaber and Jiawei Li for helpful comments.

\bibliographystyle{alpha}        
\bibliography{refdb.bib}

\appendix
\section{Deferred Proofs}

\subsection{Showing $\tc^0(k)\subset \gt^0(k)$}
Here, we prove that circuits created by biased $\ltf$ gates are indeed contained in $\gt^0(k)$.
\begin{thm}
\label{thm:thrgk}
Any $\thr$ gate $f$ with balance $\le k$ (see \Cref{defn:bal}) is, upon negating certain input bits, in $\g(k)$.
\end{thm}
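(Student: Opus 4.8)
The plan is to put $f$ into a normalized form and then run a short case analysis on the sign of $\theta$. First I would absorb all negative weights into input negations: writing $f(x)=\sgn(\sum_i w_i x_i-\theta)$ on $\pmone^n$, for each $i$ with $w_i<0$ replace $x_i$ by $-x_i$, turning $w_ix_i$ into $|w_i|x_i$; since negation is an involution, $f$ is in $\g(k)$ up to input negations iff this normalized gate is, so WLOG every $w_i\ge 0$. Membership in $\g(k)$ depends only on the Hamming weight of the input and is therefore invariant under permuting the input wires, so I may further assume $w_1\le\cdots\le w_n$, in agreement with the order $\sigma$ of \Cref{defn:bal}. Set $W=\sum_i w_i$; if $W=0$ the gate is constant and lies in $\g(k)$ trivially, so assume $W>0$. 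Under the correspondence $b\mapsto(-1)^b$, a coordinate that is a ``one'' in the $\bool$ picture is a coordinate with value $-1$, so for an input $x$ with one-set $S=\{i:x_i=-1\}$ we have $\sum_i w_ix_i=W-2\sum_{i\in S}w_i$, hence $\sgn(\sum_i w_i x_i-\theta)=\sgn\!\big(W-2\sum_{i\in S}w_i-\theta\big)$, and $|S|$ is the number of ones while $n-|S|$ is the number of zeros.

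Next I would unwind the balance hypothesis. By \Cref{defn:bal} and $w_i\ge0$, $k$ is the least index with $W-2\sum_{i\le k}w_i<|\theta|$, i.e.\ $\sum_{i\le k}w_i>\tfrac{W-|\theta|}{2}$. In the case $\theta\ge 0$ I claim $f$ is orlike, constant with value $-1$ on every input of Hamming weight $\ge k$: if $|S|=s\ge k$ then, the $s$ lightest weights being $w_1,\dots,w_s$, we get $\sum_{i\in S}w_i\ge\sum_{i\le s}w_i\ge\sum_{i\le k}w_i>\tfrac{W-\theta}{2}$, so $W-2\sum_{i\in S}w_i-\theta<0$ and $f(x)=-1$. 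In the case $\theta<0$ we have $|\theta|=-\theta$, so the balance inequality reads $\sum_{i\le k}w_i>\tfrac{W+\theta}{2}$, and I claim $f$ is andlike, constant with value $1$ on every input with at least $k$ zeros: if $n-|S|\ge k$, i.e.\ $|S|=s\le n-k$, then the $s$ heaviest weights all have index $\ge k+1$, so $\sum_{i\in S}w_i\le\sum_{i\ge n-s+1}w_i\le\sum_{i\ge k+1}w_i=W-\sum_{i\le k}w_i$, whence $W-2\sum_{i\in S}w_i-\theta\ge -W+2\sum_{i\le k}w_i-\theta>0$ and $f(x)=1$.

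In either case the inequalities used are strict, so the tie-breaking convention of the $\ltf$ at $\sum_i w_ix_i=\theta$ is irrelevant on precisely the inputs in question, and $f$ is genuinely constant there; thus the input-negated version of $f$ is in $\g(k)$, which proves the theorem. I do not expect a real obstacle here, since the argument is a short calculation; the only point requiring care is the bookkeeping between ``ones''/``zeros'' in the $\bool$ picture and the $\pm 1$ values, together with the sign of $\theta$ determining whether $f$ turns out orlike or andlike — getting those two correspondences right is essentially the whole content.
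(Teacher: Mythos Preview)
Your proposal is correct and follows essentially the same approach as the paper: normalize all weights to be nonnegative by negating inputs, sort, then split on the sign of $\theta$ and use the balance inequality together with the weight ordering to show the gate is orlike (respectively andlike). The paper only writes out the orlike case and declares the other analogous, while you carry out both; your bound $\sum_{i\in S}w_i\ge\sum_{i\le k}w_i$ is algebraically a slight repackaging of the paper's correction-term calculation. One small wording slip: ``$k$ is the least index with $W-2\sum_{i\le k}w_i<|\theta|$'' should read ``$k$ is at least the least such index''; since $W-2\sum_{i\le j}w_i$ is nonincreasing in $j$, the inequality you actually use still holds at $k$, so the argument is unaffected.
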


\begin{proof}
    Let $f:\{0,1\}^n\to\{\pm 1\}$ be defined as $f(x) = \text{sgn}(\sum_{i=1}^n{w_i(-1)^{x_i}} - \theta)$. By negating input bits, we can assume each $w_i\ge 0$. Furthermore, since the definition of $\g(k)$ is symmetric (solely depends on the sum of input bits), we can assume WLOG that $0\le w_1\le\cdots\le w_n$. Since $f$ has balance $\le k$, we know that $-\sum_{i\le k} w_i + \sum_{i>k}w_i < \theta$. Assuming $-\sum_{i\le k} w_i + \sum_{i>k}w_i < |\theta|$, we will show $f$ is an orlike $\g(k)$ gate (an analogous proof will show the case for $-\theta$ and andlike). \ 
    
    Consider $x$ such that $\sum x_i\ge k$. Let $S_x\subset [n]$ denote the set where $i\in S_x$ iff $x_i = 1$. It follows that \begin{align}f(x) & = \text{sgn}\left(-\sum_{i\in S_x}w_i + \sum_{i\notin S_x} w_i - \theta \right) \nonumber \\ & = \text{sgn}\left(-\sum_{i\le k}w_i + \sum_{i> k} w_i - \theta  + 2\sum_{i\in [k]\setminus S_x} w_i- 2\sum_{i\in [k+1,n]\cap S_x}w_i  \right) \label{eqn:ltf}\end{align}

    We know by assumption that $-\sum_{i\le k}w_i + \sum_{i> k} w_i - \theta \le 0$. Since $|S_x|\ge k$, \[|[k]\setminus S_x| = k - |k\cap S_x| \le  |[k+1,n]\cap S_x|, \] and each element in $[k]\setminus S_x$ is strictly smaller than each element in $[k+1,n]\cap S_x$. Combining these observations with the fact $w_1\le\dots\le w_n$, it follows \[2\sum_{i\in [k]\setminus S_x} w_i- 2\sum_{i\in [k+1,n]\cap S_x}w_i \le 0.\]

    Therefore, \[-\sum_{i\le k}w_i + \sum_{i> k} w_i - \theta  + 2\sum_{i\in [k]\setminus S_x} w_i- 2\sum_{i\in [k+1,n]\cap S_x}w_i\le 0.\] Combining this with (\ref{eqn:ltf}) it follows $f(x) = -1$ for any $x$ with $\sum x_i\ge k$. Hence $f$ is an orlike $\g(k)$ gate as desired.
\end{proof}

\subsection{Tightness of the $\gt^0(k)$ Switching Lemma and Correlation Bounds}
\label{sec:tight}
In this section, we give constructions which show that various bounds we establish are indeed tight. We first show that the switching lemma we established is tight.

\begin{thm}
\label{apx:tight}
Let $p,w,t,k$ be parameters such that $pkw<1/2$ and $t > k/2$. There exists a $\g(k)\circ \AND_w$ circuit $C$ such that  \[\Pr_{\rho\sim R_p}[\dt(C|_\rho)\ge t]\ge 2^{k/2}(.5pw)^t\]
\end{thm}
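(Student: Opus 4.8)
The plan is to exhibit a concrete $\g(k) \circ \AND_w$ circuit whose canonical decision tree is forced to be large with the claimed probability, essentially reverse-engineering the union bound in \Cref{thm:switch}. The natural candidate is a circuit computing (a truncation of) parity: take $kw$ input variables, partition them into $k$ disjoint blocks $B_1, \dots, B_k$ of $w$ variables each, and let $C = \pargate_{kw}$ restricted to the orlike structure, i.e. $C$ is the $\g(k)$ gate on top of the $2^{w-1}k$ width-$w$ $\AND$ clauses, one clause for each block and each odd-parity pattern on that block, where the top gate outputs a fixed constant once $\ge k$ of its inputs are satisfied. Since at most one clause per block can be satisfied by any assignment, $C$ agrees with $\bigoplus_{j=1}^k \pargate_{B_j}$ on inputs where fewer than $k$ clauses fire, which is all inputs, so $C = \pargate_{kw}$; and $C$ is genuinely in $\g(k) \circ \AND_w$. (One may need to pad to $n$ variables, but the extra variables are irrelevant.)

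The main steps: first, argue that under $\rho \sim R_p$, with the stated probability at least $2^{k/2}(.5pw)^t$, the restricted circuit $C|_\rho$ still depends on $\ge t$ live variables in a way that forces decision-tree depth $\ge t$. Concretely, parity on a set of live variables $L$ has $\dt(\pargate_L) = |L|$, so it suffices to show $\Pr_\rho[\,\text{at least } t \text{ variables among the } kw \text{ relevant ones survive}\,] \ge 2^{k/2}(.5pw)^t$. Second, lower-bound this survival probability: the probability that a specific set of $t$ of the $kw$ variables all survive is exactly $p^t$, and there are $\binom{kw}{t}$ such sets; using $t > k/2$ and $pkw < 1/2$ to control overlaps (inclusion–exclusion or a second-moment / Bonferroni argument), one shows the probability that $\ge t$ survive is at least on the order of $\binom{kw}{t} p^t \gtrsim (kw/t)^t p^t$, and then bounding $\binom{kw}{t} \ge (kw/t)^t$ together with a factor of $2^{k/2}$ extracted from the combinatorics (this is where the $t > k/2$ hypothesis and the block structure enter) should yield $\ge 2^{k/2}(pw/2)^t$. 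Third, note $pkw < 1/2$ guarantees the relevant quantities are in the regime where these lower bounds on binomial sums are clean and the geometric-tail corrections are negligible.

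The step I expect to be the main obstacle is the second one: turning the trivial lower bound "$\Pr[\text{a fixed } t\text{-set survives}] = p^t$" into a lower bound of the right shape on "$\Pr[\ge t \text{ survive}]$", because the events for different $t$-subsets overlap heavily and a naive union bound goes the wrong direction. The cleanest route is probably to directly write $\Pr[\text{exactly } j \text{ of the } kw \text{ variables survive}] = \binom{kw}{j} p^j (1-p)^{kw-j}$, sum over $j \ge t$, keep just the $j = t$ term, and then carefully estimate $\binom{kw}{t}(1-p)^{kw-t}$ from below using $(1-p)^{kw} \ge 1 - pkw > 1/2$ and $\binom{kw}{t} \ge (kw/t)^t$; the $2^{k/2}$ factor then has to be absorbed by showing $(kw/t)^t / 2^{t} \cdot (\text{const}) \ge 2^{k/2} (w/2)^t$, i.e. $(k/t)^t \cdot 2^{\,?} \ge 2^{k/2}$, which needs $t > k/2$ to avoid the $(k/t)^t$ term collapsing — this inequality-juggling is the delicate part and may require a slightly different choice of how many blocks versus how wide, or replacing $\pargate$ by a mild variant, but the block-parity construction is the right skeleton.
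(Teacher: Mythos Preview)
Your proposal is essentially identical to the paper's proof: the same construction $C=\pargate_{kw}$ realized as a $\g(k)\circ\AND_w$ circuit via the block/unambiguous-DNF trick, the same reduction to $\Pr[\text{at least }t\text{ of the }kw\text{ variables survive}]$, and the same lower bound obtained by keeping only the binomial term $\binom{kw}{t}p^t(1-p)^{kw-t}$ together with $\binom{kw}{t}\ge (kw/t)^t$ and $(1-p)^{kw}\ge 1-pkw>1/2$. The step you flag as delicate---extracting the factor $2^{k/2}$ from $(k/t)^t$ using $t>k/2$---is precisely where the paper's argument lands as well, and your caution there is warranted: the paper dispatches it in one line by asserting that $(k/t)^t$ is increasing in $t$ for $t>k/2$, which is actually stated with the wrong direction.
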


\begin{proof}


We will take $C = \pargate_{kw}$. To see that this is computable in $\g(k)\circ \AND_w$, write \[\pargate_{kw}(x) = \pargate_k(\pargate_w(x_1,\dots, x_w),\dots, \pargate_w(x_{(k-1)w+1},\dots, x_{kw})).\] Now, write out each bottom layer $\pargate_w$ as a size $w2^{w-1}$ CNF which takes the $\OR$s of the $2^{w-1}$ $\AND$ clauses corresponding to $w$-bit inputs with an odd number of ones. Notice that for any assignment, 
\begin{itemize}
\item at most \emph{one} of the $2^{w-1}$ clauses under each $\OR$ can simultaneously be satisfied,
\item which implies at most $k$ of the bottom layer clauses can be simultaneously satisfied.
\end{itemize}
By the first bullet point, we can turn the $\OR$ gates into $\pargate$ gates, turning $C$ into a $\pargate_{k2^{w-1}}\circ \AND_w$ circuit. By the second bullet point, we can replace the top $\pargate_{k2^{w-1}}$ gate with the gate $\g\in \g(k)$ which calculates parity if at most $k$ input bits are one, and outputs $0$ otherwise. Consequently, $C$ can be calculated by a $\g(k)\circ \AND_w$ circuit. \ 

We can now directly calculate the simplification probability. Notice that if $\ge t$ variables are alive in $\rho$, then $C|_\rho$ must have decision tree depth $\ge t$ (since even if $d-1$ input bits are known, parity remains ambiguous). Hence we calculate \begin{align*}\Pr_{\rho\sim R_p}[\dt(C|_\rho)\ge t] & \ge \sum_{i=t}^{kw}\binom{kw}{i}p^i(1-p)^{kw-i} \\ & \ge (1-p)^{kw}\left(\frac{kw}t\right)^t\left(\frac{p}{1-p}\right)^t \\ & \ge (1-pkw)2^{k/2}\left(\frac{pw}{1-p}\right)^t \\ & \ge 2^{k/2}(.5pw)^t \end{align*} where the third inequality follows from the fact $(k/t)^t$ is increasing in $t$ for $t > k/2$
 \end{proof}

\begin{remark}
    Notice that if $\pargate_{n}$ was written as a width $n$-CNF and the classical switching lemma was applied, we would get a ``failure to simplify'' probability bound of $\le (5pn)^t$. However, if we rewrite $\pargate_n$ as a $\g(t\log(n/t))\circ \AND_{n/t\log(n/t)}$ circuit in a similar manner as above, and use \Cref{thm:switch}, we get a bound of $\le (20pn/t\log(n/t))^t2^{t\log(n/t)}\le (20pn^2/t^2)^t$, which is asymptotically stronger when $t = \omega(\sqrt{n})$. This shows that we can potentially obtain tighter parameters by expressing functions as a more compact $\gt^0(k)$ circuit and applying \Cref{thm:switch} rather than using the classical switching lemma on a larger $\ac$ circuit computing the same function.
\end{remark}

 We will now show that the circuit lower bound established in \Cref{cor:lb} is tight. We thank an anonymous reviewer for pointing us to this construction.

\begin{thm}
\label{apx:tightlb}
    There exists a size-$2^{O((n/k)^{\frac{1}{d-1}})}$ $\gt_d^0(k)$ circuit which computes $\pargate_n$.
\end{thm}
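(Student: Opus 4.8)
The plan is to take the standard depth-$d$ $\ac$ construction for $\pargate$, apply it separately to each of $k$ disjoint blocks of $n/k$ input bits, and then fuse the $k$ block‑parities together using a \emph{single} $\g(k)$ gate in place of ``a parity gate sitting on top of the $k$ blocks' topmost $\OR$ layers''. The gain is that this fusion collapses two layers into one, so the whole circuit has depth $d$ even though we paid a depth-$d$ parity computation on each block; and since each block has only $n/k$ bits, the classical construction on a block has size $2^{O((n/k)^{1/(d-1)})}$, which becomes the final bound. This generalizes the $d=2$ construction of \Cref{apx:tight}.

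In more detail, recall the classical depth-$d$ circuit for $\pargate_M$ with $M=n/k$: write $\pargate_M=\pargate_b(p_1,\dots,p_b)$ with $b=M^{1/(d-1)}$ and each $p_i$ the parity of a sub-block of size $M/b$; expand $\pargate_b$ as the DNF $\bigvee_{|T|\text{ odd}}\big(\bigwedge_{i\in T}p_i\wedge\bigwedge_{i\notin T}\neg p_i\big)$; compute each $p_i$ and each $\neg p_i$ (a parity with one input negated) by a depth-$(d-1)$ circuit whose top gate is an $\AND$; and absorb that top $\AND$ into the clause's $\AND_b$. This yields a depth-$d$, size-$2^{O(b)}$ circuit whose \emph{top layer is an $\OR$ of $2^{b-1}$ clauses $\Phi_T$, each an $\AND$-topped depth-$(d-1)$ circuit of size $2^{O(b)}$, such that two distinct clauses $\Phi_T,\Phi_{T'}$ are never simultaneously satisfied} (a common satisfying input would force $(p_1,\dots,p_b)=\mathbbm{1}_T=\mathbbm{1}_{T'}$). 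Now partition the $n$ inputs into $k$ blocks of $M=n/k$ bits and build this circuit for the $j$-th block, with clauses $\Phi^{(j)}_T$. By the unambiguity, on any input at most one $\Phi^{(j)}_T$ per block is true, hence at most $k$ are true over all $j,T$; and the number of true clauses equals the number of odd-parity blocks, so
\[
\pargate_n \;=\; \pargate_k\!\Big(\textstyle\bigvee_T\Phi^{(1)}_T,\dots,\bigvee_T\Phi^{(k)}_T\Big) \;=\; \pargate\big(\{\Phi^{(j)}_T\}_{j,T}\big).
\]
Since the right-hand side only ever sees Hamming weight $\le k$, it is computed by the orlike $\g(k)$ gate $P_k$ defined by $P_k(y)=\pargate(y)$ when $\mathrm{wt}(y)<k$ and $P_k(y)=(k\bmod 2)$ when $\mathrm{wt}(y)\ge k$ (this also agrees with parity at weight exactly $k$ and is constant on weight $\ge k$, so $P_k\in\g(k)$). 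Thus $\pargate_n$ is computed by $P_k$ applied to the $k\cdot 2^{b-1}$ clauses $\Phi^{(j)}_T$: depth $1+(d-1)=d$, size $k\cdot 2^{O(b)}=2^{O((n/k)^{1/(d-1)})}$, and it lies in $\gt_d^0(k)$ because $P_k\in\g(k)$ and each clause is an ordinary $\ac_{d-1}$ circuit.

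The only part needing care is the depth/top-gate bookkeeping that makes the two-into-one layer collapse legitimate: one must take the classical construction in its DNF-at-the-top form, observe its unambiguity, choose the $p_i$-subcircuits $\AND$-topped so the clause $\AND_b$ absorbs them (keeping each clause at depth $d-1$ rather than $d$), and exhibit the explicit $\g(k)$ gate $P_k$ above. Each of these is routine, so I do not expect a genuine obstacle here beyond getting these layer counts exactly right.
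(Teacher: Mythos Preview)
Your proof is correct and takes essentially the same approach as the paper: split into $k$ blocks of $n/k$ bits, build a parity circuit for each block whose top is an unambiguous $\OR$ of $\AND$-topped depth-$(d-1)$ clauses, and replace all $k$ top $\OR$s by a single orlike $\g(k)$ gate exploiting that at most one clause per block (hence $\le k$ in total) can be satisfied. The paper presents this slightly differently---it first builds a depth-$d$ tree of $\pargate$ gates, invokes \Cref{lem:depthred} to collapse the top two layers into $\g(k)\circ\{\AND,\OR\}$, and then expands the remaining parity gates as alternating CNF/DNFs---but the core mechanism (unambiguity of the DNF clauses forcing the collapse under a single $\g(k)$ gate) is identical to yours.
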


\begin{proof}
    Split the input string into $k$ blocks of size $n/k$ bits each. We can compute the parity of each of the $n/k$-size blocks straightforwardly using a depth $d-1$ circuit made out of $\pargate_{(n/k)^{\frac{1}{d-1}}}$ gates; iteratively group the bits into blocks of $(n/k)^{\frac{1}{d-1}}$, and use a gate to take the parity of each block, thereby creating a depth $d-1$ tree of $\pargate_{(n/k)^{\frac{1}{d-1}}}$ gates. Finally, we can take a $\pargate_{k}$ of these $k$ depth-$(d-1)$ circuits to get a depth $d$ circuit which computes the parity of all $n$ bits. We now focus on converting this parity-riddled circuit to a $\gt^0(k)$ one. \ 

    Consider the top depth-2 subcircuit, which is a $\pargate_{k}\circ \pargate_{(n/k)^{\frac{1}{d-1}}}$ circuit. Notice that $\pargate_{k}\in \g(k)$ and $\pargate_{(n/k)^{\frac{1}{d-1}}}$ is trivially computable by a decision tree in $\dt_{(n/k)^{\frac{1}{d-1}}}$. Therefore by \Cref{lem:depthred}, this top subcircuit can be replaced by a size $2^{O((n/k)^{\frac{1}{d-1}})}$ $\g(k)\circ \OR_{(n/k)^{\frac{1}{d-1}}}$. Consequently, we have converted our original depth-$d$ circuit into a new one where the first 2 layers are made from gates in $\g(k)$. \ 

    We now use the fact that any $\pargate_{(n/k)^{\frac{1}{d-1}}}$ can be expressed as a size $2^{O((n/k)^{\frac{1}{d-1}})}$ CNF or DNF to convert the remaining $\pargate_{(n/k)^{\frac{1}{d-1}}}$ gates to $\AND/\OR$ gates while preserving the depth. Convert the third layer $\pargate_{(n/k)^{\frac{1}{d-1}}}$ gates to a DNF ($\OR$ of $\AND$s), and then collapse the 2nd and 3rd layer as they both consist solely of $\OR$ gates. The third layer now consist of $\AND$ gates, so replace the fourth layer $\pargate_{(n/k)^{\frac{1}{d-1}}}$ gates with a CNF ($\AND$ of $\OR$s) to again induce a collapse of the consecutive $\AND$ layers. Repeat this procedure down to the bottom of the circuit. \ 

    Clearly after this procedure, all gates are in $\g(k)$ (in fact, all but the top gate are $\AND$ or $\OR$). Notice that at each stage, we increased the circuit depth by 1 when plugging in the DNF/CNF, but then reduced the circuit depth when collapsing layers of the same gate type. Hence the final circuit is still of depth $d$. We already calculated the top depth-2 subcircuit is of size $2^{O((n/k)^{\frac{1}{d-1}})}$. Assuming we didn't collapse any gates, we would have replaced each of the $ o(n^{1-\eps})$ parity gates with a size-$2^{O((n/k)^{\frac{1}{d-1}})}$ circuit, so clearly the final circuit size is at most \[2^{O((n/k)^{\frac{1}{d-1}})} + o(n/k)\cdot 2^{O((n/k)^{\frac{1}{d-1}})}\le 2^{O((n/k)^{\frac{1}{d-1}})}.\] Therefore at the end of this procedure, we get our desired circuit.
\end{proof}

We now show that not only is the size lower bound tight, but the average case correlation bound established in \Cref{thm:paritycorr} as well. 

\begin{thm}
 Assume $m\ge k^d$. There is a $\gt^0(k)$ circuit $C$ of size $\le m$ such that \[\E_x[(-1)^{C(x) + \pargate(x)}]\ge 2^{\Omega(k)}\cdot 2^{-O(n/(k+\log m)^{d-1})}\] 
\end{thm}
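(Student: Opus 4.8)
The plan is to realize a near‑matching construction as a single ``biased parity'' $\g(k)$ gate sitting on top of exactly‑computed block parities, and to reduce the correlation to one clean Fourier coefficient. Let $g^*$ be the gate on $g$ inputs that outputs the $\pargate$ of its inputs when fewer than $k$ of them equal $1$, and outputs $0$ otherwise; on any input of weight $\ge k$ it is constant ($=0$), so it is orlike and $g^*\in\g(k)$. Since a circuit of size $<n$ cannot depend on all $n$ coordinates and hence has zero correlation with $\pargate_n$, we may assume $m\ge n$.

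\emph{Construction.} Partition $[n]$ into $g$ blocks $B_1,\dots,B_g$ of near‑equal size $N:=\lceil n/g\rceil$, with $g$ chosen below. On each $B_i$ place a depth‑$(d-1)$ $\gt^0(k)$ circuit $D_i$ that computes $\pargate(B_i)$ \emph{exactly}: an honest tree of fan‑in‑$\le k$ parity gates when $N\le k^{d-1}$ (size $O(N)$), and otherwise the construction of \Cref{apx:tightlb} run at depth $d-1$, of size $2^{O((N/k)^{1/(d-2)})}$. Output $C:=g^*(D_1,\dots,D_g)$, which has depth $d$ and size $g+\sum_i|D_i|$.

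\emph{Correlation.} Work over $\{\pm1\}$, set $P_i:=(-1)^{\pargate(B_i)}$, and let $G^*$ be $g^*$ in $\pm1$‑form. Because $D_i$ computes $\pargate(B_i)$ exactly, $(-1)^{C}=G^*(P_1,\dots,P_g)$, while $(-1)^{\pargate_n}=\prod_i P_i$ since the $B_i$ partition $[n]$. Expanding $G^*$ in its Fourier series on $\{\pm1\}^g$ and using that the $P_i$ depend on disjoint nonempty blocks (hence are independent with mean $0$), every Fourier term except $S=[g]$ contributes $0$:
\[
\E_x\big[(-1)^{C(x)+\pargate_n(x)}\big]=\sum_{S\subseteq[g]}\widehat{G^*}(S)\,\E_x\Big[\prod_{i\notin S}P_i(x)\Big]=\widehat{G^*}([g]).
\]
A direct count gives $\widehat{G^*}([g])=2^{1-g}\sum_{\text{odd }j<k}\binom{g}{j}$; this equals $1$ for $g\le k$, and for $g>2k$ it is $\ge 2^{1-g}\binom{g}{k-2}\ge 2^{-g+\Omega(k\log_2(g/k))}$.

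\emph{Choice of $g$ and the hard step.} If $n\le k^d$, take $g\le k$ with block sizes multiplying to $n$: then every $D_i$ is honest, $C=\pargate_n$ exactly, the correlation is $1$, and the size is $O(n)\le O(m)$. If $n>k^d$, set $g:=\Theta\!\big(n/(k+\log m)^{d-1}\big)$; then $g>k$ and the exponent $-g+\Omega(k\log_2(g/k))$ is $-O\!\big(n/(k+\log m)^{d-1}\big)+\Omega(k)$, as required (the routine boundary range $n=\Theta(k^d)$ is handled directly, where $g=\Theta(k)$ and $\widehat{G^*}([g])=\Omega(1)$ already beats the target). What remains is to verify $g+\sum_i|D_i|\le m$. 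When $\log m=O(k)$ one has $N=\Theta(k^{d-1})$, the $D_i$ are honest trees, and the total is $O(n)\le O(m)$. When $\log m\gg k$ one either computes $\pargate_n$ exactly via \Cref{apx:tightlb} (possible once $m\ge 2^{\Omega((n/k)^{1/(d-1)})}$, which already forces $n/(k+\log m)^{d-1}=O(k)$), or must pick $g$ so that $g\cdot 2^{O((N/k)^{1/(d-2)})}\le m$ while keeping $g$ small enough for the correlation bound. This balancing of the number of blocks against the (super‑polynomial‑in‑$N$) cost of computing a block parity exactly at depth $d-1$ is the main obstacle; everything else — the Fourier identity, the binomial estimate, and the degenerate ranges of $n$ — is routine, and the resulting bound matches the upper bound of \Cref{thm:paritycorr} up to the hidden constants.
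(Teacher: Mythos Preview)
Your construction and analysis are essentially the paper's: a biased–parity $\g(k)$ gate $g^*$ on top of exact block parities, with the correlation reducing to the top Fourier coefficient of $g^*$. Your Fourier computation $\widehat{G^*}([g])=2^{1-g}\sum_{\text{odd }j<k}\binom{g}{j}$ is in fact cleaner than the paper's argument (which informally ``crudely lower bounds the conditional correlation by $0$''; your identity shows the full correlation is nonnegative directly).

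The place you flag as ``the main obstacle'' is real, and the paper resolves it differently than you try to. When $k\ll\log m$ you attempt to compute each block parity at depth $d-1$ via \Cref{apx:tightlb}, which costs $2^{O((N/k)^{1/(d-2)})}$ per block; with $N=\Theta((\log m)^{d-1})$ this is $2^{(\log m)^{1+\Omega(1)}}\gg m$, and no choice of $g$ rescues the balance. The paper instead sets the block size to $(c_d\log m)^{d-1}$ and computes each block parity by an $\ac_d$ circuit (depth $d$, not $d-1$) of size $m^{O(c_d)}$, obtaining a depth-$(d+1)$ circuit $G\circ\ac_d$. It then collapses the top two layers back to depth $d$ exactly as in \Cref{apx:tightlb}: think of the whole thing as $G$ atop a depth-$(d-1)$ tree of $\pargate_{c_d\log m}$ gates, apply \Cref{lem:depthred} to the top two abstract layers $G\circ\pargate_{c_d\log m}$ (each $\pargate_{c_d\log m}\in\dt_{c_d\log m}$), and convert the remaining parity layers to alternating CNF/DNF. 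This yields a $\gt_d^0(k)$ circuit with $g=\Theta\!\big(n/(c_d\log m)^{d-1}\big)$ and size $m^{O(1)}$, and your Fourier estimate then gives exactly the stated correlation. (Both the paper and your write-up are somewhat loose about ``size $\le m$'' versus ``size $m^{O(1)}$'' in this regime; since only $\log m$ enters the exponent, this does not affect the conclusion.)
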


\begin{proof}
     Let $M=\max\{k,c_d\log m\}$, where $c_d$ is a constant such that the parity over $(c\log m)^{d-1}$ bits can be computed by an $\ac_d$ circuit of size $\le m$. Split the input into $\ceil{n/M^{d-1}}$ blocks of size $\le M^{d-1}$. If $M = c_d\log m$, each block can be calculated by an $\ac_{d}$ circuit of size $m$, and if $M=k$, each block can be calculated by a $\le k^d\le m$-size $\gt_{d-1}^0(k)$ circuit using a tree of $\pargate_k$s. Now if $\ceil{n/M^{d-1}}=1$, this circuit computes the parity of all $n$ bits and we are done, so assume $\ceil{n/M^{d-1}}\ge 2$. \ 
     
     Join all the $\ceil{n/M^{d-1}}$ subcircuits by the gate $G$ defined to compute parity if the Hamming weight of the input is at most $k$, and to equal 0 otherwise. Clearly $G\in\g(k)$. Therefore, if the subcircuits were constructed to be in $\ac_{d}$, we can collapse the top two layers into one using \Cref{lem:depthred} (similar to \Cref{apx:tightlb}), giving us a $\gt^0_d(k)$ circuit. If the subcircuits were $\gt_{d-1}^0(k)$, we trivially get a $\gt^0_d(k)$ gate after adding $G$. In the case more than $k$ of the $\ceil{n/M^{d-1}}$ input blocks have parity 1, our circuit will be constant, and thus will agree with parity about half the time. Let us crudely lower bound the correlation in this case to be $0$.
     When $\le k$ have parity 1, the top gate computes parity exactly. Therefore our correlation is simply the probability at most $k$ of the input blocks have parity $1$, which is simply \[2^{-\ceil{n/M^{d-1}}}\sum_{i\le k} \binom{\ceil{n/M^{d-1}}}k \ge 2^{-\ceil{n/M^{d-1}}}\cdot 2^{\Omega(k)}\ge 2^{\Omega(k)}\cdot 2^{-O(n/(k+\log m)^{d-1})}.\] This gives our correlation lower bound as desired.
\end{proof}

\subsection{Proof of the $\gt^0(k)$ Multi-Switching Lemma}

We prove the multi-switching lemma here.

\begin{thm}[Proof of \Cref{lem:multiswitch}]
\label{apx:multiswitch}
Let ${\cal F} = \{F_1,\dots, F_m\}$ be a list of 
$\g(k)\circ \AND_w$ circuits on $\{0,1\}^m$. Then\[ \Pr_{\rho\sim R_p}[{\cal F}|_\rho\text{ does not have $r$-partial depth-$t$ DT}]\le 4(64(2^km)^{1/r}pw)^t \]
\end{thm}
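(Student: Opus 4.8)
The plan is to lift the single-circuit switching lemma (\Cref{thm:switch}) to the multi-switching setting by following Lyu's strategy \cite{Lyu} verbatim, since — as emphasized in \Cref{subsub:comparison} — the canonical decision tree, witness, partial witness, and witness searcher constructions for $\g(k)\circ\{\AND,\OR\}_w$ circuits were deliberately set up so that all the structural facts Lyu uses carry over unchanged. First I would set up the \emph{global} canonical partial decision tree: process the circuits $F_1,\dots,F_m$ in order, and while some $F_i|_{\text{current path}}$ still has (single-circuit canonical) decision-tree depth $\ge t$, run the single-circuit CDT from \Cref{alg:one} on that $F_i$ for exactly $t$ steps' worth of queries, appending those queries to the global tree; stop once every $F_i$ has been simplified (i.e. its restriction has a depth-$<t$ decision tree) or the global depth reaches $r$. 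If $\calF|_\rho$ has no $r$-partial depth-$t$ $\dt$, then this global tree has depth $> r$ on some branch, hence produces a transcript recording the sequence of ``rounds,'' each round being a (clause-index, variables, answers) block of total query-length $\ge t$ attached to one of the $m$ circuits, together with the index $i$ of which circuit was processed in that round.

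Next I would define the global witness and partial witness exactly analogously to \Cref{defn:wit} and \Cref{defn:partwit}, now carrying the extra data of which circuit $i\in[m]$ each round belongs to, and at least $r/t$ rounds. The key counting facts are: (i) given $\rho$ and a global partial witness, the list of clause-indices $(\ell_i)$ within each round is uniquely recoverable (this is \Cref{claim:unique} applied round-by-round, since within a round we are just running the single-circuit CDT and the skipping behavior is forced by $\rho$ composed with the recorded answers); and (ii) the global witness searcher — which is Lyu's \Cref{alg:two} run round-by-round, using the recorded circuit-indices $i$ to know which $F_i$ to scan — recovers the full witness from $(z,P)$ with conditional probability exactly $2^{-s}$ where $s$ is the total recorded query-length (this is \Cref{lem:search} applied per round). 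Then the union bound over partial witnesses gives $\Pr_{\rho\sim R_p}[\calF|_\rho\text{ has no }r\text{-partial depth-}t\ \dt]\le \sum_P p^{\func{size}(P)}$, where now $P$ additionally encodes, for each of its $\ge r/t$ rounds, a number in $[m]$ (which circuit) — contributing a factor $m^{\#\text{rounds}}\le m^{s/t}$ — plus the $2^k$-type overhead per round from the at-most-$k$-zeros slack, contributing $(2^k)^{\#\text{rounds}}\le 2^{ks/t}$, while the remaining $(r,s_i),(B_i),(\alpha_i)$ choices contribute $(O(w))^s$ as in \Cref{thm:switch}.

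Putting it together, $\sum_P p^{\func{size}(P)}\le \sum_{s\ge t} (2^k m)^{s/t}(O(pw))^s$, and since $2^km\ge 2$ and each round has length $\ge t$ we get a geometric-type sum dominated by its first term; bounding $(2^km)^{s/t}\le (2^km)^{s/r}\cdot(2^km)^{s/t - s/r}$ is not quite what is wanted — rather, the right bookkeeping (which is exactly Lyu's) is that the number of rounds is between $r/t$ and, effectively, we pay $(2^km)^{1/r}$ per \emph{query} once we optimize, yielding the stated $4(64(2^km)^{1/r}pw)^t$. I would carry out this constant-chasing by mirroring Lyu's Section~6 computation step for step, replacing his ``$2^{2s}$ partial-witness count'' by our ``$2^{2s+k\cdot(\#\text{rounds})}\cdot m^{\#\text{rounds}}$'' count and tracking how the $2^k$ and $m$ merge into the single base $(2^km)^{1/r}$. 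The main obstacle — though it is more bookkeeping than conceptual — is verifying that the slack of ``up to $k$ zero-length sub-blocks per round'' interacts correctly with the multi-round accounting: one must check that the number of zero-valued $s_i$'s is bounded by $k$ \emph{per round} (since $ctr$ resets when a new round on a new circuit begins), so the total overhead is $2^{k\cdot(\#\text{rounds})}$ rather than something larger, and that this is precisely absorbed into the $(2^km)^{1/r}$ base because $\#\text{rounds}\le s/t$ and the final exponent is in terms of $s\ge t$ with the $1/r$ coming from $\#\text{rounds}\ge r/t \Rightarrow (2^km)^{\#\text{rounds}} \le (2^km)^{s/t}$ combined with $t\le r$. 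Once that is pinned down, the rest is the same geometric sum as in the single-circuit case.
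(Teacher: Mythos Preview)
Your overall approach is exactly the paper's: lift Lyu's framework verbatim using the $\g(k)$-adapted CDT, witnesses, and searcher from \Cref{thm:switch}, and observe that the only new bookkeeping is that the per-round partial-witness count picks up an extra $2^k$ (since $ctr$ resets each round), which then merges with the $m$ factor. That part is right and matches the paper.

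Where the proposal goes off the rails is the $r$/$t$ accounting in the global structure. In the paper's (and Lyu's) CPDT, the \emph{common} tree has depth $t$ --- the counter runs to $t$ --- and in each round one picks an $F_i$ with $\dt(F_i|_{\text{current}}) > r$; since such an $F_i$ admits an $r$-witness of size $\ge r$, each round contributes $S_i \ge r$ queries, and hence the number of rounds satisfies $R \le t/r$. The global witness accordingly has total size $S=\sum S_i \in [t,t+w]$ and $R\le t/r$, so the per-round overhead is bounded by $m^R\cdot 2^{kR}\le (2^k m)^{t/r}$, which combined with $(O(pw))^S$ and the geometric sum over $S$ gives the stated $4(64(2^k m)^{1/r}pw)^t$.

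You have this backwards: you run each round for ``exactly $t$ steps'' and stop the global tree at depth $r$, which would yield a witness of total size $\approx r$ and $\approx r/t$ rounds --- producing a bound of shape $((2^k m)^{1/t}pw)^r$, not the claimed one. Your attempted fix in the last paragraph, ``$\#\text{rounds}\ge r/t \Rightarrow (2^k m)^{\#\text{rounds}}\le (2^k m)^{s/t}$,'' is in the wrong direction: a \emph{lower} bound on the number of rounds cannot upper-bound $(2^k m)^{\#\text{rounds}}$, and the appeal to ``$t\le r$'' is neither assumed nor helpful. Once you swap the roles --- common depth $t$, per-round witness size $\ge r$, hence $R\le t/r$ --- all of your counting goes through cleanly and matches the paper line for line.
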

\begin{proof}
    We follow the proof in \cite{Lyu} exactly, where the only difference is that the ``Canonical Partial Decision Tree'' (CPDT) will use the modified CDT we created in \Cref{alg:one}, the definition of global witnesses (resp. global partial witnesses) will now use the witnesses (resp. partial witnesses) that we defined in \Cref{defn:wit} (resp. \Cref{defn:partwit}), and the ``Global Witness Searcher'' will run the modified witness searcher we created in \Cref{alg:two}.\

    Consider the following CPDT procedure.

\SetKwComment{Comment}{/* }{ */}
\SetKwBlock{init}{initialize:}{}
\SetKwInput{kwInput}{Input}
\SetKwInput{kwReturn}{return}
    \begin{algorithm}[hbt!]
    \caption{Canonical Partial Decision Tree}
    \label{algo:canonical-partial}
    \kwInput{ A list of $\g(k)\circ \{\AND,\OR\}_w$ circuits ${\cal F} = \{F_1,\dots, F_m\}$, black-box access to a string $\beta \in \{0,1\}^n$, and an auxiliary string $z\in \{0,1\}^n$.}
    \init{
        $x \gets (\star)^n$.\\
        $j \gets 1$. \\
        $\mathrm{counter} \gets 0$. \\
    }
    \While{$\mathrm{counter} < t$} {
        Find the smallest $i \ge j$ such that $\dt(F_i|_{x}) > w$. If no such $i$ exists, exit the loop. \\
        $y\gets (\star)^n$. \\
        $I\gets \emptyset$. \\
        \While{$F_i|_{x\circ y}(\star)$ is not constant and $\mathrm{counter} < t$} {
            $C_{i, q}\gets$ the term that $T_{F_i|_{x\circ y}}$ from \Cref{alg:one} will query. \\
            $B_{i, q}\gets $ the set of unknown variables in $C_{i, q}|_{x\circ y}$. \\
            $y_{B_{i, q}} \gets z_{B_{i, q}}$. \\
            $I\gets I \cup B_{i, q}$. \\
            $\mathrm{counter} \gets \mathrm{counter} + |B_{i, q}|$. \\
        }
        Query $\beta_{I}$, and set $x_I\gets \beta_I$. \\
        $j\gets i$. 
    }
    \textbf{return} {$x$} 
    \end{algorithm}

With this, we can define the following notion of a ``global witness'' to intuitively be a transcript on adversarially chosen inputs.

\begin{defn}
Let $t,w$ be two integers. Consider a list of $\g(k)\circ \{\AND,\OR\}_w$ circuits $\calF = \{F_1, \dots, F_m \}$. Suppose $\rho \in \{0,1,\star\}^n$ is a restriction. Let $(R, L_i, S_i, W_i, \beta_i)$ be a tuple, where
\begin{itemize}
    \item $1\le R \le \frac{t}{r}$ is an integer;
    \item $1\le L_1 \le L_2 \le \dots \le L_R\le m$ is a list of $R$ non-decreasing indices;
    \item $S_1.\dots, S_R$ is a list of $R$ integers such that $\sum_{i=1}^R S_i \in [t, t+w]$;
    \item $W_1,\dots, W_R$ is a list of witnesses (as per \Cref{defn:wit}). For every $i\in [R]$, $W_i$ has size $S_i$;
    \item $\beta_1,\dots, \beta_R$ are $R$ strings where $|\beta_i| = S_i$ for every $i\in [R]$.
\end{itemize}
We call the tuple a $(r,t)$-global witness for $\rho$, if it satisfies the following.
\begin{enumerate}
    \item Set $\rho_1 = \rho$. $W_1$ is a $S_1$-witness for $F_{L_1}|_{\rho_1}$.
    \item For every $i\ge 2$, let $I_{i-1}\subseteq [n]$ be the set of variables involved in $W_{i-1}$. Note that $|I_{i-1}| = S_{i-1}$ since the size of $W_{i-1}$ is $S_{i-1}$. Identify $\beta_{i-1}$ as a partial assignment in $\{0,1,\star\}^n$ where only the part $\beta_{i-1, I_{i-1}}$ is set and other coordinates are filled in with $\star$. Construct $\rho_i = \rho_{i-1}\circ \beta_{i-1}$. Then $W_i$ is a $S_i$-witness for $F_{L_i}|_{\rho_{i}}$.
\end{enumerate}
The size of the global witness is defined as $\sum_{i=1}^R S_i$.
\end{defn}

\begin{lem}
\label{lem:globw}
Consider a list of $\g(k)\circ \{\AND,\OR\}_w$ circuits $\calF = \{F_1, \dots, F_m \}$. Suppose $\rho \in \{0,1,\star\}^n$ is a restriction such that $\calF|_{\rho}$ does not have $w$-partial depth-$t$ decision tree. Then there exists an $(r,t)$-global witness for $\rho$.
\end{lem}

\begin{proof}
    Same as the proof of Corollary 1 in \cite{Lyu}.
\end{proof}

We now define \emph{partial global witnesses}, as there are far too many global witnesses to union bound over.

\begin{defn}
Let $t,w$ be two integers. Consider a list of $\g(k)\circ\{\AND,\OR\}_w$ circuits $\calF = \{F_1, \dots, F_m \}$. Suppose $\rho \in \{0,1,\star\}^n$ is a restriction. Let $(R, L_i, S_i, P_i, \beta_i)$ be a tuple, where
\begin{itemize}
    \item $1\le R \le \frac{t}{r}$ is an integer;
    \item $1\le L_1 \le L_2 \le \dots \le L_R\le m$ is a list of $R$ non-decreasing indices;
    \item $S_1.\dots, S_R$ is a list of $R$ integers such that $\sum_{i=1}^R S_i \in [t, t+w]$;
    \item $P_1, \dots, P_R$ is a list of partial witnesses. For every $i\in [R]$, $P_i$ has size $S_i$.
    \item $\beta_1,\dots, \beta_R$ are $R$ strings where $|\beta_i| = S_i$ for every $i\in [R]$.
\end{itemize}
We call $(R, L_i, S_i, P_i, \beta_i)$ a $(r, t)$-global partial witness for $\rho$, if we can complete $P_i$ to get a witness $W_i$ for every $i\in [R]$, such that $(R, L_i, S_i, W_i, \beta_i)$ is a global witness for $\rho$.
\end{defn}

By a simple induction, one can show the following claim.

\begin{claim}
\label{claim:uniquecomp}
    Given a global partial witness for $\rho$, there is exactly one way to complete it and get a global witness for $\rho$.
\end{claim}

We now construct a global witness searcher that will reconstruct a global witness from a partial one using advice.

    \begin{algorithm}[hbt!]
    \caption{Global Witness Searcher}
    \label{algo:global-searcher}
    \DontPrintSemicolon
    \kwInput{A list of DNFs ${\cal F} = \{F_1,\dots, F_m\}$, a global partial witness $(R,L_i,S_i,P_i,\beta_i)$, and an advice $z\in \{0,1\}^n$.}

    \init{
        $c \gets 1$. \;
        $\rho^{(1)} \gets \rho$. \;
    }
    \While{$c\le R$} {
        Run \Cref{alg:two} on $(F_{L_c}, \rho^{(c)}, P_c, y)$. If it reports ERROR, report ERROR and terminate the procedure. Otherwise let $W_c$ be the witness returned. \;
        $I_c \gets $ the set of variables involved in $W_c$. \;
        Identify $\beta_c$ as a partial assignment, where only $\beta_{I_c}$ is fixed. \;
        $\rho^{(c+1)}\gets \rho^{(c)} \circ \beta_c$. \;
        $c\gets c + 1$. \;
    }
    \textbf{return} $x$ \;
\end{algorithm}

We note the following important lemma about the searcher, which we denote $\calS$.

\begin{lem}\label{lemma:find-global-witness}
Let $P$ be a size $S$ global partial witness. Say $\rho$ is \emph{good} if $P$ is a global partial witness for $\rho$, then \[\Pr_z[\calS(z,P)\text{ is a global witness for }\rho(\Lambda, z) |\rho\text{ is good}]=2^{-S}\]  
\end{lem}
\begin{proof}
  Same as proof of Lemma 8 in \cite{Lyu}, but we instead appeal to \Cref{lem:search} whenever Lyu's proof refers to Lemma 6.   
\end{proof}

By this lemma, we have \begin{align}\Pr_{\Lambda, z}[\rho(\Lambda, z)\text{ is good for }P] & = \E_\Lambda\frac{\Pr_z[\calS(z,P)\text{ is a global witness for }\rho(\Lambda, z)]}{\Pr_z[\calS(z,P)\text{ is a global witness for }\rho(\Lambda, z) |\rho\text{ is good}]} \nonumber\\ & = 2^S\E_\Lambda\Pr_z[\calS(z,P)\text{ is a global witness for }\rho(\Lambda, z)] \nonumber \\ & \le 2^S\E_z\Pr_\Lambda[\calS(z,P)\text{ is a global witness for }\rho(\Lambda, z)] \nonumber\\ 
 & \le (2p)^s \label{eqn:ub} \end{align}
 where the last inequality follows from the fact that $\rho$ needs to keep the $S$ variables specified by the global witness alive in order to have any hope of being witnessed by it. By the $(t+w)$-wise $p$-boundedness of $\Lambda$, this happens with probability $\le p^S$.\

 Finally, if we let $N_S$ be the number of global witnesses of size $S$, we can see by using \Cref{lem:globw} and \Cref{claim:uniquecomp}, along with (\ref{eqn:ub}) that
 \begin{align}\Pr_\rho[\calF|_\rho\text{ has no $r$-partial depth-$t$ $\dt$} ] & \le \sum_P \Pr_\rho[P\text{ is a global partial witness for }\rho] \nonumber\\ & \le \sum_{S=t}^{t+w}N_S(2p)^S \label{eqn:final}\end{align}

 We can upper bound $N_S$ as follows.
 \begin{itemize}
     \item There are $\le \frac{t}r\cdot \binom{m}{t/r}\le 2m^{t/r}$ ways to pick $(R,L_i)$
     \item There are $\le 2^{S}$ choices for $(S_i)$ (since there are $\le 2^{n-1}$ ways to write $n$ as an ordered partition)
     \item From \Cref{thm:switch}, we know that there are $(8w)^S2^k$ partial witnesses of size $S$, giving a total amount of $\le \prod_i (8w)^{S_i}2^k = (8w)^S2^{kR}\le (8w)^S2^{kt/r}$
     \item There are clearly $2^{\sum S_i} = 2^S$ possibilities for $(\beta_i)$.
 \end{itemize}

 Combining all this tells us that $N_S\le 2m^{t/r}(32w)^S2^{kt/r}$. Hence, from (\ref{eqn:final}), we deduce
\begin{align*}\Pr_\rho[\calF|_\rho\text{ has no $r$-partial depth-$t$ $\dt$} ] & \le \sum_{S=t}^{t+w}N_S(2p)^S \le  m^{t/r}2^{kt/r}\sum_{S=t}^{t+w}(32pw)^S\le 4(64(2^km)^{1/r}pw)^t\label{eqn:final}\end{align*}

\end{proof}

\begin{thm}[Proof of \Cref{thm:demulti}]
\label{apx:demulti}
Let ${\cal F} = \{F_1,\dots, F_m\}$ be a list of size $m$ $\g(k)\circ \{\AND,\OR\}_w$ circuits.
     Let $(\Lambda, z)$ be a joint random variable such that 
     \begin{itemize}
         \item $\Lambda$ is a $(t+w)$-wise $p$-bounded subset of $[n]$
         \item Conditioned on any instance of $\Lambda$, $z$ $\eps$-fools CNF of size $\le m^2$.
     \end{itemize}
     Then  
     \[\Pr_{\Lambda, z}[{\cal F}|_{\rho(\Lambda, x)}\text{ has no $r$-partial depth-$t$ }\dt] \le 4(m2^k)^{t/r}(64pw)^t + (64wm)^{t+w}(2m)^{2kt/r}\cdot \eps\]
\end{thm}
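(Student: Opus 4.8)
The plan is to re-run the proof of \Cref{apx:multiswitch} essentially verbatim, but with the pseudorandom $z$ in place of a uniform ground assignment, and to pay for the gap in the second ($\eps$-dependent) term. Fix any instance of $\Lambda$; by hypothesis it is $(t+w)$-wise $p$-bounded, which is the only property of $\Lambda$ that the proof of \Cref{apx:multiswitch} uses. By \Cref{lem:globw}, if $\calF|_{\rho(\Lambda,z)}$ has no $r$-partial depth-$t$ $\dt$ then there is an $(r,t)$-global partial witness for $\rho(\Lambda,z)$, so a union bound over the set $\mathcal{P}$ of all $(r,t)$-global partial witnesses gives, for \emph{any} distribution of $z$ (conditioned on $\Lambda$),
\[
\Pr_{z}\big[\calF|_{\rho(\Lambda,z)}\text{ has no }r\text{-partial depth-}t\ \dt\ \big|\ \Lambda\big]\ \le\ \sum_{P\in\mathcal P}\Pr_{z}\big[P\text{ is a global partial witness for }\rho(\Lambda,z)\ \big|\ \Lambda\big].
\]

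First I would dispose of the main term. Taking $z$ uniform and then averaging over $\Lambda$, the quantity $\sum_{P}\Pr_{\Lambda,z}[P\text{ is a global partial witness for }\rho(\Lambda,z)]$ is exactly what the proof of \Cref{apx:multiswitch} bounds by $\sum_{S=t}^{t+w}N_S(2p)^S\le 4(64(2^km)^{1/r}pw)^t=4(m2^k)^{t/r}(64pw)^t$. The same estimate $N_S\le 2m^{t/r}(32w)^S2^{kt/r}$ also lets me bound the total number of global partial witnesses crudely by $|\mathcal P|=\sum_{S=t}^{t+w}N_S\le (64w)^{t+w}m^{t/r}2^{kt/r}\le (64wm)^{t+w}(2m)^{2kt/r}$, which is precisely the coefficient of $\eps$ in the target bound.

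The heart of the argument is the per-witness comparison between pseudorandom and uniform $z$, and here I would import the derandomization argument of \cite{Lyu} (together with the CNF-fooling technique of \cite{kel21}). The claim is that for every fixed instance of $\Lambda$ and every fixed $P\in\mathcal P$, the event (as a function of $z$) ``$P$ is a global partial witness for $\rho(\Lambda,z)$'' is computed by a CNF $\Phi_{\Lambda,P}$ on $\{0,1\}^n$ of size at most $m^2$; then the hypothesis on $z$ immediately gives $\Pr_z[\Phi_{\Lambda,P}(z)=1\mid\Lambda]\le \Pr_{z\sim U_n}[\Phi_{\Lambda,P}(z)=1\mid\Lambda]+\eps$. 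To see the claim: given $\rho=\rho(\Lambda,z)$, \Cref{claim:uniquecomp} says the completing list of clause indices is forced, and the statement ``the forced completion together with $P$ is a valid global witness for $\rho$'' unwinds into a conjunction of local checks, one group per stop recorded by $P$ (there are $\sum_i r_i\le (t+w)+(t/r)k$ stops in total): that the reached $\AND_w$-clause is not falsified by the current restriction (a width-$\le w$ conjunction of $z$-literals, or a constant), that every clause scanned past is falsified (at most $m$ width-$\le w$ disjunctions of $z$-literals per block), and --- for the $\le k$ zero-size stops, which are the only new ingredient relative to the CNF/DNF witnesses of \cite{Lyu} --- that the reached clause is already fixed to $1$ by $\rho$ (again width $\le w$). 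The remaining witness data ($B_i$ and the unknown-counts $s_i$) only constrains which coordinates lie in $\Lambda$, hence become constants once the reached clauses are pinned down, and the sequential ``scan for the first un-falsified clause'' collapses into this conjunction without an existential over clause indices --- exactly the observation of \cite{Lyu}, undisturbed by the $\g(k)$ top gate. A direct count then shows $\Phi_{\Lambda,P}$ has size $\le m^2$.

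Granting the claim, the proof finishes by summing: for any fixed $\Lambda$, $\Pr_{z}[\calF|_{\rho(\Lambda,z)}\text{ has no }r\text{-partial depth-}t\ \dt\mid\Lambda]\le \sum_{P}(\Pr_{z\sim U_n}[P\text{ g.p.w.}\mid\Lambda]+\eps)$, and averaging over $\Lambda$ and applying the two bounds of the second paragraph gives $\Pr_{\Lambda,z}[\,\cdots\,]\le 4(m2^k)^{t/r}(64pw)^t+|\mathcal P|\cdot\eps\le 4(m2^k)^{t/r}(64pw)^t+(64wm)^{t+w}(2m)^{2kt/r}\eps$, as desired. The main obstacle is the CNF-recognizability claim of the previous paragraph: one must check carefully that the ``first un-falsified clause'' bookkeeping really does collapse to a depth-$2$ formula of size at most $m^2$, and that the extra zero-size (counter) stops peculiar to $\g(k)$ raise only its width and clause count by an $O(k)$ factor rather than its depth. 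Everything else is bookkeeping identical to \Cref{apx:multiswitch}.
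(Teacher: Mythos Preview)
Your overall shape is right, but the CNF-recognizability claim is asserted for the wrong object, and this is a genuine gap. You claim that for fixed $\Lambda$ and a fixed global \emph{partial} witness $P$, the predicate $z\mapsto\mathbbm{1}\{P\text{ is a global partial witness for }\rho(\Lambda,z)\}$ is computed by a size-$\le m^2$ CNF in $z$. But the ``forced completion'' $(\ell_i)$ supplied by \Cref{claim:uniquecomp} is forced only \emph{as a function of $z$}: different ground assignments $z$ force different clause-index lists. Unwinding, the partial-witness event is $\bigvee_{(\ell_i)}\mathbbm{1}\{((\ell_i),P)\text{ is a global witness for }\rho(\Lambda,z)\}$, a disjunction over up to $\prod_i m^{S_i+k}$ completions of size-$\le m^2$ CNFs --- a depth-$3$ formula, not a CNF, and the hypothesis only promises that $z$ fools CNFs. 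Concretely, the ``local checks'' you list (``the reached $\AND_w$-clause is not falsified'', ``every clause scanned past is falsified'') refer to clauses $C_{\ell_i}$ whose indices $\ell_i$ are themselves $z$-dependent, so you cannot write them as a fixed conjunction of $z$-clauses. Your attribution of this collapse to \cite{Lyu} is also off: the observation there (and reproduced in this paper) is that the \emph{full}-witness predicate, with the $(\ell_i)$ fixed in advance as part of the data, is a small CNF.

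The paper's route is precisely the step you elided: invoke \Cref{claim:uniquecomp} to rewrite the partial-witness indicator as a \emph{sum} over completions of full-witness indicators $h_\Lambda^{(R,L_i,S_i,W_i,\beta_i)}(z)$, verify that each $h$ is a size-$\le m^2$ CNF (now the $(\ell_i)$ are constants, so ``$C_j$ falsified for $j<\ell_1$'', ``$C_{\ell_1}$ satisfied'', etc.\ really are $z$-clauses), apply the $\eps$-fooling hypothesis to each $h$ separately, and then sum. The $\eps$-error is therefore paid once per \emph{full} global witness, picking up the extra factor $\prod_i m^{S_i+k}\le m^{S+kt/r}$ that counts the completions; this is exactly why the $\eps$-coefficient in the statement carries $(64wm)^{t+w}(2m)^{2kt/r}$ rather than the much smaller $\sum_S N_S\le (64w)^{t+w}m^{t/r}2^{kt/r}$ you would get from $|\mathcal P|$ alone. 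Your final numbers match the paper's bound only because you padded $|\mathcal P|$ up to that same expression; the argument reaching it has to go through the full-witness decomposition.
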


\begin{proof}
    All steps of the proof of \Cref{lem:multiswitch} follow identically until we reach the step \begin{align*}\Pr_{\Lambda, z}[F|_{\rho(\Lambda, x)}&\text{ has no $w$-partial depth-$t$ }\dt] \\ & = \Pr\sum_{(R,L,S_i,P_i,\beta_i)}\Pr_{\Lambda, z}[(R,L,S_i,P_i,\beta_i)\text{ is a global partial witness for }\rho(\Lambda, z)]\end{align*} 
    From \Cref{claim:uniquecomp}, we can deduce the event decomposes
    \begin{align*}{\mathbbm 1}\{(R, L_i
, S_i
, P_i
, \beta_i)&\text{ is a global partial witness for }\rho(\Lambda, z)\}
 \\ & =\sum_{
W_i:\text{completion of }P_i}
{\mathbbm 1}\{(R, L_i
, S_i
, W_i
, \beta_i)\text{ is a global witness for }\rho(\Lambda, z)\}.\end{align*}

For a fixed $\Lambda$ and $(R, L_i, S_i, W_i, \beta_i)$, we can let \[h_\Lambda^{(R, L_i, S_i, W_i, \beta_i)}(z) = {\mathbbm 1}\{(R, L_i
, S_i
, W_i
, \beta_i)\text{ is a global witness for }\rho(\Lambda, z)\}.\]

We will now show that $h$ is a predicate computable by a small size CNF, so that $z$ will fool it. $h$ is true iff $W_i$ is a witness for $F_{L_i}|_{\rho(\Lambda, z)\circ\beta_1\dots \beta_{i-1}}$ for all $1\le i\le R$. Now for each $i$, one can verify $W_i = (r',\ell_i, s_i, B_i, \alpha_i)$ is a witness by a size $m$ CNF as follows.

\begin{itemize}
    \item For all $j<\ell_1$, $C_j$ is falsified by $\rho(\Lambda, z)$. This is true iff $(\neg C_1)\wedge (\neg C_2)\wedge \dots\wedge (\neg C_{\ell_1})$. Notice $\neg C_i$ becomes an $\OR$ clause by De Morgan's Law
    \item $C_{j_1}$ is satisfied, which is an $\AND$ of variables.
    \item $C_j$ is falsified by $\rho(\Lambda, z)\circ \alpha_1$ for $\ell_1 < j < \ell_2$, which is true iff $(\neg C_{\ell_1+1})\wedge \dots\wedge (\neg C_{\ell_2-1})$. Each $\neg C_i$ is an $\OR$ clause
    \item and so on and so forth until we verify $C_{j_{r'}}$.
\end{itemize} 

Each bullet points gives a disjunction of (maybe trivial) conjunctions, so for all bullet points to hold, we simply take the $\AND$ of all of them, resulting in a CNF whose size is bounded by $m$ (since our CNF is essentially $F_{L_i}$ but with some gates and negations changed. Hence, if we want to verify $W_i$ simultaneously over all $i$, we take the $\AND$ of all $R\le m$ CNFs to get a size $m^2$ CNF. Hence, $z$ $\eps$-fools $h$. From \Cref{thm:switch}, we know over a uniform string $x$, \[\sum_{
(R, L_i, S_i, W_i, \beta_i)}
\E_{\Lambda} \E_x h_\Lambda^{(R, L_i, S_i, W_i, \beta_i)}(x)\le 4(m2^k)^{t/r}(64pw)^t\]

Therefore, over $z$, we have 

\begin{align*}\Pr_{\Lambda, z}[F|_{\rho(\Lambda, x)}\text{ has no $w$-partial} \text{ depth-$t$ }\dt] & =  \sum_{
(R, L_i, S_i, W_i, \beta_i)}
\E_{\Lambda} \E_z h_\Lambda^{(R, L_i, S_i, W_i, \beta_i)}(z) \\ & \le \sum_{
(R, L_i, S_i, W_i, \beta_i)}
\E_{\Lambda} (\eps + \E_x h_\Lambda^{(R, L_i, S_i, W_i, \beta_i)}(x)) \\ & \le 4(m2^k)^{t/r}(64pw)^t + \sum_{(R, L_i, S_i, W_i, \beta_i)}\eps.  \end{align*}
The number of tuples $(R, L_i, S_i, W_i, \beta_i)$ can be bounded as follows. 
\begin{itemize}
\item From the proof of \Cref{apx:multiswitch}, we know there are $2m^{t/r}2^{kt/r}(32w)^S$ many global partial witnesses of size $S$.
    \item We now multiply by the number of $(\ell_i)$ possible for each partial $P_i$ of size $S_i$, which is at most $\binom{m}{t+k} < m^{S_i+k}$. Hence, the total number of $(W_i)$ over all $1\le i\le R$ is \[\prod_{i}m^{S_i+k} = m^{S+kR}\le m^{S+ct/r}\]
\end{itemize}
Hence the total number of size-$S$ tuples is upper bounded by \[2m^{t/r}2^{kt/r}(32w)^Sm^{S+kt/r}\le (32mw)^{S}(2m)^{2kt/r}.\] Summing over all $t\le S\le t+w$ gives us a grand total of \[\sum_{S=t}^{t+w}(32mw)^{S}(2m)^{2kt/r}\le (64mw)^{S}(2m)^{2kt/r}.\] Therefore, 

\[\Pr_{\Lambda, z}[F|_{\rho(\Lambda, x)}\text{ has no $r$-partial} \text{ depth-$t$ }\dt]\le 4(m2^k)^{t/r}(64pw)^t + (64mw)^{t+w}(2m)^{2kt/r}\cdot \eps. \]
\end{proof}

\end{document}